\documentclass[conference,compsoc]{IEEEtran}

\ifCLASSOPTIONcompsoc
\usepackage[nocompress]{cite}
\else
\usepackage{cite}
\fi

\ifCLASSOPTIONcompsoc
\usepackage[caption = false, font = footnotesize, labelfont = sf, textfont = sf]{subfig}
\else
\usepackage[caption = false, font = footnotesize]{subfig}
\fi

\usepackage{algorithm}
\usepackage[noend]{algpseudocode}
\usepackage{amsfonts}
\usepackage{amsmath}
\usepackage{amsthm}
\usepackage{array}
\usepackage{booktabs,makecell,tabularx} 
\usepackage{enumitem}
\usepackage{fontawesome5}
\usepackage[misc]{ifsym} 
\usepackage{multirow}
\usepackage{stmaryrd}
\usepackage{tablefootnote}
\usepackage[most]{tcolorbox}
\usepackage{xspace}

\usepackage{tikz}
\usetikzlibrary{arrows}

\usepackage{graphicx}
\graphicspath{{figure/}}

\usepackage [
]{cryptocode} 
\usepackage{hyperref}

\usepackage[available, functional, reproduced]{ieeebadges}

\def\eg{\textit{e.g.}\xspace}
\def\ie{\textit{i.e.}\xspace}
\def\etc{\textit{etc.}\xspace}
\def\etal{\textit{et~al.}\xspace}
\def\cf{\textit{cf.}\xspace}

\def\odd{\textit{odd}}
\def\even{\textit{even}}

\newcommand{\sys}{\ensuremath{\mathsf{AnonGBDT}}\xspace}
\newcommand{\sysfinal}{\ensuremath{\sys^{\mathsf{OTSA}}}\xspace}

\newcommand{\sk}{\mathsf{sk}}
\newcommand{\pk}{\mathsf{pk}}
\newcommand{\sample}{\leftarrow_{\$}}
\newcommand{\concat}{\|\!}

\newtheorem{theorem}{Theorem}[section]

\theoremstyle{definition}
\newtheorem{definition}{Definition}[section]
\theoremstyle{remark}

\algrenewcommand\algorithmicrequire{\textbf{Input:}}
\algrenewcommand\algorithmicensure{\textbf{Output:}}
\algnewcommand\algorithmicsenderinput{\textbf{Sender's Input:}}
\algnewcommand\SenderInput{\item[\algorithmicsenderinput]}
\algnewcommand\algorithmicreceiverinput{\textbf{Receiver's Input:}}
\algnewcommand\ReceiverInput{\item[\algorithmicreceiverinput]}
\algnewcommand\algorithmicpainput{\textbf{$P_0$'s Input:}}
\algnewcommand\PaInput{\item[\algorithmicpainput]}
\algnewcommand\algorithmicpbinput{\textbf{$P_1$'s Input:}}
\algnewcommand\PbInput{\item[\algorithmicpbinput]}
\algnewcommand\algorithmicpcinput{\textbf{$P_{c}$'s Input:}}
\algnewcommand\PcInput{\item[\algorithmicpcinput]}
\algnewcommand\algorithmicpdinput{\textbf{$P_{1 - c}$'s Input:}}
\algnewcommand\PdInput{\item[\algorithmicpdinput]}
\algnewcommand\algorithmicpp{\textbf{Public parameters:}}
\algnewcommand\PP{\item[\algorithmicpp]}

\AtBeginDocument{%
}

\begin{document}

\title{Practical Anonymous Two-Party Gradient Boosting Decision Tree}

%
\author{\IEEEauthorblockN{Chenyu Huang\IEEEauthorrefmark{1}\IEEEauthorrefmark{5}
Fan Zhang\IEEEauthorrefmark{1}\IEEEauthorrefmark{5}\textsuperscript{\Letter},
Minxin Du\IEEEauthorrefmark{2}\textsuperscript{\Letter}, 
Sherman S. M. Chow\IEEEauthorrefmark{3},}
Huangxun Chen\IEEEauthorrefmark{4},
Huaming Rao\IEEEauthorrefmark{1},
Danqing Huang\IEEEauthorrefmark{1},
Bo Qian\IEEEauthorrefmark{1} and
Peng Chen\IEEEauthorrefmark{1}
\IEEEauthorblockA{\{chenyuhuang, zxfanzhang\}@tencent.com, minxin.du@polyu.edu.hk}
\IEEEauthorblockA{\IEEEauthorrefmark{1}Tencent, \IEEEauthorrefmark{2}Hong Kong Polytechnic University, \IEEEauthorrefmark{3}Chinese University of Hong Kong, \IEEEauthorrefmark{4}HKUST-GZ
}
}


\maketitle

\begingroup
\def\@makefntext#1{\noindent #1}%
\makeatletter
\renewcommand{\thefootnote}{} 
\renewcommand{\@makefnmark}{} 
\renewcommand{\@makefntext}[1]{\noindent #1} 
\makeatother
\footnotetext{%
\IEEEauthorrefmark{5}Co-first Authors, \textsuperscript{\Letter}\, Co-corresponding
\\
Minxin Du is supported in part by the HK PolyU Start-up Fund (P0052722).
His work is partly done at CUHK, supported by Direct Grant (4055238).
\\
Sherman Chow is supported in part by the General Research Funds (CUHK 14210825, 14210621) from the Research Grant Council, the Hong Kong Special Administrative Region of the People's Republic of China;
and by Direct Grant (4055238) and the Strategic Impact Enhancement Fund
(3135517) from CUHK.
Chow contributes to scholarly work, but administrative, business, or compliance matters fall outside his purview.
}
\makeatother
\endgroup

\begin{abstract}
Structured data is well handled by gradient-boosted decision trees (GBDT), which are usually trained on vertically partitioned features across mutually distrustful parties.
High~speed and interpretability make GBDTs popular in finance and healthcare, where neural networks may fall~short.
Enabling secure computation for GBDTs poses unique challenges, requiring secure record alignment for comparison.
Relying on private set intersection (PSI) is a \mbox{\emph{de facto} approach}.
Mistaking PSI for a safety measure actually exposes which record identifiers (IDs) are shared between the datasets.
\mbox{Although} circuit-PSI could help, 
it is costly for generic uses.
New~ideas are needed to efficiently train in a ``dark forest.''

Aiming to hide the IDs, we initiate the study of anonymous GBDT training on split data held by two parties.
Dual circuit-PSI in our design lets the parties alternate as receiver to run pick-then-sum over local features.
Via oblivious programmable pseudorandom functions, we propagate circuit-PSI outputs as shared state across runs.
Avoiding universal alignment, we resolve the neglected dilemma that ID hiding incurs a cost that scales with domain size.
Next, we halve the cost of ciphertext packing used to convert single-instruction multiple-data homomorphic encryption from (ring) learning with errors in prior secure GBDT (Usenix Security~'23) and related secure machine-learning computations.
Comparative experiments show our protocol remains competitive with leaky approaches in efficiency.
Enabling ID-hiding aggregation, our techniques can extend to other vertically partitioned analytics.
\end{abstract}




\section{Introduction}
\label{sect:intro}
Structured data is well served by gradient-boosted decision trees (GBDT), an interpretable, state-of-the-art learner for domains like finance~\cite{pvldb/CaoYCZLQ19}, healthcare~\cite{nejm/hunter2023medical}, advertising~\cite{www/LingDGZLS17}, and beyond, even amid the rise of deep neural networks~\cite{sp/NgC23}.
However, privacy regulations preclude plaintext centralization across distrustful silos, motivating private \emph{vertical} training on disjoint feature sets over overlapping populations.
Even with secure multiparty computation, different datasets seldom align record-by-record, for example, a bank seeking payment-provider signals often finds that rosters that only partially overlap and differ in order
(specifically, row $i$ on one side need not be the same as row $i$ on the other).
Reconciling them requires privately computing the identifier (\eg, national ID) intersection and agreeing on a common order, and the mapping must be refreshed as rosters churn.
Most secure two-party protocols~\cite{uss/LuHZWH23,tifs/ChenLWHXZ23,cikm/FangZT0YWWZZ21,pvldb/WuCXCO20} address this by running private set intersection (PSI)~\cite{eurocrypt/FreedmanNP04,ccs/KolesnikovKRT16} for \emph{pre-alignment}, a setup step that determines which identifiers are shared across the datasets while hiding others.
Although common, revealing intersection membership is sensitive: linking credit-card transaction IDs to advertising data enables re-identification~\cite{Science/AlexandreLKA15,eurosp/IonKNPSS0SY20}, and revealing which patients two hospitals share can leak geographic or medical information, \eg, oncology referrals in a small suburb.
Noting that GBDT needs only \emph{features}, not identifiers, we seek to keep \emph{all} IDs and their alignment private throughout collaborative training, a regime we call \emph{anonymous training}.

\subsection{Securely Sewing Metadata
via \emph{Secret Match}}
Shielding identifiers entirely, we introduce the first \emph{anonymous} two-party GBDT protocol that never reveals them, intersecting or not.
Matching that goal, circuit-PSI~\cite{eurocrypt/PinkasSTY19,eurocrypt/RindalS21,ccs/RaghuramanR22} is a natural starting point because it can evaluate arbitrary circuits on payloads attached to the hidden intersection.
Current uses, however, are limited to order‑independent aggregates like sums~\cite{eurosp/IonKNPSS0SY20} or simple statistics~\cite{acns/YingCPXL22};
even counting the intersection size incurs ${\sim}20\times$ more cost than vanilla PSI~\cite{pkc/GarimellaMRSS21}, let alone modeling an entire GBDT as an arithmetic circuit.

The difficulty stems from an optimization in circuit-PSI: the parties employ asymmetric hashing.
The receiver inserts each ID into a \emph{cuckoo} table with \emph{one} bucket per ID, and a stash handles overflows from collisions.
In contrast, the sender places each ID in \emph{all} corresponding buckets by \emph{simple hashing}, creating a one-to-many mapping.
PSI then reduces to per-bucket private set membership (PSM) testing~\cite{asiaccs/MaC22}.
Such asymmetry poses two challenges for GBDT training:

\smallskip
\noindent I) \textbf{Gradient histograms}:
Each party holds $n$ samples with~$m$ real-valued features, discretized into an $mB \times n$~binary~matrix $\mathbf{M}$ ($B$ bins per feature).
Each node maintains a \emph{sample indicator} $\mathbf{b} \in \{0,1\}^n$ tracking which samples reach it.
Gradients are filtered by $\mathbf{b}$ and aggregated by multiplying $\mathbf{M}$ to form histograms.
The receiver, knowing the one-to-one mapping from columns to gradients, can run a secure two-party functionality $\mathcal{F}_{\mathsf{BinMatVec}}$ for binary matrix-vector multiplication to multiply 
$\mathbf{M}$ with gradient shares.
However, the sender cannot because one-to-many hashing obscures which gradient aligns to which column.

\smallskip
\noindent II) \textbf{Indicator synchronization}:
A Boolean indicator vector tracks which samples reach each node, and PSI‑based schemes start with $\mathbf{1}^{n}$.
To keep it secret while allowing efficient local updates (Section~\ref{sect:base_design_revisit}), Squirrel resorts to~AND-sharing~\cite{uss/LuHZWH23}, which is leaky in this setting.
More importantly, the indicator synchronization for children fails when the sender owns the split due to the one-to-many mapping.

These challenges prompt a broader question:
\emph{``Can we perform order-dependent computation on circuit-PSI outputs without generic secure multiparty computation?''}
The techniques here answer affirmatively, enabling practical anonymous GBDT and informing other secure computations.

\subsection{Technical Overview}
\noindent
\textbf{Base solution}:
Circuit‑PSI~\cite{eurocrypt/PinkasSTY19,eurocrypt/RindalS21,ccs/RaghuramanR22} enables oblivious processing of \emph{payloads} attached to intersecting IDs.
In our base design, the \emph{sender} attaches its binary feature matrix as the payload.
After PSI, the matrix is secret‑shared and aligned with the joint ID list.
Gradient histograms are then built by a pick‑then‑sum procedure with secure multiplexer $\mathcal{F}_{\mathsf{mux}}$~\cite{ccs/RatheeR0CGR020};
this entails $O(nmB)$ oblivious transfers (OTs).

To update the per‑node sample indicator, we introduce a new OT‑based \emph{oblivious indicator synchronization} (OIS) protocol.
If the \emph{receiver} owns the best split, it directly shares one column of its feature matrix;
otherwise, the two parties execute a $1$‑out‑of-$mB$ OT for every row so that the sender can retrieve the required column without revealing the index.
In either case, secure AND~\cite{ccs/RatheeR0CGR020} and (plaintext) string XOR then derive the children's indicator shares (Section~\ref{sect:base_OIS}).
While feasible and anonymous, the base design is OT‑heavy.

\smallskip
\noindent
\textbf{Dual-circuit-PSI (which ``symmetrizes'' alignment)}:
Our final design, with codename OTSA, reduces OTs and symmetrizes alignment via a new dual-circuit-PSI framework:
two circuit-PSI instances run with swapped roles,
so each party learns an ID-to-bucket mapping once.
No feature matrix needs to be sent as a payload;
both remain local, and gradient aggregation switches from $\mathcal{F}_{\mathsf{mux}}$ to communication‑friendly
$\mathcal{F}_{\mathsf{BinMatVec}}$.
We use \sys to refer to both the base design and the final extension.

\smallskip
\noindent
\textbf{Fast ciphertext packing (halving costs)}:
Our instantiation uses homomorphic encryption instantiated from the learning-with-error (LWE) assumption and its ring variant, ring LWE (RLWE) assumption.\footnote{%
Hereinafter, we may use shorthand LWE/RLWE to refer to ciphertexts produced by an encryption scheme under the LWE/RLWE assumption.
}
In $\mathcal{F}_{\mathsf{BinMatVec}}$, gradient shares are converted to RLWE ciphertexts, homomorphically picked-and-summed, and output as multiple LWE ciphertexts that are packed into one RLWE ciphertext.
Our $\mathsf{FastPackLWEs}$ subroutine (Section~\ref{sec:fast_pack_rlwe}) first packs each adjacent LWE pair directly into one RLWE (eliminating lifting), then recursively merges RLWEs.
This streamlines the raw $\mathsf{LWEDimLift}$-then-$\mathsf{PackLWEs}$ approach~\cite{uss/LuHZWH23}, hence halving inputs and runtime,
where the former is for lifting an LWE ciphertext to a larger lattice dimension 
and the latter is for
homomorphically merging LWE ciphertexts into one RLWE ciphertext that decrypts to a related polynomial~\cite{acns/ChenDKS21}.



\smallskip
\noindent
\textbf{Batched OPPRF-based OIS}:
Keeping both feature matrices local invalidates the row-wise OT-based OIS in the base protocol.
An OIS based on oblivious programmable pseudorandom function (OPPRF) is introduced (Section~\ref{sec:opprf_indicator_sync}): the split owner programs an OPPRF so the counterparty obtains exact assignment shares for all its cuckoo buckets.
Level-wise batching yields a batched variant $\mathcal{F}_{\mathsf{BOIS}}$ with $O(D)$ cost instead of $O(2^{D})$ for tree depth~$D$.

We present independent optimizations (Sections~\ref{sect:our_sigmoid} and~\ref{sect:adv_optimizations}), \eg, better $\mathsf{sigmoid}$ approximation, gradient packing, and lightweight $\mathsf{argmax}$, which also accelerate Squirrel.
These refinements help the dual-circuit-PSI design outperform the base design and match \emph{non-anonymous}~Squirrel, despite partial per-instance recomputation, \eg, gradients.

Our main contributions are summarized below.

\smallskip
\noindent 
1) To our knowledge, \sys is the first two‑party protocol that enables \emph{anonymous} GBDT training (and inference~\cite{ndss/MaTZC21}).
\sys builds on circuit‑PSI~\cite{eurocrypt/RindalS21,ccs/RaghuramanR22}, which keeps every common ID hidden and maintains private alignment throughout training, for anonymity that prior PSI‑based approaches did not provide.

\smallskip
\noindent 
2) Circuit-PSI asymmetric hashing (cuckoo {vs.} simple)~creates two challenges, namely, gradient aggregation and indicator synchronization.
In our base design, the sender's feature matrix is carried as a payload.
Gradients are selected via the secure multiplexer~\cite{ccs/RatheeR0CGR020}, and indicators are reconciled via a custom OIS protocol.
Both are functional yet OT‑intensive.

\smallskip
\noindent
3) \sysfinal uses a new dual-circuit-PSI framework to symmetrize the workflow.
\emph{Both} parties now build histograms with the OT-free $\mathcal{F}_{\mathsf{BinMatVec}}$, optimized by our $\mathsf{FastPackLWEs}$ subroutine.
Since the base OIS no longer applies, we propose an OPPRF-based variant with level-wise batching, which reduces communication logarithmically.

\smallskip
\noindent 4)
We evaluate on real-world and synthetic datasets.
All executable code is open source\footnote{\url{https://zenodo.org/records/17373936}}.
$\sysfinal$ is up to $20\times$ faster and $44\times$ lighter than the base version, and it matches Squirrel under local area network (LAN) settings.
Its runtime and traffic never exceed $1.8\times$ those of Squirrel for wide area networks (WAN), while providing strictly stronger privacy.
(We exclude the na\"ive baseline that embeds the \emph{entire} GBDT training in circuit-PSI, which is orders of magnitude slower.)

\section{Preliminaries}
$|S|$ is the size of the set $S$.
Sampling from a finite set $S$ 
or a distribution $\mathcal{D}$
is denoted by
$x \sample S$ or
$x \leftarrow \mathcal{D}$.
$\perp$ is the error symbol.
The computational and statistical security parameters are $\lambda$ and $\kappa$, respectively.
The indicator function $\mathbf{1}\{x\}$ equals~$1$ if $x$ is true; $0$, otherwise.

Bold lowercase letters denote vectors (\eg, $\mathbf{a}$) and bold uppercase letters denote matrices (\eg, $\mathbf{M}$).
For indices, $\mathbf{a}[i]$ is the $i$-th entry of $\mathbf{a}$, $\mathbf{M}[i]$ (or $\mathbf{M}_i$) is the $i$-th row of $\mathbf{M}$, and $\mathbf{M}[i, j]$ is the entry in row~$i$, column~$j$.
Polynomials are written as $\widehat a$, where $\widehat a[i]$ is its $i$-th coefficient.
Define the polynomial ring $\mathcal{R} = \mathbb{Z}[X] / (X^N + 1)$, where $N$ is a power of~$2$.
For a modulus $Q \in \mathbb{Z}$, set $\mathcal{R}_Q
= \mathcal{R} / Q\mathcal{R}
= \mathbb{Z}_Q[X] / (X^N + 1)$.
Table~\ref{tab:notation} lists all notations.

\begin{table}[!t]
	\caption{Notations with descriptions}
	\begin{tabular}{l||l}
		\hline
		Symbol & Description
		\\
		\hline
		$\langle \cdot \rangle^\mathsf{A}_l, \langle \cdot \rangle^\mathsf{B}_l$ & $P_l$'s arithmetic and boolean shares\\
		$\llbracket \cdot \rrbracket$ & RLWE HE-encrypted variable\\
		\hline
		$\mathbf{ID}_l$, $\mathbf{X}_l$ & Sample IDs and features owned by party $P_l$
		\\
		$n_l$, $m_l$ & Number of samples and features in $\mathbf{X}_l$
		\\
		$\mathbf{M}_l$ & Discrete feature binary matrix of $\mathbf{X}_l$
		\\
		$B$ & Number of bins in $\mathbf{M}_l$
		\\ 
		$\mathbf{y}$ & Label set
		\\
		$T$ & Number of trees
		\\
		$D$ & Depth of tree
		\\		
		$z_*, u_*$ & Best split:
		$u_*$-th bin of $z_*$-th feature
		\\
		\hline
		$\mathbf{q}$ & Result of circuit-PSI
		\\
		$\mathbf{g}^{(k)}$, $\mathbf{h}^{(k)}$ & $1$st and $2$nd order of gradients of node $k$
		\\
		$\mathbf{b}^{(k)}$ & Sample indicator of node $k$
		\\ 
		$\mathbf{w}$ & Leaf weight
		\\ 
		\hline
		$\mathbf{q}^{l}$ & Result of $l$-th circuit-PSI
		\\
		$\mathbf{g}^{(k, l)}, \mathbf{h}^{(k, l)}$ & Gradients of node $k$ based on $\mathbf{q}^{(l)}$
		\\
		$\mathbf{b}^{(k, l)}$ & Sample indicator of node $k$ based on $q^{(l)}$
		\\
		$\mathbf{T}_{\mathsf{ch}}^l, \mathbf{T}_{\mathsf{sh}}^l$ & Cuckoo and simple hash tables of $l$-th circuit-PSI
		\\
		\hline
	\end{tabular}
	\label{tab:notation}
\end{table}


\subsection{Gradient Boosting Decision Tree}
\label{sect:gbdt_training}
A GBDT is an ensemble of $T$ decision trees 
$\{\mathcal{T}_t\}^T_{t=1}$ that are trained sequentially via gradient boosting~\cite{kdd/ChenG16}.
Each tree partitions the feature space at internal nodes and assigns a real-valued weight to each leaf.
For input $\mathbf{x}$, the ensemble prediction is the sum of leaf weights:
$\tilde{y} = \sum^T_{t = 1} \mathcal{T}_t(\mathbf{x})$.

\smallskip
\noindent
\textbf{Setup and gradient computation}:
Let $\mathcal{D} = \{(\mathbf{X}[i], \mathbf{y}[i])\}_{i = 1}^{n}$ be a dataset of $n$ samples with $m$ real-valued features 
$\mathbf{X}[i] \in \mathbb{R}^{m}$ and ground-truth label $\mathbf{y}[i]$.
We focus on binary classification with logistic cross-entropy loss, which is twice differentiable and well suited to second-order boosting~\cite{kdd/ChenG16}.

After $t{-}1$ trees, for sample $i$, let $\tilde{\mathbf{y}}[i] = \sum^{t - 1}_{j = 1} \mathcal{T}_j(\mathbf{x})$ and 
$\mathbf{p}[i] = \mathsf{sigmoid}\!\bigl(\tilde{\mathbf{y}}[i]\bigr)$.
The corresponding first- and second-order statistics used to fit the $t$-th tree are respectively
\begin{equation}
\mathbf{g}[i] = \mathbf{y}[i] - \mathbf{p}[i],
\quad
\mathbf{h}[i] = \mathbf{p}[i] \cdot (1 - \mathbf{p}[i]),
\label{eq:gradients_plaintext}
\end{equation}
\ie, the canonical gradient and Hessian of logistic loss~\cite{kdd/ChenG16}.

\smallskip
\noindent
\textbf{Node splitting}:
Consider a node with index set $\mathbf{I}$.
For feature~$z$ and threshold $\phi$, define the partition 
$\mathbf{I}_{L} = \{i \mid \mathbf{X}[i, z] \le \phi\}$ and
$\mathbf{I}_{R} = \mathbf{I}\setminus\mathbf{I}_{L}$.
With $\alpha \geq 0$ denoting $L_2$ regularization on leaf weights and $\gamma \geq 0$ the complexity penalty per split, the \emph{split gain} (or ``score'') $\mathbf{L}_{\mathsf{sp}}$ is 
\begin{equation}
\frac12\!\left[
\frac{\bigl(\sum_{i \in \mathbf{I}_{L}}\mathbf{g}[i]\bigr)^{2}}{\sum_{i \in \mathbf{I}_{L}} \mathbf{h}[i] + \alpha}
\!+\!
\frac{\bigl(\sum_{i \in \mathbf{I}_{R}}\mathbf{g}[i]\bigr)^{2}}{\sum_{i \in \mathbf{I}_{R}} \mathbf{h}[i] + \alpha}
\!-\!
\frac{\bigl(\sum_{i \in \mathbf{I}} \mathbf{g}[i]\bigr)^{2}}{\sum_{i \in \mathbf{I}} \mathbf{h}[i] + \alpha}\!
\right]\!- \gamma.
\label{eq:split_gain_plaintext}
\end{equation}
The best split maximizes $\mathbf{L}_{\mathsf{sp}}$ over \emph{all} candidates $\phi$ for \emph{all}~$z$.

\smallskip
\noindent\emph{Histogram-based Search}:
Exhaustive threshold scanning per feature is costly.
We apply histogram binning~\cite{kdd/ChenG16,nips/KeMFWCMYL17} to discretize each feature into $B$ ordered \emph{bins}.
Let $\mathbf{M} \in \{0, 1\}^{B m \times n}$ be the ``one-hot bin indicator'' matrix,
where $\mathbf{M}[zB + u, i] = 1$ iff sample $i$ falls into bin $u$ of feature $z$.\footnote{
For example, consider features 
age: [${<}30$, ${\geq}30$] and
weight: [${<}70$kg, ${\geq}70$kg],
discretized into $B = 2$ bins.
When given $n = 4$ samples:
$\{(25, 65), (35, 75), (45, 85), (28, 68)\}$,
$\mathbf{M} = 
[1010|0101|1010|0101]^\top$.
}

Aggregating $\{\mathbf{g}[i], \mathbf{h}[i]\}$ into per-bin histograms supports efficient split evaluation by scanning bin boundaries, and the optimal one is denoted by $(z_\star, u_\star)$.
For convenience, we pre-compute an exclusive-prefix matrix $\widetilde{\mathbf{M}} \in \{0, 1\}^{B m \times n}$ with rows $\widetilde{\mathbf{M}}[zB + u, i] = \bigvee_{v\le u} \mathbf{M}[zB + v, i]$
so that any candidate left-child assignment
$\mathbf{I}_{L}$ directly equals a row of $\widetilde{\mathbf{M}}$, avoiding runtime comparisons with bin boundaries.

\smallskip
\noindent
\textbf{Leaf weights}:
At maximum depth $D$, a leaf 
$k$ has weight 
\begin{align}
\mathbf{w}[k] = 
{-{\textstyle\sum}_{i \in \mathbf{I}_k}{\mathbf{g}[i]}}
/{({\textstyle\sum}_{i \in \mathbf{I}_k} \mathbf{h}[i] + \alpha)},
\label{eq:weight_plaintext}
\end{align}
which is the closed-form minimizer of the second-order objective with $L_2$ regularization~\cite{kdd/ChenG16}.
The prediction for sample $i$ is then updated by adding $\mathbf{w}[k]$ if $i \in \mathbf{I}_k$.

\subsection{Cryptographic Primitives}
\label{sect:crypto_tools}

\subsubsection{Secret Sharing}
We adopt $(2, 2)$-arithmetic additive secret sharing~\cite{ndss/Demmler0Z15}.
To share an $\ell$‑bit integer $x \in \mathbb{Z}_{2^{\ell}}$,
party $P_{0}$ samples
$r \sample \mathbb{Z}_{2^{\ell}}$, sends $r$ to $P_{1}$, and sets
$
\langle x\rangle^{\mathsf{A}}_{0} = x - r, \;
\langle x\rangle^{\mathsf{A}}_{1} = r.
$
Reconstruction is
$\langle x \rangle^\mathsf{A}_0 + \langle x \rangle^\mathsf{A}_1 \mod \mathbb{Z}_{2^\ell}$.

A real $\widetilde{x} \in \mathbb{R}$ is encoded as
$x = \lfloor \widetilde{x} \, 2^{f} \rceil \in [-2^{\ell-1}, 2^{\ell-1})$ using fixed‑point representation with precision~$f$.

\smallskip
\noindent $\bullet$
Addition:
$\langle z \rangle^\mathsf{A} = \langle x \rangle^\mathsf{A} + \langle y \rangle^\mathsf{A}$ is computed by $P_c$ locally.

\smallskip
\noindent $\bullet$
Multiplication:
$\langle z \rangle^\mathsf{A}= \mathcal{F}_{\mathsf{mul}}(\langle x \rangle^\mathsf{A}, \langle y \rangle^\mathsf{A})$, where $\mathcal{F}_{\mathsf{mul}}$ is a secure two-party protocol using Beaver's triples~\cite{crypto/Beaver91a}.

\smallskip
\noindent $\bullet$
Division:
$\langle z \rangle^\mathsf{A}= \mathcal{F}_{\mathsf{div}}(\langle x \rangle^\mathsf{A}, \langle y \rangle^\mathsf{A})$, where $\mathcal{F}_{\mathsf{div}}$ is a secure two-party protocol using the Goldschmidt method~\cite{fc/CatrinaS10}.

\smallskip
\noindent $\bullet$
Sigmoid:
$\langle y \rangle^\mathsf{A} = \mathcal{F}_{\mathsf{sigmoid}}(\langle x \rangle^\mathsf{A})$ is a secure two-party protocol for efficiently evaluating the sigmoid function via three‑segment Fourier approximation.

\smallskip
\noindent $\bullet$
ArgMax:
$\langle z \rangle^\mathsf{A} = \mathcal{F}_{\mathsf{argmax}}(\langle \mathbf{x} \rangle^\mathsf{A})$ is a secure two-party protocol to output the shared index $z$ of $\max \mathbf{x}$~\cite{cans/KolesnikovSS09}.

\smallskip
\noindent 
$(2, 2)$-boolean sharing~\cite{ndss/Demmler0Z15} is also used.
A boolean value $x \in \{0, 1\}$ is XOR-shared as $x = \langle x \rangle^\mathsf{B}_0 \oplus \langle x \rangle^\mathsf{B}_1$:
Party $P_0$ samples $r \in \mathbb{Z}_2$, 
sets $\langle x \rangle^\mathsf{B}_0 = x \oplus r$, and sends $\langle x \rangle^\mathsf{B}_1=r$ to $P_1$.
The same generalizes to $\ell$-bit strings with $r \sample \mathbb{Z}_{2^\ell}$.

\smallskip
\noindent $\bullet$
XOR:
$\langle z \rangle^\mathsf{B} = \langle x \rangle^\mathsf{B} \oplus \langle y \rangle^\mathsf{B}$ is computed by $P_c$ locally.

\smallskip
\noindent $\bullet$
NOT:
$\langle y \rangle^\mathsf{B} = \neg \langle x \rangle^\mathsf{B}$, where one party locally computes $\neg$ on its share, while the other keeps its share unchanged.
	
\smallskip
\noindent $\bullet$
AND:
$\langle z \rangle^\mathsf{B} = \mathcal{F}_{\mathsf{and}}(\langle x \rangle^\mathsf{B}, \langle y \rangle^\mathsf{B})$ is achieved by a secure two-party protocol using precomputed boolean triples~\cite{ndss/Demmler0Z15}.

\smallskip
\noindent $\bullet$
OR:
$\langle z \rangle^\mathsf{B} = \mathcal{F}_{\mathsf{or}}(\langle x \rangle^\mathsf{B}, \langle y \rangle^\mathsf{B})$ via $x \land y\oplus(\neg x \oplus \neg y)$.

\smallskip
\noindent$ \bullet$
Multiplexer: 
$\langle z \rangle^\mathsf{A} = \mathcal{F}_{\mathsf{mux}}(\langle x \rangle^\mathsf{B}, \langle y \rangle^\mathsf{A})$ takes \emph{arithmetic} shares $\langle y \rangle^\mathsf{A}$ and \emph{boolean} shares of a choice bit $\langle x \rangle^\mathsf{B}$ as~input.
It outputs $\langle z \rangle^\mathsf{A}$ with $z = y$ if $x = 1$;
else $z = 0$.
It can be realized via two $\mathcal{F}_{\mathsf{OT}}$ calls~\cite{ccs/RatheeR0CGR020} (introduced below).

\smallskip
\noindent $\bullet$
Comparison:
$\langle z \rangle^\mathsf{B} = \mathcal{F}_{\mathsf{greater}}(\langle x \rangle^\mathsf{A}, \langle y \rangle^\mathsf{A})$.
If $x > y$, then $z$ is $1$;
else $z$ is $0$.
When the bitlength of $x$ and $y$ is small, the secure comparison can be efficiently realized with only $\mathcal{F}_{\mathsf{OT}}$~\cite{ccs/RatheeR0CGR020}.
Otherwise, one can recursively divide them as $x = x_1 || x_2, y = y_1 || y_2$ and compares them based on 
$\mathbf{1}\{x > y\} = \mathbf{1}\{x_1 > y_1\} \oplus (\mathbf{1}\{x_1 = y_1\}\wedge \mathbf{1}\{x_2 > y_2\}),$
where $\wedge$ and $>,=$ are respectively realized by $\mathcal{F}_{\mathsf{and}}$ and $\mathcal{F}_{\mathsf{OT}}$~\cite{ccs/RatheeR0CGR020}.

\subsubsection{Oblivious Transfer (OT)}
A $1$-out-of-$n$ OT~\cite{iacr/Rabin05} runs between a sender with $n$ messages and a receiver with choice $i \in [0, n - 1]$.
The ideal functionality is $\mathcal{F}_{\mathsf{OT}}$;
the receiver only learns the $i$-th message,
and the sender learns nothing about~$i$.
Ferret OT~\cite{ccs/YangWLZW20} is used for its low communication.

\subsubsection{Circuit-PSI}
\label{sect:pre_cpsi}
Private set intersection (PSI) allows two parties to learn $X \cap Y$ of their input sets while hiding non-intersecting elements.
Circuit-PSI~\cite{eurocrypt/PinkasSTY19,eurocrypt/RindalS21} further enables evaluating an arbitrary circuit $f$ on payloads attached to items in $X \cap Y$, without revealing the intersection itself.

\smallskip
\noindent
\textbf{Hashing phase}:
Fix $e$ hash functions $h_1, \ldots, h_e$.
The \emph{receiver} inserts each 
$\mathbf{x}\!\in\!X$ into a cuckoo-hash table $\mathbf{T}_{\mathsf{ch}}$ of $(1 + \varepsilon)n$ buckets:
first empty among $\{h_i(\mathbf{x})\}$, with evictions as needed.
With suitable $(\varepsilon, e)$, no stash is required~\cite{eurocrypt/PinkasSTY19}.
The \emph{sender} inserts each $\mathbf{y}\!\in\!Y$ into all corresponding buckets of a simple-hash table $\mathbf{T}_{\mathsf{sh}}$, permitting multiple items per bucket (vs. at most one in $\mathbf{T}_{\mathsf{ch}}$).
Both tables are padded to a common maximum bucket size.
Shared hashes ensure any common item lands in one identical bucket on both sides, reducing PSI to per-bucket \emph{private set membership} (PSM).

\smallskip
\noindent
\textbf{PSM via OPPRF}:
An oblivious programmable pseudorandom function functionality $\mathcal{F}_{\mathsf{OPPRF}}$ realizes PSM~\cite{eurocrypt/PinkasSTY19,eurocrypt/RindalS21}.
For bucket $i$, the sender programs PRF $F_i$ such that $F_i(\mathbf{y}) = r_i$ for every~$\mathbf{y}$ in $\mathbf{T}_{\mathsf{sh}}[i]$ and outputs random values elsewhere.
Both parties query $\mathcal{F}_{\mathsf{OPPRF}}$ on the receiver's items from $\mathbf{T}_{\mathsf{ch}}$;
the receiver obtains $r_i'$.
A secure equality test $\mathcal{F}_{\mathsf{equal}}$ then yields shared bits of $r_i \stackrel{?}{=} r_i'$.

\smallskip
\noindent
\textbf{Evaluation with payloads}:
Each item may carry a payload, enabling $f$ to be computed on payloads of the (hidden) intersection~\cite{eurocrypt/PinkasSTY19}. 
Since a sender's bucket can contain multiple payloads, two batched OPPRFs select the correct one~\cite{eurocrypt/PinkasSTY19}.

Looking ahead, our base design attaches the receiver's labels and the sender's (binary-matrix encoded) features as payloads to $\mathcal{F}_{\mathsf{CPSI}}$.
$\sysfinal$ alternates receiver roles and uses OPPRF to synchronize intermediate states;
only labels are carried as payloads, improving efficiency.

\subsubsection{Additive Homomorphic Encryption (HE)}
Additive HE supporting homomorphic additions is instantiated from LWE and the ring variant, RLWE~\cite{crypto/BrakerskiV11}.
Public parameters $(N, Q)$ specify the lattice dimension and modulus.
Define $\mathcal{R}_{Q} = \mathbb{Z}_{Q}[X] / (X^{N} + 1)$.
Let $\chi_{\sk}$ and~$\chi_{\mathsf{err}}$ denote the secret and error distributions (\eg, $\{-1, 0, 1\}^N$).

\smallskip
\noindent $\bullet$
\textbf{KeyGen}:
It generates a key pair $(\sk, \pk)$ for $\mathsf{RLWE}$ 
($\sk \leftarrow \chi_\sk$ and $\pk \in \mathcal{R}_Q^2$).
Identify the $\mathsf{LWE}$ secret key $\mathbf{s} \in \mathbb{Z}_Q^N$ as the coefficient vector of $\sk$, \ie, $\mathbf{s}[i] = \sk[i]$,~$\forall i \in [N]$.

\smallskip
\noindent $\bullet$
\textbf{Encryption}:
$\mathsf{LWE}$ encryption of $m \in \mathbb{Z}$ is given as
$\mathsf{LWE}_{Q, \mathbf{s}}(m) = (\mathbf{a}, b) = (\mathbf{a}, \langle\mathbf{a}, \mathbf{s} \rangle + e + m) \in \mathbb{Z}_Q^{N + 1}$ with $\mathbf{a} \sample \mathbb{Z}_Q^N$ 
and 
$e \leftarrow \chi_{\mathsf{err}}$.
$\mathsf{RLWE}$ encryption of $\widehat{m} \in \mathcal{R}_Q$ is 
$\mathsf{RLWE}_{Q, \sk}(\widehat{m}) = (\widehat{a}, -\widehat{a} \cdot \sk + \widehat{e} + \widehat{m}) \in \mathcal{R}_Q\times \mathcal{R}_Q$ (or simply $\llbracket \widehat{m} \rrbracket$), 
where $\widehat{a} \sample R_Q$ and 
$\widehat{e}[i] \leftarrow \chi_{\mathsf{err}}$.

\smallskip
\noindent $\bullet$
\textbf{Decryption}:
An $\mathsf{RLWE}$ ciphertext $\widehat{ct} = \mathsf{RLWE}_{Q, \sk}(\widehat{m}) = (\widehat{a}, \widehat{b}) \in \mathcal{R}_Q\times \mathcal{R}_Q$ is decrypted by 
$\mathsf{RLWE}_{Q, \sk}^{-1}(\widehat{a}, \widehat{b}) = \widehat{a}\cdot \sk + \widehat{b} = \widehat{m} + \widehat{e} \in \mathcal{R}_Q$.
(We omit the subscript $Q$ when clear from context.)

\smallskip
\noindent $\bullet$
\textbf{PackLWEs}:
Given a set of $\mathsf{LWE}$ ciphertexts $\{\mathsf{LWE}_{\mathbf{s}}(m_{i})\}$, an FFT-style automorphism-and-keyswitch procedure can pack them into a single $\mathsf{RLWE}$ ciphertext~\cite{acns/ChenDKS21}.
For $\widehat{m}(X)$, apply the automorphism $\mathsf{EvalAuto}(\mathsf{RLWE}_{\widehat{\sk}}(\widehat{m}(X)), t)$ that applies $X \mapsto X^t$ to both ciphertext and key, then perform
$$
\mathsf{KeySwitch}_{\widehat{\sk}_t \to \widehat{\sk}}\!\colon
\mathsf{RLWE}_{\widehat{\sk}_t}\bigl( \widehat{m}_t \bigr)
\!\longmapsto
\mathsf{RLWE}_{\widehat{\sk}}\bigl(\widehat{m}_t\bigr),
$$
where
$\widehat{\sk}_t = \widehat{\sk}(X^t)$ and $\widehat{m}_t = \widehat{m}(X^t)$.

\smallskip
\noindent $\bullet$
\textbf{Arithmetic share to HE}:
$\mathcal{F}_{\mathsf{A2H}}$.
It converts $\langle \mathbf{a} \rangle^\mathsf{A}$ to an RLWE ciphertext $\llbracket \mathbf{a} \rrbracket$.
$P_l$ transforms $\langle\mathbf{a}[i]\rangle^\mathsf{A}_l$ from $\mathbb{Z}_{2^\ell}$ to $\mathbb{Z}_Q$, then encodes it to the polynomial $\widehat{a}_l = \sum_{i = 0}^{N - 1} \langle \mathbf{a}[i] \rangle^\mathsf{A}_l X^i$.
$P_c$ sends $\mathsf{RLWE}_{\sk_c}(\widehat{a}_c)$ to $P_{1 - c}$ encrypted under $\sk_c$.
$P_{1 - c}$ gets $\llbracket \mathbf{a} \rrbracket$ by homomorphically adding $\widehat{a}_{1 - c}$ to $\mathsf{RLWE}_{\pk_c}(\widehat{a})$~\cite{uss/LuHZWH23}.

\smallskip
\noindent $\bullet$
\textbf{HE to Arithmetic share}:
$\mathcal{F}_{\mathsf{H2A}}$.
It converts RLWE $\llbracket \mathbf{a} \rrbracket$ encrypted by $P_c$ to $\langle \mathbf{a} \rangle^\mathsf{A}_l$.
$P_{1 - c}$ samples a random $\widehat{r} \sample \mathcal{R}_Q$, and gets $\widehat{ct}$ by homomorphically subtracting it from $\llbracket \mathbf{a} \rrbracket$.
$P_c$ decrypts $\widehat{ct}$ and sets its coefficient vector as $\langle \mathbf{a} \rangle^\mathsf{A}_c$.
$P_{1 - c}$ sets $\langle \mathbf{a}[i] \rangle^\mathsf{A}_{1 - c}$ by transforming $\widehat{r}[i]$ from $Q$ to $\mathbb{Z}_{2^\ell}$~\cite{uss/LuHZWH23}.

\begin{figure}[!t]
	\centering
	\includegraphics[width = \linewidth]{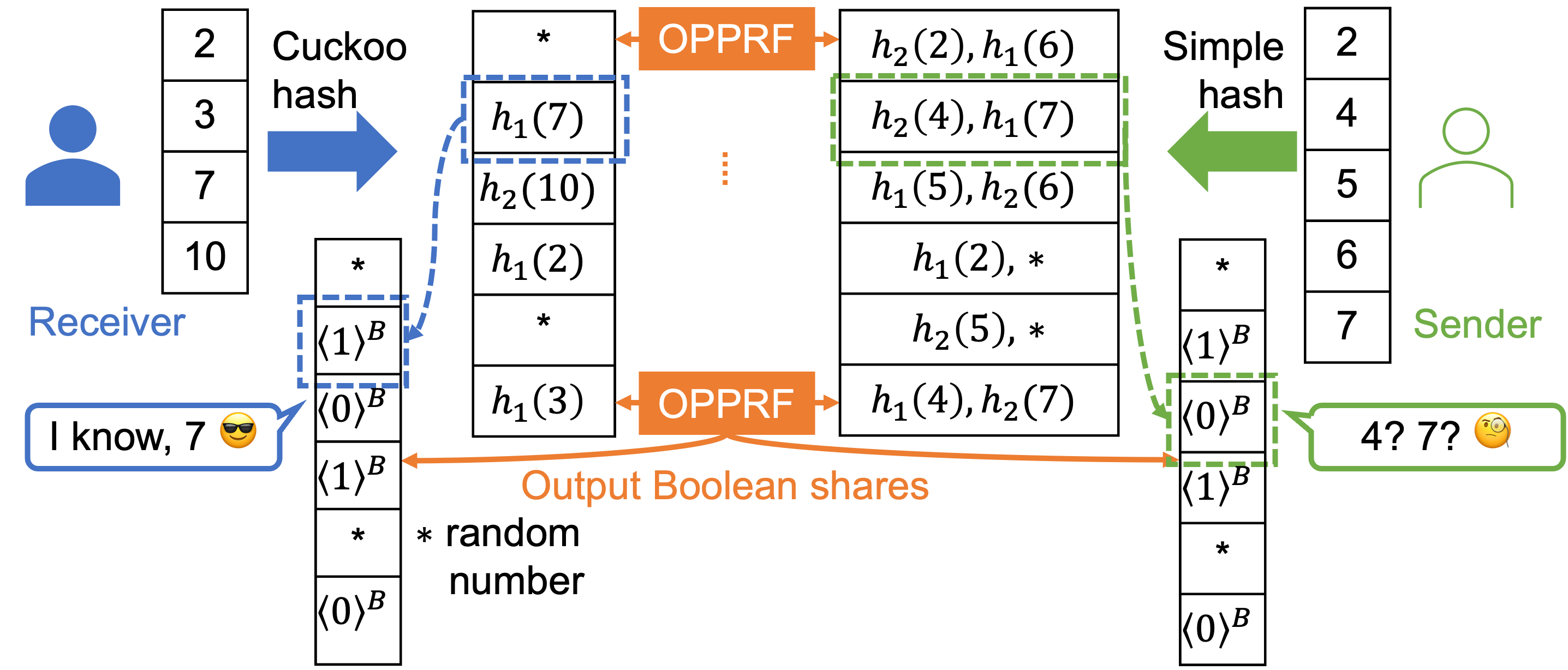}
	\caption{Circuit-PSI asymmetry (cuckoo vs.~simple hash):
	the receiver learns $7$ is in the second bucket, but the sender cannot tell which item ($4$ or $7$) is in the second one.}
	\label{fig:cpsi_asymmetric}
	\vspace{-5pt}
\end{figure}
\section{System and Security Models}

Two parties $P_0$ and $P_1$ hold vertically partitioned data $(\mathbf{ID}_l, \mathbf{X}_l) $ for $l \in \{0, 1\}$, where $\mathbf{ID}_l$ draws from a common identifier domain.
Identifier lists may overlap, while feature spaces are disjoint:
$\mathbf{X}_l \in \mathbb{R}^{n_l \times m_l}$
with distinct column sets.
For simplicity, we assume $m_0 = m_1 = m$ and $n_0 = n_1 = n$ (unbalanced $n_0 \neq n_1$ are considered in experiments).
Let $\mathbf{ID}_l[i]$ be the $i$-th identifier of $P_l$ and $\mathbf{X}_l[i]$ its associated feature vector.
The parties run a two-party protocol $\varPi$ to train a GBDT on the (implicit) join of their datasets.

All prior work~\cite{uss/LuHZWH23,cikm/FangZT0YWWZZ21,pvldb/WuCXCO20} assumes pre-aligned data, \ie, $\mathbf{X}_0[i]$ and $\mathbf{X}_1[i]$ refer to the same sample.
This implicitly discloses $\mathbf{ID}_0 \cap \mathbf{ID}_1$, (in the worst case, all identifiers), enabling re-identification.
In contrast, our designs reveal \emph{no} identifiers, \emph{no} membership in the intersection, \emph{not} even the cardinality $|\mathbf{ID}_0 \cap \mathbf{ID}_1|$ nor whether the intersection equals the entire dataset;
it also hides all alignment information (\eg, for a common item at $\mathbf{ID}_0[i]$, the counterparty's index~$j$ such that $\mathbf{ID}_0[i] = \mathbf{ID}_1[j]$ is never learned).

Figure~\ref{fig:functionality_gbdt} specifies the ideal functionality $\mathcal{F}_{\mathsf{GBDT}}$ for~one tree in $\mathcal{T}$ with two parties' data and public parameters $\mathbf{pp}$.
Without loss of generality, $P_0$ supplies labels $\mathbf{y}$.
Each tree is complete and balanced with depth~$D$;
internal nodes are indexed by $1 \le k < 2^{D - 1}$ and leaves by $2^{D - 1} \le k < 2^{D}$.
Internal computation is done on the implicit join $\mathbf{ID}_0 \cap \mathbf{ID}_1$:
Gradients/Hessians are computed with sigmoid approximation and then aggregated as histograms per bin (Section~\ref{sect:gbdt_training}).
For each internal node $k$ and candidate $(z, u)$, the split gain $\mathbf{L}_{\mathsf{sp}}$ is evaluated as in Eq.~\eqref{eq:split_gain_plaintext};
the best split $(z^{(k)}_*, u^{(k)}_*)$ maximizing $\mathbf{L}_{\mathsf{sp}}$ is revealed to $P_l$ owning~$z^{(k)}_*$.
Given the split, sample indicators update via exclusive-prefix matrices.
Upon completion, $P_l$ receives its own best splits $\mathbf{C}_l[k]$ for $k < 2^{D-1}$, and the leaf weights are additively secret-shared.

We consider static and semi-honest probabilistic polynomial time (PPT) adversaries~\cite{uss/LuHZWH23,cikm/FangZT0YWWZZ21,pvldb/WuCXCO20}, which follow the protocol without deviation but try to learn more beyond the outputs.
Formally, let $\varPi$ be a two-party protocol computing a deterministic functionality $\mathcal{F}\colon \{0, 1\}^* \times \{0, 1\}^* \mapsto \{0, 1\}^* \times \{0, 1\}^*$.
The output of $\mathcal{F}$ is a pair, $\mathcal{F}_0(x_0, x_1)$ to $P_0$ 
and $\mathcal{F}_1(x_0, x_1)$ to $P_1$.
For $i \in \{0, 1\}$, the view of $P_i$ during an execution of $\varPi$ on $(x_0, x_1)$ is denoted by $\mathcal{V}^{\varPi}_i(x_0, x_1)$.

\begin{definition}
	For a function $\mathcal{F}$, $\varPi$ privately computes $\mathcal{F}$ if there are PPT algorithms $\mathcal{S}_0$ and $\mathcal{S}_1$ such that, $i\!\in\!\{0, 1\}$, 
$$
\mathcal{S}_i(x_i, \mathcal{F}_i(x_0, x_1)) \approx_c \mathcal{V}^{\varPi}_i(x_0, x_1),
\forall x_0, x_1 \in \{0, 1\}^*, 
$$
	where $\approx_c$ denotes computational indistinguishability.
\label{def:private_function}
\end{definition}


\createprocedureblock{procb}{boxed}{}{}{}
\begin{figure}
	\begin{tcolorbox}[colback = white!5!white, left = 0mm, right = 0mm, top = -4mm, bottom = 0mm, title = \textbf{Functionality} $\mathcal{F}_{\mathsf{GBDT}}(\mathbf{Input}_0^{\varPi}\mbox{, }\mathbf{Input}_1^{\varPi}\mbox{, }\mathbf{pp})$]
	\pseudocodeblock[width = 23em, mode = text]{}{$\mathbf{Input}^\varPi_0 = \{\mathbf{ID}_0, \mathbf{X}_0, \mathbf{y}\}, \mathbf{Input}^\varPi_1 = \{\mathbf{ID}_0, \mathbf{X}_1\}$\\
			$\mathbf{Output}_0^{\varPi} = \{\mathbf{C}_0, \{\langle \mathbf{w}[k] \rangle_0^\mathsf{A}\}_{k\ge 2^{D - 1}}\}$,
			where \\ $\mathbf{C}_0[k] = (z^{(k)}_*, u^{(k)}_*)$ if $z^{(k)}_* < m$;
			otherwise, $\mathbf{C}_0[k] = \perp$ \\
			$\mathbf{Output}_1^{\varPi} = \{\mathbf{C}_1, \{\langle \mathbf{w}[k] \rangle_1^\mathsf{A}\}_{k\ge 2^{D - 1}}\}$, where 
			\\
			$\mathbf{C}_1[k] = (z^{(k)}_*, u^{(k)}_*)$ if $z^{(k)}_* \ge m$;
			otherwise, $\mathbf{C}_1[k] = \perp$ 
		}
	\end{tcolorbox}
	\vspace{-5pt}
	\caption{Ideal private GBDT training functionality: Outputs reveal only per-party best splits $\mathbf{C}$ and secret-shared weights $\langle \mathbf{w} \rangle^\mathsf{A}$.}
	\label{fig:functionality_gbdt}
\vspace{-3pt}
\end{figure}


\begin{theorem}[Modular Composition~{\cite{joc/Canetti00}}]
Let 
$\mathcal{F}_{1}, \ldots, \mathcal{F}_{m}$ be ideal two-party functionalities, and 
let $\varPi$ be a two-party protocol in the hybrid model that can invoke at most one of 
$\mathcal{F}_{1}, \ldots, \mathcal{F}_{m}$ per round.
For $i \in [m]$, let $f_{i}$ be a real two-party protocol that realizes $\mathcal{F}_{i}$ in the computational setting.
Then, for any PPT passive adversary $\mathcal{A}$ and any PPT environment $\mathcal{Z}$, there exists a PPT passive simulator $\mathcal{S}$ such that
$\mathsf{Sim}_{\varPi, \mathcal{S}, 
\mathcal{Z}}^{\mathcal{F}_1, \mathcal{F}_2, \ldots, \mathcal{F}_m} \approx_c \mathsf{Exec}_{\varPi^{(f_1, f_2, \ldots, f_m)}, \mathcal{A}, \mathcal{Z}}$.
\end{theorem}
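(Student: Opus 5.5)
This statement is Canetti's sequential modular composition theorem, cited from~\cite{joc/Canetti00}. I would therefore prove it with the standard hybrid argument, specialized to semi-honest, static adversaries. The plan is to replace the ideal calls to $\mathcal{F}_1,\ldots,\mathcal{F}_m$ one invocation at a time by executions of the real subprotocols $f_i$. Each intermediate execution should be indistinguishable from its neighbour, and the simulator $\mathcal{S}$ is assembled from the simulators that witness the security of each $f_i$.

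\textbf{Step 1: building $\mathcal{S}$.} Fix a passive adversary $\mathcal{A}$ corrupting $P_b$. By hypothesis, each $f_i$ privately computes $\mathcal{F}_i$ in the sense of Definition~\ref{def:private_function}, so there are PPT simulators $\mathcal{S}^{(i)}_b$ that produce $P_b$'s view of $f_i$ from its input and output alone. The composite simulator $\mathcal{S}$ runs $\mathcal{A}$ inside the hybrid execution of $\varPi$. Whenever $\varPi$ calls some $\mathcal{F}_i$, $\mathcal{S}$ takes the corrupted party's sub-input and the sub-output that the ideal $\mathcal{F}_i$ returns, runs $\mathcal{S}^{(i)}_b$ on them, and hands the resulting simulated transcript to $\mathcal{A}$ in place of the real messages of $f_i$. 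Messages of $\varPi$ outside these calls are forwarded unchanged. $\mathcal{S}$ is PPT because $\varPi$ makes polynomially many calls and each $\mathcal{S}^{(i)}_b$ is PPT.

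\textbf{Step 2: the hybrids.} Let $q$ be the polynomial bound on the number of calls. Define hybrid $H_j$ in which the first $j$ calls are executed with the real $f_i$ and the remaining calls are ideal but simulated as in Step 1. Then $H_0 = \mathsf{Sim}$ and $H_q = \mathsf{Exec}$. The hypothesis that at most one functionality is invoked per round is what makes the hybrids well defined. Subprotocol executions never interleave, so each call is a self-contained segment. During that segment the state of $\varPi$ (all prior messages plus both parties' inputs) is fixed. After it, the only thing carried forward is the honest output of the call together with $\mathcal{A}$'s view of the call.

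\textbf{Step 3: adjacent hybrids and the main obstacle.} The key step is showing $H_j \approx_c H_{j+1}$. Suppose a distinguisher $\mathcal{Z}$ tells them apart. I would build a distinguisher $\mathcal{D}$ against the simulator $\mathcal{S}^{(i)}_b$ of the $(j{+}1)$-st call, as follows. $\mathcal{D}$ runs the whole execution up to call $j{+}1$ itself, which it can do because it may know all inputs in the non-uniform setting. It then embeds its challenge view, either real or simulated, for that call, completes the rest of the execution, and outputs $\mathcal{Z}$'s verdict.

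The delicate point is that Definition~\ref{def:private_function} must be applied to the \emph{joint} distribution of the view and the honest party's output. The continuation of $\varPi$ depends on the outputs of the call, so indistinguishability of the view alone is not enough. For randomized functionalities such as those producing fresh secret shares, this joint-distribution version is exactly what is required. I would state it explicitly as the notion of security assumed for each $f_i$, and treat deterministic $\mathcal{F}_i$ via the correctness of $f_i$.

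Given this reduction, a triangle inequality over the $q$ hybrids bounds the total advantage by $q$ times a negligible function, which is still negligible. This yields $\mathsf{Sim}^{\mathcal{F}_1,\ldots,\mathcal{F}_m}_{\varPi,\mathcal{S},\mathcal{Z}} \approx_c \mathsf{Exec}_{\varPi^{(f_1,\ldots,f_m)},\mathcal{A},\mathcal{Z}}$.
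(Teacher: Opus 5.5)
The paper gives no proof of this theorem. It is imported from~\cite{joc/Canetti00} and serves only to justify the hybrid-model proof sketches in the appendix, so the relevant comparison is with the cited source. Your argument is correct and is the standard proof of sequential modular composition, specialized to passive adversaries: a hybrid-model adversary runs $\mathcal{A}$ and fills in each ideal call with the corresponding subprotocol simulator, and a hybrid argument then swaps the $q$ invocations to real executions one at a time.

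Your insistence on joint view-and-output indistinguishability is well placed, and it resolves a mismatch in the paper. Definition~\ref{def:private_function} is view-only and stated for deterministic functionalities. The subfunctionalities being composed ($\mathcal{F}_{\mathsf{mul}}$, $\mathcal{F}_{\mathsf{and}}$, $\mathcal{F}_{\mathsf{CPSI}}$, \ldots), however, output fresh random shares. So ``realizes'' has to be read in the real/ideal sense that includes the honest party's output, exactly as you propose.

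One piece of bookkeeping in Step~3 should be tightened. The input to the $(j{+}1)$-st call is a random variable determined by the prefix, whereas Definition~\ref{def:private_function} quantifies over fixed inputs. Fix, by averaging, the index $j$ and the prefix randomness, and hardwire them as non-uniform advice. This puts the challenge on a fixed input. It also yields one distinguisher whose advantage, multiplied by $q$, bounds the total gap. That is what makes your final ``$q$ times a negligible function'' step valid, since a sum of $q(\lambda)$ unrelated negligible functions need not be negligible.
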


\section{\texorpdfstring{$\sys$}{AnonGBDT}: Base Design}
Squirrel~\cite{uss/LuHZWH23}, a recent two-party secure GBDT protocol, presumes a priori alignment:
$\mathbf{X}_{0}[i]$ and $\mathbf{X}_{1}[i]$ are the same sample~$i$, enabling \emph{direct} training on $\mathbf{X}_{0}\,\|\,\mathbf{X}_{1}$.
In practice, independently maintained data only partially overlap;
perfect alignment is uncommon, rendering Squirrel impractical without a preceding private-matching step.

Circuit-PSI is utilized to reconcile $\mathbf{ID}_{0}$ and $\mathbf{ID}_{1}$ before training, without revealing identifiers.
Subsequent training then follows Squirrel's workflow (using primitives in Section~\ref{sect:crypto_tools}), with targeted adaptations to accommodate circuit-PSI's asymmetric hashing and payload interface.
A brief recap of Squirrel is first given, preceding our base design.
We follow their convention of underlining, \eg, $\underline{sk}$ denotes a secret key under the lifted HE parameters such as $\underline{N}$.


\subsection{Revisit of Squirrel}
\label{sect:base_design_revisit}

\noindent
\textbf{Data pre-processing}:
Each party $P_l$ discretizes raw features $\mathbf{X}_l$ into a one-hot bin-indicator matrix $\mathbf{M}_l$ and its exclusive-prefix variant $\widetilde{\mathbf{M}}_l$, as detailed in Section~\ref{sect:gbdt_training}.

\smallskip
\noindent
\textbf{Private sample tracking}:
At each node $k$, an $n$-bit \emph{indicator} $\mathbf{b}^{(k)}$ (initialized to $\mathbf{1}^{n}$ at the root) is maintained, where $\mathbf{b}^{(k)}[i] = 1$ iff sample~$i$ reaches node $k$.
To keep $\mathbf{b}^{(k)}$ private, Squirrel uses ``AND-style'' sharing\footnote{%
This sharing cannot ``perfectly'' hide $\mathbf{b}^{(k)}$:
If a party's share equals~$0$, it immediately learns the corresponding secret bit is $0$.}:
$\mathbf{b}^{(k)} = \mathbf{b}^{(k)}_{0}\wedge\mathbf{b}^{(k)}_{1}$.

If party $P_{c}$\footnote{%
$P_c$: asymmetric steps;
only $P_c$ acts.
$P_l$: symmetric steps;
both act.
}
owns the best split $(z_*, u_*)$, it locally forms the (plaintext) assignment 
$\mathbf{b}^{(k)}_{*} = \widetilde{\mathbf{M}}_{c}[z_* B + u_*]$ and updates
\begin{align}
\label{sample_indicator_update}
\mathbf{b}^{(2k)}_{c} = \mathbf{b}^{(k)}_{c} \wedge \mathbf{b}^{(k)}_{*}, \quad
\mathbf{b}^{(2k + 1)}_{c} = \mathbf{b}^{(k)}_{c} \oplus \mathbf{b}^{(2k)}_{c},
\end{align} 
while $P_{1 - c}$ copies its share without $\mathbf{b}^{(k)}_{*}$.
This AND-sharing update is incompatible with XOR shares produced after circuit-PSI alignment;
Section~\ref{sect:base_OIS} introduces an \emph{oblivious indicator synchronization} (OIS) for XOR-shared $\mathbf{b}^{(k)}$.

\smallskip
\noindent \textbf{Gradient computation}:
Fourier-series sigmoid~\cite{acsac/ZhengZCPTLY23} and other nonlinear functions~\cite{ndss/KeiC25} have been studied for efficient and accurate secure computation.
To obtain arithmetic shares
$\langle \mathbf{g} \rangle^{\mathsf{A}}_l$ and 
$\langle \mathbf{h} \rangle^{\mathsf{A}}_l$ (Eq.~\eqref{eq:gradients_plaintext}),
Squirrel adopts a Fourier-series sigmoid approximation that reduces secure multiplications (hence communication) compared to spline-based methods~\cite{cikm/FangZT0YWWZZ21,eurosp/Amit24}, yet still contributes ${\approx}50\%$ of total traffic.

Specifically, for $x \in [-\mu, \mu]$, it evaluates 
\(
a_{0} + \sum_{j = 1}^{J}{a_{j} \sin{\bigl(2\pi j x / 2^{L + 1}\bigr)}}
\), with $\mu = 5.6$, degree $J = 8$, scaling factor $L = \lceil \log_{2} \mu \rceil$, and maximal error $0.022$;
outside this range, outputs are clipped to $0$ (for $x < -\mu$) or $1$ (for $x > \mu$).
This requires only secure comparisons with $\pm \mu$ and secure multiplications for the series.
Section~\ref{sect:our_sigmoid} optimizes such an approximation by lowering $J$ and reusing primitives in $\pm \mu$~comparisons.

\smallskip
\noindent \textbf{Gradient aggregation}:
We describe the processing for $\mathbf{g}$ (it applies identically to $\mathbf{h}$).
Non-reachable samples are filtered by $\mathbf{g}^{(k)} = \mathbf{b}^{(k)} \odot \mathbf{g}$, where $\odot$ is element-wise multiplication, realized via a secure multiplexer:
\begin{align}
\langle \mathbf{g}^{(k)} \rangle_l^\mathsf{A}
= \mathcal{F}_{\mathsf{mux}}(\langle \mathbf{b}^{(k)} \rangle^\mathsf{B}_l, \langle \mathbf{g} \rangle^\mathsf{A}_l).
\label{eq:filter_gradient}
\end{align}
For the left child $2k$, Squirrel halves the usage of OT based on 
$\mathbf{g}^{(2k)} = \mathbf{b}^{(2k)} \odot \mathbf{g} = (\mathbf{b}^{(k)} \wedge \mathbf{b}^{(k)}_*) \odot \mathbf{g} = \mathbf{b}^{(k)}_* \odot \mathbf{g}^{(k)}$ given non-shared $\mathbf{b}^{(k)}_*$ (vs.~simply using $\mathcal{F}_{\mathsf{mux}}$ over AND-shared $\mathbf{b}^{(2k)}$).
The right child is $ \mathbf{g}^{(2k + 1)} = \mathbf{g}^{(k)} - \mathbf{g}^{(2k)}$.

Filtered gradients are then binned by \emph{pick-then-sum}:
$\widetilde{\mathbf{g}}^{(k, c)}[j] = \sum_{i | \mathbf{M}_c[j, i] = 1}{\mathbf{g}^{(k)}[i]}$ (for $i \in [n], j \in [1, mB]$), or equivalently $\mathbf{M}_c \cdot \mathbf{g}^{(k)}$, which is realized by the secure binary-matrix-vector multiplication $\mathcal{F}_{\mathsf{BinMatVec}}$:
\\
i) convert $\langle \mathbf{g}^{(k)} \rangle^\mathsf{A}_l$ to $\mathsf{RLWE}$-encrypted $\llbracket \mathbf{g}^{(k)}\rrbracket$ by $\mathcal{F}_{\mathsf{A2H}}$,
\\
ii) achieve efficient homomorphic aggregation on extracted LWE ciphertext from $\llbracket \mathbf{g}^{(k)}\rrbracket$, and pack to $\llbracket \widetilde{\mathbf{g}}^{(k)}\rrbracket$ within the RLWE domain via $\mathsf{PackLWEs}$.
which we optimize;
\\
iii) 
convert $\llbracket \widetilde{\mathbf{g}}^{(k)}\rrbracket$ back to shares $\langle \widetilde{\mathbf{g}}^{(k)} \rangle^\mathsf{A}_l$ via $\mathcal{F}_{\mathsf{H2A}}$.

In our anonymous setting, the circuit-PSI sender does not know the correspondence between $\mathbf{M}[:, i]$ and $\mathbf{g}^{(k)} [i]$, making $\mathcal{F}_{\mathsf{BinMatVec}}$ inapplicable as is;
Section~\ref{sect:hist_agg} addresses this by treating $\mathbf{M}$ as an additional payload.

\smallskip
\noindent \textbf{Best-split selection}:
For each candidate $(z, u)$, the split gain $\mathbf{L}_{\mathsf{sp}}$ (Eq.~\eqref{eq:split_gain_plaintext}) is securely evaluated using $\mathcal{F}_{\mathsf{div}}$ and $\mathcal{F}_{\mathsf{mul}}$.
$\mathcal{F}_{\mathsf{argmax}}$ reveals the maximizer
$(z_*, u_*)$ to its owner.

\subsection{Our Base Design}
\label{sect:base}
We begin with the stash-less circuit-PSI~\cite{ccs/RaghuramanR22}, using $P_{0}$ as receiver and $P_{1}$ as sender, to privately reconcile $\mathbf{ID}_0$ and $\mathbf{ID}_1$.
The output is XOR-shared $\mathbf{b}^{(1)} \in \{0, 1\}^{(1 + \varepsilon)n}$ over the receiver's cuckoo-hash order (Section~\ref{sect:pre_cpsi}), replacing the $\mathbf{1}^n$ root indicator in Squirrel.
Subsequent training avoids leaky AND-sharing by maintaining XOR shares.

Encoding the entire training pipeline as one circuit with payloads $(\mathbf{M}_{0}, \widetilde{\mathbf{M}}_{0}, \mathbf{y})$ and $(\mathbf{M}_{1}, \widetilde{\mathbf{M}}_{1})$ is impractical.
We minimize payloads and exploit circuit-PSI's order alignment to enable efficient, order-dependent steps.

\subsubsection{Gradient Histogram Aggregation}
\label{sect:hist_agg}
Order after circuit-PSI aligns elements for the receiver:
the cuckoo-hash permutation fixes a one-to-one correspondence between the intersection positions and local samples.
Concretely:

\smallskip
\noindent \emph{Receiver side $P_{0}$}.
As $\mathbf{M}_{0}$ is aligned (in cuckoo order) with samples' gradient shares from $\mathcal{F}_{\mathsf{sigmoid}}$ and $\mathcal{F}_{\mathsf{mul}}$ (Line~\ref{alg:base_sys:gradient_comp}, Figure~\ref{alg:base_sys}), $P_{0}$ can build histograms with the communication-friendly $\mathcal{F}_{\mathsf{BinMatVec}}$ as in {Squirrel};
only $\mathbf{y}$ is payload.

\smallskip
\noindent \emph{Sender side $P_{1}$}.
The sender's simple hashing creates a one-to-many mapping, so it does not learn which gradient index matches which column of $\mathbf{M}_{1}$.
To restore compatibility, $\mathbf{M}_{1}$ is carried as a circuit-PSI payload, yielding XOR-shared entries $\langle \mathbf{M}_{1} \rangle^{\mathsf{B}}$ aligned to the receiver's order.
Gradients are then aggregated by pick-then-sum using secure multiplexers:
\begin{align}
\label{our_gradient_agg}
\forall (z, u) \colon
 {\textstyle\sum}_i \mathcal{F}_{\mathsf{mux}} \left(\langle \mathbf{M}_1[zB + u, i] \rangle^\mathsf{B}_l, \langle \mathbf{g}[i] \rangle^\mathsf{A}_l \right),
\end{align}
and likewise for $\mathbf{h}$.
This avoids generic secure matrix multiplication (which would invoke $\mathcal{F}_{\mathsf{mul}}$ per entry) while using aligned order: $\langle \mathbf{M}_1[\cdot, i] \rangle^\mathsf{B}_l$, $\langle \mathbf{g}[i] \rangle^\mathsf{A}_l$ refer to the same sample.

Root filtering uses $\mathbf{b}^{(1)}$:
both parties compute $\langle \mathbf{g}^{(1)} \rangle^\mathsf{A}_l = \mathcal{F}_{\mathsf{mux}}(\langle \mathbf{b}^{(1)} \rangle_l^\mathsf{B}, \langle \mathbf{g} \rangle^\mathsf{A}_l)$ and similarly for $\mathbf{h}$ (Line~\ref{alg:base_sys:filter}, Figure~\ref{alg:base_sys}).

\smallskip
\noindent
\textbf{Complexity}: 
For $\mathbf{M}_0$, $\mathcal{O}(Bmn)$ invocations of $\mathcal{F}_{\mathsf{mux}}$, each using two $\mathcal{F}_{\mathsf{OT}}$ calls (that send $\mathcal{O}(\ell)$ bits per OT); the total communication round is $\mathcal{O}(1)$.
For $\mathbf{M}_1$, $\mathcal{F}_{\mathsf{BinMatVec}}$ incurs $\mathcal{O}(n\log{Q})$ bits via $\mathcal{F}_{\mathsf{A2H}}$ and $\mathcal{O}(2Bm\log{Q})$ bits via $\mathcal{F}_{\mathsf{H2A}}$; the communication uses two rounds.



\begin{figure}[!t]
\begin{tcolorbox}[colback = white!5!white, left = 0mm, right = 0mm, top = 0mm, bottom = 0mm,
title = {Functionality $\mathcal{F}_{\mathsf{OIS}}$}]
\begin{algorithmic}[1]
	\PaInput $\langle \widetilde{\mathbf{M}}_1 \rangle^\mathsf{B}_{0}$, $\langle \mathbf{b} \rangle^\mathsf{B}_{0}$, $\widetilde{\mathbf{M}}_0$, and $\{(z_*, u_*) \ / \perp\}$
	\PbInput $\langle \widetilde{\mathbf{M}}_1 \rangle^\mathsf{B}_1$, $\langle \mathbf{b} \rangle^\mathsf{B}_1$, and $\{\perp \ / (z_*, u_*)\}$
	\Ensure $\langle \mathbf{b}^L \rangle^\mathsf{B}_l$ and $\langle \mathbf{b}^R \rangle^\mathsf{B}_l$ for `L'eft and `R'ight children.
\end{algorithmic}
\end{tcolorbox}

\begin{tcolorbox}[colback = white!5!white, left = -1mm, right = -1mm, top = 0mm, bottom = 0mm, title = Protocol $\varPi_{\mathsf{OIS}}$]
\begin{algorithmic}[1]
	\State $P_0$:
	draw a random mask $\langle \mathbf{b}_* \rangle^\mathsf{B}_0 \coloneqq \mathbf{r} \sample \{0, 1\}^{(1 + \epsilon)n}$.\label{alg:ois_naive_cpsi:L1}
	\If{$P_0$ owns $(z_*, u_*)$}
	\State $P_0$:
	set $\mathbf{b}_* = \widetilde{\mathbf{M}}_0[z_* B + u_*]$.
	\label{alg:ois_naive_cpsi:L3}
	\State $P_0$:
	share $\langle \mathbf{b}_* \rangle^\mathsf{B}_1 = \mathbf{b}_* \oplus \mathbf{r}$ with $P_1$.
	\label{alg:ois_naive_cpsi:L4}
	\Else \For{$i \in [(1 + \epsilon) \cdot n]$} \label{alg:ois_naive_cpsi:L5}
		\State $P_0$:
		$\widetilde{\mathbf{r}}[j] = \mathbf{r}[i]\oplus \langle \widetilde{\mathbf{M}}_1 \rangle^\mathsf{B}_0[j, i], \forall j \in [m \cdot B]$.
		\label{alg:ois_naive_cpsi:L6} 
		\State $P_0$ (as OT sender):
		run $\mathcal{F}_{\mathsf{OT}}(\widetilde{\mathbf{r}})$.
		\label{alg:ois_naive_cpsi:L7}
		\State $P_1$ (as OT receiver):
		$\mathbf{r}'[i] \gets \mathcal{F}_{\mathsf{OT}}(z_* B + u_*)$ \label{alg:ois_naive_cpsi:L8}
		\State $P_1$:
		$\langle \mathbf{b}_*[i] \rangle^\mathsf{B}_1 = \mathbf{r}'[i] \oplus \langle \widetilde{\mathbf{M}}_1 \rangle^\mathsf{B}_1[z_* B + u_*, i]$.
		\label{alg:ois_naive_cpsi:L9}
		\EndFor 
	\EndIf
	\State $P_l$ computes $\langle \mathbf{b}^L \rangle^\mathsf{B}_l = \mathcal{F}_{\mathsf{and}}(\langle \mathbf{b}_* \rangle^\mathsf{B}_l, \langle \mathbf{b} \rangle^\mathsf{B}_l)$ and \newline
	$\langle \mathbf{b}^R \rangle^\mathsf{B}_l = \langle \mathbf{b}^L \rangle^\mathsf{B}_l \oplus \langle \mathbf{b} \rangle^\mathsf{B}_l$.
	\label{alg:ois_naive_cpsi:L10}
	\Comment{Secure evaluation of Eq.~\eqref{sample_indicator_update}}
\end{algorithmic}
\end{tcolorbox}
\vspace{-5pt}
\caption{Ideal functionality and protocol of $\mathsf{OIS}$}
\label{alg:ois_naive_cpsi}
\vspace{-3pt}
\end{figure}

\begin{figure*}[t!]
\begin{tcolorbox}[colback = white!5!white, left = 0mm, right = 0mm, top = 0mm, bottom = 0mm, title = Base AnonGBDT $\mathcal{\varPi}^{\mathsf{Base}}_{\mathsf{GBDT}}(\mathbf{Input}_0^{\varPi}\mbox{, }\mathbf{Input}_1^{\varPi}\mbox{, }\mathbf{pp})$]
\begin{algorithmic}[1]
	\PaInput Identifier lists $\mathbf{ID}_0$, raw feature matrix $\mathbf{X}_0$, and labels $\mathbf{y}$ (Initially, $\tilde{\mathbf{y}} = \mathbf{0}$).
	\PbInput Identifier lists $\mathbf{ID}_1$ and raw feature matrix $\mathbf{X}_1$, and $\underline{\sk}_1$ for $\mathcal{F}_{\mathsf{A2H}}$, $\sk_1$ for $\mathcal{F}_{\mathsf{H2A}}$.
	\PP $\mathbf{pp} = \{T, D, B\}$, and lifting key $\mathsf{LK}_{\underline{\sk}_1\rightarrow\sk_1}$ for $\mathsf{KeySwitch}$.
	\Ensure $\mathsf{Output}^\Pi_0$ for $P_0$ and $\mathsf{Output}^\Pi_1$ for $P_1$.
	\State $P_l$ discretizes $\mathbf{X}_l$ to $\mathbf{M}_l \in \{0, 1\}^{m \cdot B \times n}$, which is aggregated to $\widetilde{\mathbf{M}}_l$.
	\Comment{Data pre-processing as in {Squirrel}}
	\State {\color{blue}$P_l$:
	$\langle \mathbf{b}^{(1)} \rangle_l^\mathsf{B}, \langle \mathbf{y} \rangle_l^\mathsf{A}, \langle \mathbf{M}_1 \rangle_l^\mathsf{B}, \langle \widetilde{\mathbf{M}}_1 \rangle_l^\mathsf{B} \leftarrow 
	\mathcal{F}_{\mathsf{CPSI}}(\mathbf{ID}_0; \mathbf{ID}_1)$, with $P_0$'s payload $\mathbf{y}$ and $P_1$'s payloads $(\mathbf{M}_1, \widetilde{\mathbf{M}}_1)$.}
	\For{ tree $t \in \mathcal{T}_t$}
	\State $P_l$ computes gradients $\langle \mathbf{g} \rangle^\mathsf{A}_l, \langle \mathbf{h} \rangle^\mathsf{A}_l$ using {\color{blue}{$\mathcal{F}_{\mathsf{sigmoid}}(\langle \tilde{\mathbf{y}} \rangle^\mathsf{A}_l)$}} and $\mathcal{F}_{\mathsf{mul}}$ based on Eq.~\eqref{eq:gradients_plaintext}.
	\Comment{Per-sample gradient}
	\label{alg:base_sys:gradient_comp}
	\State $P_l$ filters gradients $\langle \mathbf{g}^{(1)} \rangle^\mathsf{A}_l, \langle \mathbf{h}^{(1)} \rangle^\mathsf{A}_l$ at the root using $\mathcal{F}_{\mathsf{mux}}$ given $\langle \mathbf{b}^{(1)} \rangle_l^\mathsf{B}$ based on Eq.~\eqref{eq:filter_gradient}.
	\label{alg:base_sys:filter}
	\For{internal nodes $k \in \{1, 2, \ldots, 2^{D - 1} - 1\}$} 
		\If{$k$ is $1$ or even} \Comment{Per-bin gradient}
			\State \parbox[t]{\dimexpr\linewidth - \leftmargin - \labelsep - \labelwidth}{$P_l$ aggregates gradients over $\mathbf{M}_0$ (as in Squirrel):\\ 
			$\langle \widetilde{\mathbf{g}}^{(k, 0)} \rangle^\mathsf{A}_l || \langle \widetilde{\mathbf{h}}^{(k, 0)} \rangle^\mathsf{A}_l = \mathcal{F}_{\mathsf{BinMatVec}}( 
			\{ \langle \mathbf{g}^{(k)} \rangle^\mathsf{A}_0 || \langle \mathbf{h}^{(k)} \rangle^\mathsf{A}_0, \mathbf{M}_0\},
			\{\langle \mathbf{g}^{(k)} \rangle^\mathsf{A}_1 || \langle \mathbf{h}^{(k)} \rangle^\mathsf{A}_1, \sk_1 \})$.\strut}
		\State \parbox[t]{\dimexpr\linewidth - \leftmargin - \labelsep - \labelwidth}{\color{blue}{$P_l$ aggregates gradients over $\langle \mathbf{M}_1 \rangle^\mathsf{B}_l$ using our $\mathcal{F}_{\mathsf{mux}}$-based approach (see Eq.~\eqref{our_gradient_agg}):} \\
			\color{blue}{$\forall j \in [mB] \colon \langle \widetilde{\mathbf{g}}^{(k, 1)}[j] \rangle^\mathsf{A}_l (/\langle \widetilde{\mathbf{h}}^{(k, 1)}[j] \rangle^\mathsf{A}_l) = \sum_i \mathcal{F}_{\mathsf{mux}}(\langle \mathbf{M}_1 [j, i] \rangle^\mathsf{B}_l, \langle \mathbf{g}^{(k)} [i] \rangle^\mathsf{A}_l (/\langle \mathbf{h}^{(k)} [i] \rangle^\mathsf{A}_l))$
			.}\strut}
			\State $ P_l$ locally concatenates the shares as $\langle \widetilde{\mathbf{g}}^{(k)} \rangle^\mathsf{A}_l = \langle \widetilde{\mathbf{g}}^{(k, 0)} \rangle^\mathsf{A}_l \concat \langle \widetilde{\mathbf{g}}^{(k, 1)} \rangle^\mathsf{A}_l$, $\langle \widetilde{\mathbf{h}}^{(k)} \rangle^\mathsf{A}_l = \langle \widetilde{\mathbf{h}}^{(k, 0)} \rangle^\mathsf{A}_l \concat \langle \widetilde{\mathbf{h}}^{(k, 1)} \rangle^\mathsf{A}_l$.
			\Else~$P_l$ gets $\langle \widetilde{\mathbf{g}}^{(k)} \rangle^\mathsf{A}_l = \langle \widetilde{\mathbf{g}}^{(k / 2)} \rangle^\mathsf{A}_l - \langle \widetilde{\mathbf{g}}^{(k - 1)} \rangle^\mathsf{A}_l, \langle \widetilde{\mathbf{h}}^{(k)} \rangle^\mathsf{A}_l = \langle \widetilde{\mathbf{h}}^{(k / 2)} \rangle^\mathsf{A}_l - \langle \widetilde{\mathbf{h}}^{(k - 1)} \rangle^\mathsf{A}_l$ (as in {Squirrel}).
		\EndIf 
		\State $P_l$ computes the splitting scores $\langle \mathbf{L}_{\mathsf{sp}}^{(k)} \rangle^\mathsf{A}_l$ using $\mathcal{F}_{\mathsf{mul}}$ and $\mathcal{F}_{\mathsf{div}}$ for all $(z, u)$ based on Eq.~\eqref{eq:split_gain_plaintext}.
		\State $P_l$ gets $\langle z_*^{(k)} \rangle^\mathsf{A}_l, \langle u_*^{(k)} \rangle^\mathsf{A}_l \gets \mathcal{F}_{\mathsf{argmax}}(\langle \mathbf{L}_{\mathsf{sp}}^{(k)} \rangle^\mathsf{A}_l)$.
		\Comment{Best split selection}
		\State Open a bit $c \gets \mathcal{F}_{\mathsf{greater}}(\langle z_*^{(k)} \rangle^\mathsf{A}_l, m - 1)$ and reveal $(z_*^{(k)}, u_*^{(k)})$ to $P_c$, who writes it to $\mathsf{Output}_c^{\Pi}$.
	\State {\color{blue}$P_l$ updates $\langle \mathbf{b}^{(2k)} \rangle^\mathsf{B}_l, \langle \mathbf{b}^{(2k + 1)} \rangle^\mathsf{B}_l \gets \mathcal{F}_{\mathsf{OIS}}(\{ \langle \widetilde{\mathbf{M}}_1 \rangle^\mathsf{B}_{0}, \langle \mathbf{b}^{(k)} \rangle^\mathsf{B}_{0}, \widetilde{\mathbf{M}}_0\};
	\{\langle \widetilde{\mathbf{M}}_1 \rangle^\mathsf{B}_1, \langle \mathbf{b}^{(k)} \rangle^\mathsf{B}_1\})$ with $P_c$'s $(z_*^{(k)}, u_*^{(k)})$}
		\State $P_l$ updates left-child gradients:
		$\langle \mathbf{g}^{(2k)} \rangle^\mathsf{A}_l = \mathcal{F}_{\mathsf{mux}}( \langle \mathbf{b}^{(2k)} \rangle^\mathsf{B}_l, \langle \mathbf{g}^{(k)} \rangle^\mathsf{A}_l), \langle \mathbf{h}^{(2k)} \rangle^\mathsf{A}_l = \mathcal{F}_{\mathsf{mux}}( \langle \mathbf{b}^{(2k)} \rangle^\mathsf{B}_l, \langle \mathbf{h}^{(k)} \rangle^\mathsf{A}_l)$.
		\State $P_l$ updates right-child gradients:
		$\langle \mathbf{g}^{(2k + 1)} \rangle^\mathsf{A}_l = \langle \mathbf{g}^{(k)} \rangle^\mathsf{A}_l - \langle \mathbf{g}^{(2k)} \rangle^\mathsf{A}_l, \langle \mathbf{h}^{(2k + 1)} \rangle^\mathsf{A}_l = \langle \mathbf{h}^{(k)} \rangle^\mathsf{A}_l - \langle \mathbf{h}^{(2k)} \rangle^\mathsf{A}_l$.
	\EndFor
		\State $P_l$ computes leaf weights $\langle \mathbf{w} \rangle^\mathsf{A}_l$ using $\mathcal{F}_{\mathsf{div}}$ based on Eq.~\eqref{eq:weight_plaintext} and writes them in $\mathsf{Output}_l^{\Pi}$.
		\State $P_l$ updates $\langle \tilde{\mathbf{y}}[i] \rangle^\mathsf{A}_l$ by accumulating the result of $\mathcal{F}_{\mathsf{mux}}(\langle \mathbf{b}^{(k)}[i] \rangle^\mathsf{B}_l, \langle \mathbf{w}[k] \rangle^\mathsf{A}_l)$ ($\forall i$) over all leaves $k$.
	\EndFor
\end{algorithmic}
\end{tcolorbox}
\vspace{-5pt}
\caption{Base \sys (with {\color{blue}$\mathcal{F}_{\mathsf{mux}}$-based gradient histogram aggregation} and {\color{blue}new $\mathcal{F}_{\mathsf{OIS}}$ for sample indicator updates})}
\label{alg:base_sys}
\vspace{-10pt}
\end{figure*}

\subsubsection{Oblivious Indicator Synchronization}
\label{sect:base_OIS}
Sample indicators are updated from the chosen split via the exclusive-prefix matrix $\widetilde{\mathbf{M}}$.
As a circuit-PSI receiver, $P_{0}$ knows which column of $\widetilde{\mathbf{M}}_0$\footnote{%
It is resized to $mB \times (1 + \varepsilon)n$ after circuit-PSI cuckoo hashing.}
matches the correct indicator index.
The sender $P_{1}$ does not;
it needs to carry $\widetilde{\mathbf{M}}_{1}$ as a payload.

%

Because both $\widetilde{\mathbf{M}}_{1}$ and indicators
$\mathbf{b}^{(k)}$ are XOR-shared, the plaintext (local) update Eq.~\eqref{sample_indicator_update} is inapplicable.
To this end, we introduce an OT-based \emph{oblivious indicator synchronization} (OIS)
protocol, whose functionality
$\mathcal{F}_{\mathsf{OIS}}$ appears in
Figure~\ref{alg:ois_naive_cpsi}:
If $P_0$ owns $(z_*, u_*)$, it secret-shares $\mathbf{b}_* = \widetilde{\mathbf{M}}_0[z_* B + u_*]$ with $P_1$ (Lines~\ref{alg:ois_naive_cpsi:L1}--\ref{alg:ois_naive_cpsi:L4}).
Otherwise, $P_0$ (as OT sender) masks its share of $\widetilde{\mathbf{M}}_1$ and, for \emph{each} row, the two parties execute a $1$-out-of-$mB$ OT such that $P_{1}$ (as OT receiver) obtains correct $\langle \mathbf{b}_{*} \rangle_1^{\mathsf{B}}$ given its private choice $(z_*, u_*)$.
In either case, the left-child indicator is then computed with secure AND $\mathcal{F}_{\mathsf{and}}$, and the right-child indicator by bitwise XOR~(Section~\ref{sect:base_design_revisit}).

\smallskip
\noindent
\textbf{Complexity}:
If $P_0$ holds the best split, it sends $\mathcal{O}(n\ell)$ bits and the round is $1$;
if $P_1$ holds the best split, they invoke $n$ $1$-out-of-$mB$ $\mathcal{F}_{\mathsf{OT}}$ (each sends $\mathcal{O}(mB\ell)$ bits) with total communication round of $\mathcal{O}(\log{n})$), and $n$ $\mathcal{F}_{\mathsf{and}}$ (each sends $O(\ell)$ bits) with total communication round of $\mathcal{O}(\log{n} + 2)$.

\subsubsection{Improved Sigmoid Approximation}
\label{sect:our_sigmoid}
We replace the integral-derived Fourier coefficients with least-squares fitting to estimate $a_{j \in [0, J]}$, allowing us to reduce the order to $J = 3$ while achieving better accuracy and efficiency.
We retain $\mu = 5.6$ and evaluate only with $\mathcal{F}_{\mathsf{mul}}$, using an implementation optimized for trigonometric inputs within $[-1, 1]$;
see Appendix~\ref{apdx:sigmoid_mul}.
The resulting average/maximum errors are $0.002$/$0.015$ (vs.\ $0.022$ in Squirrel);
see Appendix~\ref{apdx:fourier}.

We also streamline secure comparisons with $\pm \mu$.
Under fixed-point representation with precision $2^f$, the two thresholds share their $f - \delta$ least-significant bits;
hence, within $\mathcal{F}_{\mathsf{greater}}$, we reuse $\mathcal{F}_{\mathsf{OT}}$ and $\mathcal{F}_{\mathsf{and}}$ on the shared bits.
Overall, our new sigmoid costs \textbf{33\%} of runtime and \textbf{16\%} of communication of the {Squirrel} variant.

Figure~\ref{alg:base_sys} presents the base \sys workflow, with differences from {Squirrel} highlighted in blue.
Two substitutions enable anonymous training under circuit-PSI alignment: 
i) gradient-histogram aggregation at the sender via $\mathcal{F}_{\mathsf{mux}}$ over payload-shared $\mathbf{M}_{1}$ (Section~\ref{sect:hist_agg});
ii) indicator synchronization via $\mathcal{F}_{\mathsf{OIS}}$ (Section~\ref{sect:base_OIS}), ensuring XOR-shared indicators remain aligned and usable by both parties.


\section{\texorpdfstring{$\sys$}{AnonGBDT}: OT-light, Symmetric Alignment}
The base design conceals intersection membership, cardinality, and alignment information throughout training,~offering \emph{stronger} privacy than prior work~\cite{pvldb/WuCXCO20,cikm/FangZT0YWWZZ21,uss/LuHZWH23}.
Its communication remains substantial: 
i) histogram construction via $\mathcal{F}_{\mathsf{mux}}$
requires~$O(mnB)$ many $1$-out-of-$2$ OTs, and
ii) $\mathcal{F}_{\mathsf{OIS}}$ runs one $1$-out-of-$mB$ per row over $(1 + \varepsilon) n$ rows (with $\varepsilon$ the cuckoo-hash parameter).
The histogram term dominates.
We eliminate it via a dual-circuit-PSI framework.

\subsection{Our \emph{Dual}-circuit-PSI Framework}
\label{sec:dual_cpsi}
Circuit-PSI is \emph{asymmetric}: only the receiver learns the (bucketed) input-output correspondence.
We ``symmetrize'' this by running two circuit-PSI instances \emph{in parallel} with swapped roles, so each party is a receiver once and learns the correspondence for exactly one instance.

This allows both parties to keep $\mathbf{M}_0$ and $\mathbf{M}_1$ locally (no payload for $\mathbf{M}_1$), and to aggregate gradients on~non-shared matrices solely with $\mathcal{F}_{\mathsf{BinMatVec}}$ (as in {Squirrel}), removing the many calls to $\mathcal{F}_{\mathsf{mux}}$.
We further accelerate $\mathcal{F}_{\mathsf{BinMatVec}}$ by replacing its subroutines, $\mathsf{PackLWEs}$-then-$\mathsf{LWEDimLift}$, with our $\mathsf{FastPackLWEs}$ (see Section~\ref{sec:fast_pack_rlwe}).

Figure~\ref{fig:advacned_sys_oveivew} overviews the dual-circuit-PSI.
Superscripts $(k, 0)$ and $(k, 1)$ respectively denote the first and second instances at node~$k$.
In instance~$0$, $P_{0}$ (receiver) inputs $\mathbf{ID}_{0}$ with payload $\mathbf{y}$, yielding $\langle \mathbf{b}^{(1, 0)} \rangle^\mathsf{B}$ and
$\langle \mathbf{y}^{0} \rangle^{\mathsf{A}}$.
Roles are reversed in instance~$1$, producing $\langle \mathbf{b}^{(1, 1)} \rangle^\mathsf{B}$ and
$\langle \mathbf{y}^{1} \rangle^{\mathsf{A}}$.

Each instance applies its own cuckoo permutation, so \emph{order-dependent} steps must be done per instance, \eg, gradient computation uses both $\langle \mathbf{y}^{0} \rangle^{\mathsf{A}}$ and $\langle \mathbf{y}^{1} \rangle^{\mathsf{A}}$, whereas (order-independent) aggregation runs once locally over $\mathbf{M}_0$ or $\mathbf{M}_1$.

\begin{figure*}[!t]
	\centering
	\includegraphics[width = 0.82\linewidth]{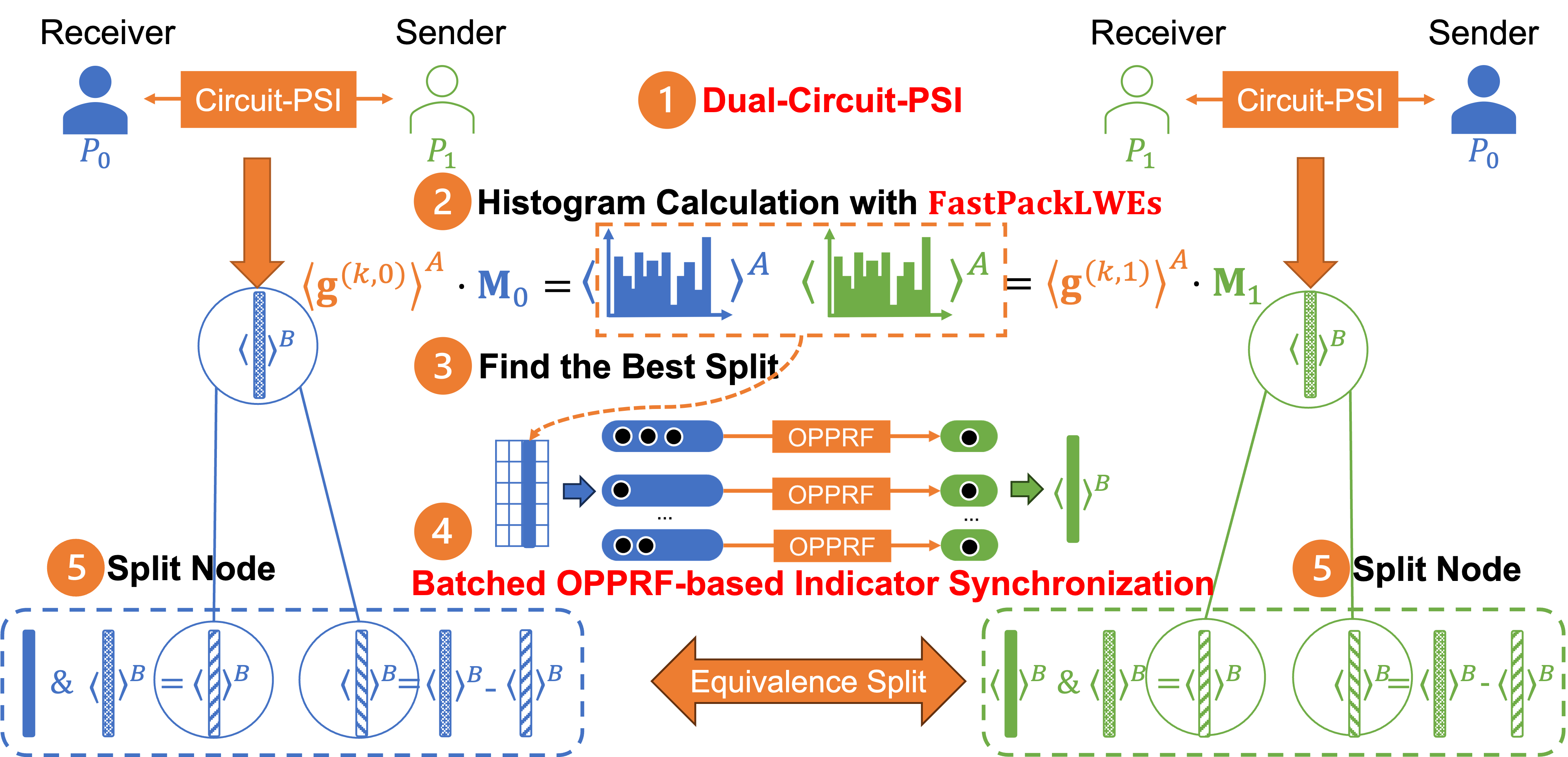}
	\vspace{-5pt}
	\caption{Overview of $\sysfinal$
	(with our key contributions highlighted in {\color{red}red})}
	\label{fig:advacned_sys_oveivew}
\vspace{-10pt}
\end{figure*}




\smallskip
\noindent \textbf{Base OIS limitation}:
With dual-circuit-PSI, updating the two XOR-shared indicators $\mathbf{b}^{(k, 0)}$ and $\mathbf{b}^{(k, 1)}$ is non-trivial.
If the split owner $P_c$ is the \emph{receiver} for an instance, it can update as in base OIS (Lines~\ref{alg:ois_naive_cpsi:L1}--\ref{alg:ois_naive_cpsi:L4} and~\ref{alg:ois_naive_cpsi:L10}, Figure~\ref{alg:ois_naive_cpsi}).
If it is the sender for the other instance, it lacks the share of $\widetilde{\mathbf{M}}_{1 - c}$ (no longer a payload of $P_{1 - c}$), so base OIS (Lines~\ref{alg:ois_naive_cpsi:L5}--\ref{alg:ois_naive_cpsi:L10}) fails.
We resolve this with a new OPPRF-based OIS (Section~\ref{sec:opprf_indicator_sync}), which supports level-wise \emph{batching} for efficiency.

With these changes and engineering optimizations (Section~\ref{sect:adv_optimizations}), 
$\sysfinal$ significantly reduces runtime and communication versus the base one (Section~\ref{sect:exp}).

\begin{figure}[!t]
\label{fig:pack_rlwe}
	\centering
	\includegraphics[width = 0.9 \linewidth]{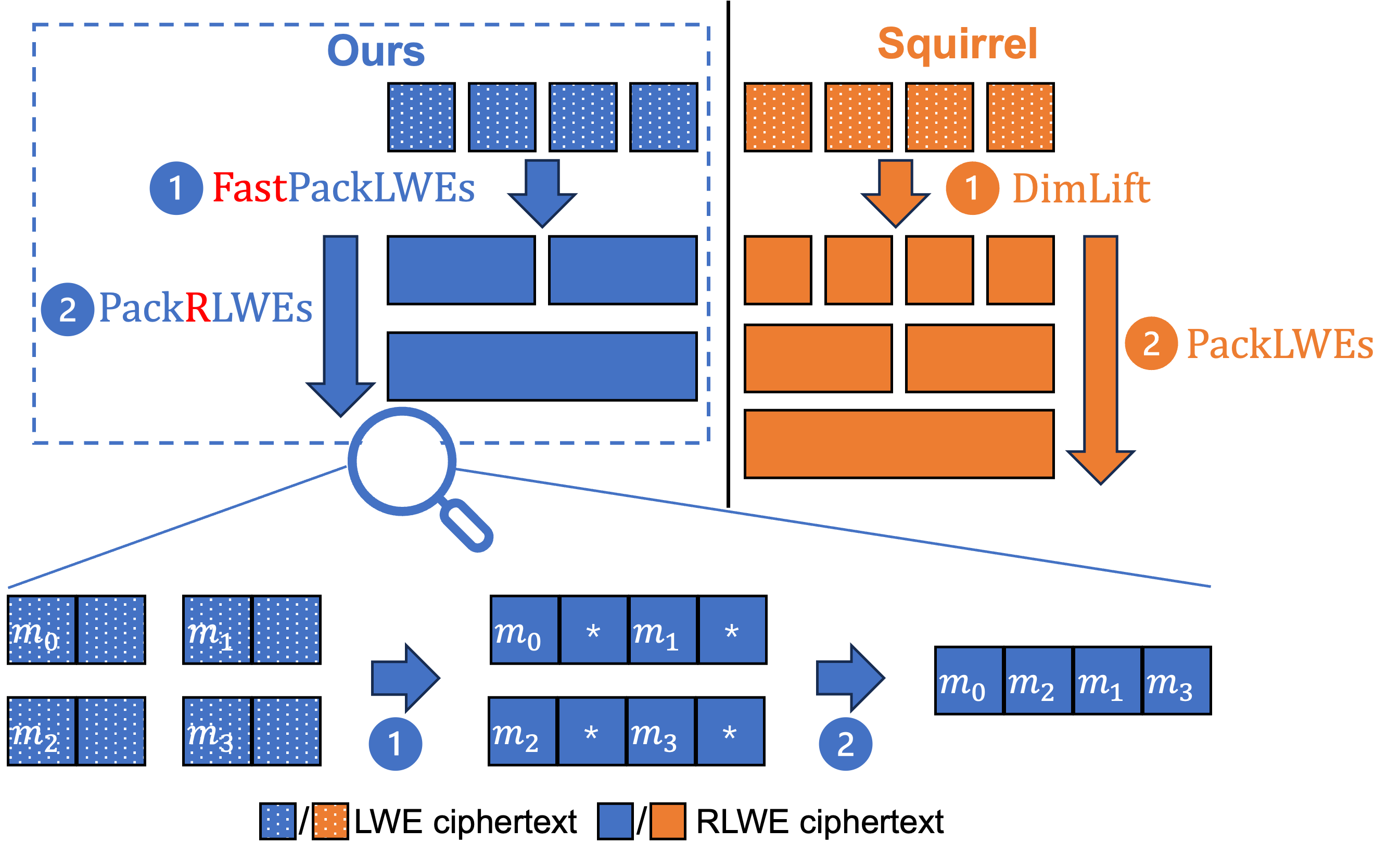}
	\caption{Example of $\mathsf{FastPackLWEs}$ ($\underline{n} = 4$, $N = 4$, and $\underline{N} = 2$):
	The adjacent LWE ciphertexts are first packed into an RLWE ciphertext via Algorithm~\ref{alg:fast_pack}.
	The RLWE ciphertexts are then recursively packed into a single RLWE ciphertext via Algorithm~\ref{alg:pack_rlwe}.}
 \label{fig:fast_pack}
\vspace{-5pt}
\end{figure}

\subsection{Faster LWE Ciphertext Packing}
\label{sec:fast_pack_rlwe}

The secure matrix-vector multiplication~$\mathcal{F}_{\mathsf{BinMatVec}}$ uses RLWE encryption for computational efficiency and outputs \underline{$n$} LWE ciphertexts, which incur large communication because $\mathcal{F}_{\mathsf{H2A}}$ must convert these LWE ciphertextsto shares.
Communication can be cut by packing several LWE ciphertexts with a smaller dimension ($N$) into a single RLWE ciphertext with a larger dimension ($\underline{N}$).
{Squirrel} does so in two stages:
$\mathsf{LWEDimLift}$ (lift dimension $\underline{N}\! \to \!N$) followed by $\mathsf{PackLWEs}$ (recursive merging).


We replace this flow with
$\mathsf{FastPackLWEs}$ followed by
$\mathsf{Pack\textcolor{blue}{R}LWEs}$
(Algorithms~\ref{alg:fast_pack} and~\ref{alg:pack_rlwe}).
Figure~\ref{alg:binmatvec} shows the adapted version.
We also modify $\mathcal{F}_{\mathsf{BinMatVec}}$ from Squirrel with changes boxed (Line~\ref{alg:binmatvec:L7}).

Let $\{\mathsf{LWE}_{\mathbf{s}}(m_{i})\}_{0 \le i < \underline{n}}$,
$\underline{n} = 2^{\tau}$, be encrypted under secret
$\mathbf{s}$ (the vector representation of 
$\mathcal{F}_{\mathsf{A2H}}$'s $\underline{\sk}$).
For each adjacent pair
\(
\mathsf{LWE}_{\mathbf{s}}(m_{2j}) = (\mathbf{a}_{1}, b_{1}),\;
\mathsf{LWE}_{\mathbf{s}}(m_{2j + 1}) = (\mathbf{a}_{2}, b_{2})
\),
we can manipulate them as a new RLWE ciphertext pair:
\begin{align}
	\widehat{a} = \widehat{a}_1 + \widehat{a}_2 \cdot X^{N / 2}, 
	\;\;
	\widehat{b} = b_1 + b_2 \cdot X^{N / 2},
\label{eq:fast_pack}
\end{align}
under a new secret
$\widehat{s} = \underline{\sk}[0] - \sum_{k = 1}^{\underline{N} - 1}\underline{\sk}[k]\cdot X^{N - k}$, where $\widehat{a}_i = \sum_{k = 0}^{\underline{N} - 1}\mathbf{a}_i[k] \cdot X^k$ ($i \in \{1, 2\}$).

The $0$- and $(N / 2)$-th coefficients, when decrypting $(\widehat{a}, \widehat{b})$ by $\widehat{s}$, are exactly $m_{2j}$ and $m_{2j + 1}$, \ie, the two messages are packed within one RLWE ciphertext without lifting their dimensions individually.
As $\widehat{s}$ is insecure (half its coefficients are zero), we immediately switch it to a~valid~RLWE secret key $\sk$ with one
$\mathsf{KeySwitch}$ call (in Line~\ref{alg:fast_pack:L3}).
By merging all adjacent ones, we thus obtain $\underline{n} / 2$ RLWE ciphertexts.

To integrate this idea, we remove the first recursion of $\mathsf{PackLWEs}$, leading to $\mathsf{Pack\textcolor{blue}{R}LWEs}$
(Algorithm~\ref{alg:pack_rlwe}).
Also, at Line~\ref{alg:pack_rlwe:L5}, 
we change $t$ from $N / 2^{\tau}$ to $N / 2^{\tau + 1}$.

\smallskip
\noindent \textbf{Complexity}:
$\mathsf{FastPackLWEs}$ processes $\underline{n} / 2$ pairs, cutting the cost of
$\mathsf{LWEDimLift}$~\cite{uss/LuHZWH23} in half.
$\mathsf{Pack\textcolor{blue}{R}LWEs}$ now requires only $\tau - 1$ recursions, again halving the workload compared with $\mathsf{PackLWEs}$~\cite{uss/LuHZWH23}.
Consequently, the overall ciphertext-packing time of
$\mathcal{F}_{\mathsf{BinMatVec}}$ is reduced by \textbf{50\%}.

\begin{figure}[htb]
\begin{tcolorbox}[colback = white!5!white, left = 0mm, right = 0mm, top = 0mm, bottom = 0mm, title = Functionality $\mathcal{F}_{\mathsf{BinMatVec}}$]
\begin{algorithmic}[1]
	\PcInput gradients share $\langle \mathbf{g} \rangle^\mathsf{B}_{c}$, binary feature representation $\mathbf{M}_c \in \{0, 1\}^{B}$
	\PdInput gradients share $\langle \mathbf{g} \rangle^\mathsf{A}_{1 - c}$, RLWE $\underline{\sk}_{1 - c}$ for $\mathcal{F}_{\mathsf{A2H}}$, and $\sk_{1 - c}$ for $\mathcal{F}_{\mathsf{H2A}}$
	\PP $\mathsf{KeySwitch}$ lifting key $\mathsf{LK}$
	\Ensure $\langle \mathbf{M}_c \cdot \mathbf{g} \rangle^\mathsf{A}_l$ to $P_l$.
\end{algorithmic}
\end{tcolorbox}

\begin{tcolorbox}[colback = white!5!white, left = 0mm, right = 0mm, top = 0mm, bottom = 0mm, title = Protocol $\varPi_{\mathsf{BinMatVec}}$]
	\begin{algorithmic}[1]
	\State Jointly run $\mathcal{F}_{\mathsf{A2H}}$, $P_{1 - c}$ inputs $\langle \mathbf{g} \rangle^\mathsf{A}_{1 - c}$, $\underline{sk}_{1 - c}$, and $P_{c}$ inputs $\langle \mathbf{g} \rangle^\mathsf{B}_{c}$;
	$P_c$ gets $\llbracket \mathbf{g} \rrbracket$, and $P_{1 - c}$ gets nothing.
		\State $P_c$ initializes a size-$(B \cdot m)$ array of LWE encryption of $0$, denoted by $\mathbf{ct}[j]$ for $j \in [B \cdot m]$.
		\For{all position $(j, i)$ such that $\mathbf{M}[j, i] = 1$}
				\State \parbox[t]{\dimexpr\linewidth - \leftmargin}{
			$P_c$ extracts LWE $\bar{\mathbf{ct}}[j, i]$ from $\llbracket \mathbf{g} \rrbracket$ that decrypts to $\mathbf{g}[i]$ under $P_{1 - c}$'s secret key.
				\strut}
				\State $P_c$:
				$\mathbf{ct}[j] = \mathbf{ct}[j] + \bar{\mathbf{ct}}[j, i]$ \Comment{pick-then-sum}
		\EndFor 
		\State \fbox{\!$P_c$: $\llbracket \mathbf{\mathbf{M}_c \cdot \mathbf{g}} \rrbracket \!\leftarrow \mathsf{FastPackLWEs}(\{\mathbf{ct}[j]\}_{j \in [m \cdot B)}, \mathsf{LK})$\!\label{alg:binmatvec:L7}} 
	\State Jointly invoke $\mathcal{F}_{\mathsf{H2A}}$, $P_c$ inputs $\llbracket \mathbf{\mathbf{M}_c \cdot \mathbf{g}} \rrbracket$ while
	$P_{1 - c}$ inputs $\sk_{1 - c}$.
		Both parties return the resulting arithmetic share $\langle \mathbf{M}_c \cdot \mathbf{g} \rangle^\mathsf{A}$.
	\end{algorithmic}
\end{tcolorbox}
\vspace{-5pt}
\caption{Functionality and protocol of $\mathsf{BinMatVec}$}
\vspace{-3pt}
\label{alg:binmatvec}
\end{figure}

\begin{algorithm}[t!]
\caption{$\mathsf{FastPackLWEs}$ (Fast LWE Packing)}
\label{alg:fast_pack}
\begin{algorithmic}[1]
	\Require LWE ciphertexts $\{\mathsf{LWE}_{\mathbf{s}}(m_i)\}_{0 \leq i< \underline{n}}$ (for~$\underline{n} = 2^{\tau} < N$) and lifting key $\mathsf{LK}_{\underline{\sk} \rightarrow \sk}$.
	\Ensure An RLWE ciphertext $\widehat{ct}$.
	\For{$i \in [0, \underline{n})$ with step-size as $2$}
			\State \parbox[t]{\dimexpr\linewidth - \leftmargin}{Build an RLWE ciphertext $\widehat{ct^*}_j = (\widehat{a}_j, \widehat{b}_j)$, where $j = i / 2$ and $\widehat{a}_j, \widehat{b}_j$ are given as Eq.~\eqref{eq:fast_pack}.
			\strut} 
			\State \parbox[t]{\dimexpr\linewidth - \leftmargin}{$\widehat{ct}_j = \mathsf{KeySwitch}_{\underline{\sk} \rightarrow \sk}(\widehat{ct^*}_j)$ based on $\mathsf{LK}_{\underline{\sk} \rightarrow \sk}$.
			\label{alg:fast_pack:L3}
			\strut} 
	\EndFor
	\State \textbf{return} $\widehat{ct} \gets \mathsf{Pack\textcolor{blue}{R}LWEs}(\{\widehat{ct}_{j}\})$.
\end{algorithmic}
\end{algorithm}

\begin{algorithm}[t!]
\caption{$\mathsf{Pack\textcolor{blue}{R}LWEs}$ (RLWE Ciphertext Packing)}
\label{alg:pack_rlwe}
\begin{algorithmic}[1]
	\Require RLWE ciphertexts $\{\widehat{ct}_j\}$ for some $j \in [2^\tau]$.
	\Ensure An RLWE ciphertext $\widehat{ct}$.
	\If {$\tau = 0$}
	\State \textbf{return} $\widehat{ct} \gets \widehat{ct}_0$.
	\EndIf
	\State $\widehat{ct}_{\even} \leftarrow \mathsf{Pack\textcolor{blue}{R}LWEs}(\{\widehat{ct}_{2j}|j \in [0, 2^{\tau - 1} - 1]\})$.
	\State $\widehat{ct}_{\odd} \leftarrow \mathsf{Pack\textcolor{blue}{R}LWEs}(\{\widehat{ct}_{2j + 1}|j \in[0, 2^{\tau - 1} - 1]\})$.
	\State \parbox[t]{\dimexpr\linewidth - \leftmargin}{\textbf{return} $\widehat{ct} \gets (\widehat{ct}_{\even} + X^{t} \cdot \widehat{ct}_{\odd}) + \mathsf{EvalAuto}(\widehat{ct}_{\even} - X^{t} \cdot \widehat{ct}_{\odd}, N + t)$, where $t = N / 2^{\tau + 1}$.
	\strut} \label{alg:pack_rlwe:L5}
\end{algorithmic}
\end{algorithm}

\subsection{OPPRF-based OIS}
\label{sec:opprf_indicator_sync}

\begin{figure}[t!]
\begin{tcolorbox}[colback = white!5!white, left = 0mm, right = 0mm, top = 0mm, bottom = 0mm, title = Functionality $\mathcal{F}_{\mathsf{OOIS}}$]
\begin{algorithmic}[1]
	\PcInput $\widetilde{\mathbf{M}}_c$, $\langle \mathbf{b}^{(0)} \rangle^\mathsf{B}_c$, $\langle \mathbf{b}^{(1)} \rangle^\mathsf{B}_c$, $(z_*, u_*)$, and
	\newline
	simple hash table $\mathbf{T}_{\mathsf{sh}}^{1 - c}$.
	\PdInput $\langle \mathbf{b}^{(0)} \rangle^\mathsf{B}_{1 - c}$, $\langle \mathbf{b}^{(1)} \rangle^\mathsf{B}_{1 - c}$, and \newline
	cuckoo hash table $\mathbf{T}_{\mathsf{ch}}^{1 - c}$.
	\Ensure $\langle \mathbf{b}^{(L, 0)}\rangle^\mathsf{B}, \langle \mathbf{b}^{(R, 0)}\rangle^\mathsf{B}, \langle \mathbf{b}^{(L, 1)}\rangle^\mathsf{B}, \langle \mathbf{b}^{(R, 1)}\rangle^\mathsf{B}$ for the `L'eft and `R'ight children of the two instances
\end{algorithmic}
\end{tcolorbox}

\begin{tcolorbox}[colback = white!5!white, left = 0mm, right = 0mm, top = 0mm, bottom = 0mm, title = Protocol $\varPi_{\mathsf{OOIS}}$]
\begin{algorithmic}[1]
	\State $P_c$ gets a random mask $ \langle \mathbf{b}_*\rangle^\mathsf{B}_c \coloneqq \mathbf{r}^c \sample \{0, 1\}^{(1 + \epsilon)n}$.\label{alg:oois:L1}
	\State $P_c$ sets $\mathbf{b}_*^{(c)} = \widetilde{\mathbf{M}}_c[z_*B + u_*]$.
	\State $P_c$ shares $\langle \mathbf{b}_*^{(c)}\rangle^\mathsf{B}_{1 - c} \gets \mathbf{b}_*^{(c)} \oplus \mathbf{r}^c$ with $P_{1 - c}$.
	\State $P_l$ derives $\langle \mathbf{b}^{(L, c)} \rangle^\mathsf{B}_l = \mathcal{F}_{\mathsf{and}}(\langle \mathbf{b}^{(c)}_*\rangle^\mathsf{B}_l, \langle \mathbf{b}^{(c)} \rangle^\mathsf{B}_l)$ \newline and
	$\langle \mathbf{b}^{(R, c)} \rangle^\mathsf{B}_l = \langle \mathbf{b}^{(L, c)}\rangle^\mathsf{B}_l \oplus \langle \mathbf{b}^{(c)} \rangle^\mathsf{B}_l$.
		\label{alg:oois:L4}
	\State $P_c$ ``reorders'' $\mathbf{b}_*^{(c)}$ to $\mathbf{b}'_*$ to align with $\mathbf{ID}_c$.\label{alg:oois:L5}
	\State $P_c$ samples $\langle \mathbf{b}^{(1 - c)}_* \rangle^\mathsf{B}_c = \mathbf{r}^{1 - c} \sample \{0, 1\}^{(1 + \epsilon)n}$.
	\State $\forall j \in [0, e), \forall i \in [0, (1 + \epsilon)n)$:
	\newline $P_c$ creates a table
	$\mathbf{T} = \{(\mathbf{T}_{\mathsf{sh}}^{1 - c}[i, j], \langle \mathbf{b}^{1 - c}_*[i] \rangle^B_{1 - c}\}$, \newline 
	where $\langle \mathbf{b}^{(1 - c)}_*[i] \rangle^B_{1 - c} = \mathbf{b}'_*[i']\oplus \mathbf{r}^{1 - c}[i]$ and \newline $\mathbf{ID}_{c}[i']$ is placed in $\mathbf{T}_{\mathsf{sh}}^{1 - c}[i, j]$ via simple hashing.
	\State $\langle \mathbf{b}^{(1 - c)}_* \rangle^\mathsf{B}_{1 - c}\leftarrow \mathcal{F}_{\mathsf{OPPRF}}(\{\mathbf{T}\}, \{\mathbf{T}_{\mathsf{ch}}^{1 - c}\})$.
	\State $P_l$ runs $\langle \mathbf{b}^{(L, 1 - c)} \rangle^\mathsf{B}_l = \mathcal{F}_{\mathsf{and}}(\langle \mathbf{b}^{(1 - c)}_*\rangle^\mathsf{B}_l, \langle \mathbf{b}^{(1 - c)} \rangle^\mathsf{B}_l)$.
	\State $P_l$ runs~$\langle \mathbf{b}^{(R, 1 - c)} \rangle^\mathsf{B}_l = \langle \mathbf{b}^{(L, 1 - c)}\rangle^\mathsf{B}_l \oplus \langle \mathbf{b}^{(1 - c)} \rangle^\mathsf{B}_l$.\label{alg:oois:L10}
\end{algorithmic}
\end{tcolorbox}
\vspace{-5pt}
\caption{Ideal functionality and protocol of $\mathsf{OOIS}$}
\label{alg:oois}
\vspace{-3pt}
\end{figure}

The dual framework is \emph{symmetric} across parties, but each instance remains asymmetric.
Assume $P_{c}$ owns the best split $(z_*, u_*)$.
Figure~\ref{alg:oois} shows our new OPPRF-based OIS (OOIS).
If $P_c$ is the \emph{receiver}, it sets
$\mathbf{b}^{(c)}_* = \widetilde{\mathbf{M}}_c[z_*B + u_*]$ secret-shares it with $P_{1 - c}$, from which the children's
$\langle \mathbf{b}^{(L, c)} \rangle^\mathsf{B}_l, 
\langle \mathbf{b}^{(R, c)} \rangle^\mathsf{B}_l$ are derived (Lines~\ref{alg:oois:L1}--\ref{alg:oois:L4}) as in base OIS.

If $P_c$ is the \emph{sender}, synchronizing the children's indicator shares given $\mathbf{b}^{(c)}_*$ is tricky.
Recall that circuit-PSI facilitates the inclusion of payloads through two batch OPPRFs.
The first identifies intersecting items via PSM testing.
The second distinguishes sender-side payloads of items within each bucket and selects the ``correct'' one for computation:
If $\mathbf{T}_{\mathsf{ch}}[i]$ is $j$-th item in $\mathbf{T}_{\mathsf{sh}}[i]$, then the programmed OPPRF output at the receiver is a share of $\mathbf{T}_{\mathsf{sh}}[i, j]$'s payload.

The remaining OOIS steps (Lines~\ref{alg:oois:L5}--\ref{alg:oois:L10}) build atop this concept:
Firstly, $P_{c}$ reorders $\mathbf{b}^{(c)}_*$ from its cuckoo-hash order to the raw order in $\mathbf{ID}_{c}$, producing a temporary indicator $\mathbf{b}'_*$ for the other instance.
It then samples a random vector $\mathbf{r}^{1 - c}$ and forms XOR share of $\mathbf{b}'_*$.
Each sample $\mathbf{ID}_c[i']$ resides in some bucket $\mathbf{T}^{1 - c}_{\mathsf{sh}}[i, j]$ under the other instance's $e$ simple hashes.
This lets $P_c$ build a key-value table $\mathbf{T}$ 
with those items as keys and the corresponding shares of $\mathbf{b}^{(1 - c)}_*$ as values
(derived from $\mathbf{b}'_*$ and $\mathbf{r}^{1 - c}$).
A single call to $\mathcal{F}_{\mathsf{OPPRF}}$ supplies $P_{1 - c}$ with the complementary shares of $\mathbf{b}^{(1 - c)}_*$.
Finally, both parties obtain the left- and right-child indicators via $\mathcal{F}_{\mathsf{and}}$ and string XOR as in Line~\ref{alg:oois:L4}.

\smallskip
\noindent 
\textbf{Abstraction}:
Conceptually, our OPPRF call expects the same ideal functionality as oblivious transfer for a sparse array~\cite{popets/ZhaoC17} (with the same index domain and range).
Sender-chosen values are returned on a sparse set of indices and pseudorandom fillers appear elsewhere.
We keep the OPPRF interface because our pipeline already uses batch OPPRF at two call sites.
This preserves batching in $\mathcal{F}_{\mathsf{BOIS}}$.

\smallskip
\noindent 
\textbf{Level batching}:
All internal nodes share $\mathbf{ID}_{c}$ (as the hash table's key set), so we pack all node indicators per level into one OPPRF value.
The \emph{batched} variant (BOIS, Figure~\ref{alg:bois} in Appendix~\ref{apdx:bois}) reduces the $O(2^{D})$ cost to $O(D)$ for depth $D$.

\smallskip
\noindent 
\textbf{Complexity}:
For a batch of nodes, it first calls one $\mathcal{F}_{\mathsf{OPPRF}}$, costs $O(n \ell)$ bits with $\mathcal{O}(1)$ rounds.
Then, it calls $2n$ $\mathcal{F}_{\mathsf{and}}$ (each sends $O(\ell)$ bits) with $\mathcal{O}(1)$ rounds.

\begin{theorem}
\label{theorem:bois}
$\varPi_{\mathsf{BOIS}}$ realizing $\mathcal{F}_{\mathsf{BOIS}}$ in Figure~\ref{alg:bois} is a secure protocol under $\mathcal{F}_{\mathsf{OPPRF}}$ and $\mathcal{F}_{\mathsf{and}}$ hybrids.
\end{theorem}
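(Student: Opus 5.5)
We prove Theorem~\ref{theorem:bois} by constructing, for each corrupted party, a simulator in the $(\mathcal{F}_{\mathsf{OPPRF}}, \mathcal{F}_{\mathsf{and}})$-hybrid model in the sense of Definition~\ref{def:private_function}, and then invoking the Modular Composition theorem to replace the hybrids by their real instantiations. Since $\varPi_{\mathsf{BOIS}}$ is the level-wise batching of $\varPi_{\mathsf{OOIS}}$ (Figure~\ref{alg:oois}), with all node indicators of one level packed into a single OPPRF value and one batch of $\mathcal{F}_{\mathsf{and}}$ calls, we first argue about a single node and then lift the argument to a level. The lift is immediate because packing only concatenates independently masked bit strings. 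We treat the two roles of the split owner $P_c$ (known to it from the revealed split) and of $P_{1-c}$ separately. In each case the ideal output is a pair of fresh XOR shares of the children's indicators for both instances.

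\emph{Corrupted $P_c$ (split owner).} The view of $P_c$ consists of its input, the masks $\mathbf{r}^c$ and $\mathbf{r}^{1-c}$ it samples itself, and the outputs of $\mathcal{F}_{\mathsf{and}}$. It receives nothing from $\mathcal{F}_{\mathsf{OPPRF}}$, since it acts as the OPPRF sender and programs the table $\mathbf{T}$. The simulator $\mathcal{S}_c$ runs $P_c$'s local steps honestly: it samples the masks, builds $\mathbf{T}$ from $\widetilde{\mathbf{M}}_c$, $(z_*,u_*)$ and $\mathbf{T}^{1-c}_{\mathsf{sh}}$, and sets its outputs from $\mathcal{F}_{\mathsf{and}}$ equal to the left-child shares given by the functionality. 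The right-child shares then follow by local XOR with $\langle\mathbf{b}^{(c)}\rangle_c$ and $\langle\mathbf{b}^{(1-c)}\rangle_c$. This view is identically distributed to the real one, because the ideal $\mathcal{F}_{\mathsf{and}}$ output share is uniform conditioned on the reconstructed value.

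\emph{Corrupted $P_{1-c}$.} This party receives three things: the string $\langle\mathbf{b}^{(c)}_*\rangle_{1-c}=\mathbf{b}^{(c)}_*\oplus\mathbf{r}^c$, the $\mathcal{F}_{\mathsf{OPPRF}}$ outputs on its cuckoo items, and the $\mathcal{F}_{\mathsf{and}}$ outputs. The simulator $\mathcal{S}_{1-c}$ replaces the first item with a uniform string, which is a perfect substitution because $\mathbf{r}^c$ is a one-time pad. For each bucket $i$ it replaces the OPPRF output with a uniform bit (or uniform packed string in the batched version). It sets the $\mathcal{F}_{\mathsf{and}}$ outputs to the functionality's left-child shares.

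The main obstacle is justifying that the OPPRF outputs are uniform and independent of both the intersection and $\mathbf{b}'_*$. We argue by cases on the queried bucket $i$. If the queried item lies in $\mathbf{T}$, the output equals $\mathbf{b}'_*[i']\oplus\mathbf{r}^{1-c}[i]$. Here $\mathbf{r}^{1-c}[i]$ is used exactly once per bucket, since each bucket is queried on a single cuckoo item, so the value is uniform. If the item is not programmed, the ideal $\mathcal{F}_{\mathsf{OPPRF}}$ returns a pseudorandom value, which is computationally indistinguishable from uniform. It must be checked that a mask index $i$ is never reused across the up to $e$ simple-hash copies landing in the same bucket; otherwise programmed values within one bucket would be correlated. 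Those copies correspond to distinct keys, however, and at most one is queried. In the batched case each level uses fresh masks, so the same per-bucket argument applies coordinate-wise. A hybrid argument over the $O(D)$ levels, combined with the composition theorem, completes the proof.
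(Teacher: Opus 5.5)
Your proposal is correct and follows essentially the same route as the paper's proof. Both use per-party simulators in the $(\mathcal{F}_{\mathsf{OPPRF}},\mathcal{F}_{\mathsf{and}})$-hybrid model: $P_c$'s view is reproduced by running its local steps and reading off the $\mathcal{F}_{\mathsf{and}}$ outputs, $P_{1-c}$'s view by a uniform string for the one-time-padded share plus simulated OPPRF outputs, and your per-bucket masking argument fills in the uniformity of those outputs that the paper's sketch leaves implicit. One wording fix: the mask $\mathbf{r}^{1-c}[i]$ \emph{is} deliberately reused for every key programmed into bucket $i$, and the $e$ hash copies of one item land in different buckets, not the same one. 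The correct justification is exactly your closing remark: $P_{1-c}$ queries each bucket on a single cuckoo item and so sees only one masked value per bucket.
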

We defer the proof to Appendix~\ref{sec:appendix_security_proof} due to space limits.

\subsection{Further Optimizations}
\label{sect:adv_optimizations}
\noindent $\bullet$~\textbf{High-precision score computation}:
With fixed-point precision $f = 20$, intermediate shares (\eg, $\mathcal{F}_{\mathsf{mul}}$ in Eq.~\eqref{eq:split_gain_plaintext}) can exceed $\mathbb{Z}_{2^{\ell}}$ when $n$ is large.
We prescale inputs to $[1, 2]$ and rescale the result, preventing overflow at negligible cost.

\smallskip
\noindent $\bullet$~\textbf{Gradient packing in $\mathcal{F}_{\mathsf{mux}}$}:
Filtering the (secret-shared) gradients $\mathbf{g}$ and~$\mathbf{h}$ at $k$-th node requires the same sample indicator $\mathbf{b}^{(k)}$ (Eq.~\eqref{eq:filter_gradient}).
Real values (\eg, every entry of $\mathbf{g}$) are encoded as $\ell$-bit integers by fixed-point representation.
As the security parameter is $\lambda = 2 \ell = 128$, we can halve the number of OT used in $\mathcal{F}_{\mathsf{mux}}$ by entry-wise packing of $\mathbf{g}$ and $\mathbf{h}$, compared to transferring them individually.

\smallskip
\noindent
$\bullet$~\textbf{Gradient packing in $\mathcal{F}_{\mathsf{A2H}}$}:
In $\mathcal{F}_{\mathsf{A2H}}$, we can encode gradients $\langle \mathbf{g} \rangle^\mathsf{A}$ and $\langle \mathbf{h} \rangle^\mathsf{A}$ within one RLWE ciphertext, \ie, 
$\widehat{a}_l = \sum_{i = 0}^{\underline{N}/2 - 1} \langle \mathbf{g}[i] \rangle^\mathsf{A}_l X^i + \sum_{i = \underline{N}/2}^{\underline{N} - 1} \langle \mathbf{h}[i] \rangle^\mathsf{A}_l X^i$ (see Section~\ref{sect:crypto_tools}).
After pick-then-sum, two LWE ciphertexts (of dimension $\underline{N}$) can be directly packed in one RLWE ciphertext (of dimension $\underline{N}$) without $\mathsf{KeySwitch}$, thereby reducing one recursion of $\mathsf{FastPackLWEs}$.

\smallskip
\noindent $\bullet$~\textbf{Skipping $\mathcal{F}_{\mathsf{A2H}}$ for the split owner}:
If party $P_c$ holds the best split at node $k$, it locally allocates the RLWE-encrypted $\mathbf{g}^{(k, c)}$ and $\mathbf{h}^{(k, c)}$ to bins at node $2k$ and aggregates them, eliminating the costly share-to-ciphertext conversion.

\smallskip
\noindent $\bullet$~\textbf{Dummy-free aggregation}:
In $\mathcal{F}_{\mathsf{BinMatVec}}$, ciphertexts follow the cuckoo-hash order, including the $\varepsilon n$ dummies.
By realigning $\mathbf{M}_l$ and the ciphertext list with the original row order of $\mathbf{X}_l$, each party sums only genuine samples, avoiding dummy-related overhead.

\smallskip
\noindent
$\bullet$~\textbf{Lightweight oblivious $\mathsf{argmax}$}:
Pairwise comparisons use OTs, outperforming GC variants~\cite{ccs/RatheeR0CGR020}.
Because the split-index space is public, the final selection is obtained directly with $\mathcal{F}_{\mathsf{OT}}$, removing the extra $\mathcal{F}_{\mathsf{mux}}$ step of Squirrel~\cite{uss/LuHZWH23}.


\smallskip
\noindent
$\bullet$~\textbf{Erasing superfluous coefficients in $\mathsf{FastPackLWEs}$}:
To prevent leakage, the prior work~\cite{acns/ChenDKS21} zeroizes useless coefficients between $X^{i * N / 2^{\tau}}$ and $X^{(i + 1) * N / 2^{\tau}}$ for $i \in [2^{\tau} - 1]$ by the trace function,~which recursively evaluates $\mathsf{EvalAuto}$ on the ciphertexts after $\mathsf{PackLWEs}$.
Instead, we just add randomness to nullify these coefficients (without $\mathsf{EvalAuto}$ but sharing a similar purpose) to improve efficiency.


\begin{figure*}[t!]
\begin{tcolorbox}[colback = white!5!white, left = 0mm, right = 0mm, top = 0mm, bottom = 0mm, 
title = {$\sys$ 
$\mathcal{\varPi}^{\mathsf{OTSA}}_{\mathsf{GBDT}}(\mathbf{Input}_0^{\varPi_0}$,
$\mathbf{Input}_1^{\varPi_0}$,
$\mathbf{pp})$}]
\begin{algorithmic}[1]
	\PaInput $\mathbf{ID}_0$, feature $\mathbf{X}_0$ and label $\mathbf{y}$.
	\PbInput $\mathbf{ID}_1$, feature $\mathbf{X}_1$.
	\PP $\mathbf{pp} = \{D, B, T, (\underline{sk}_l, \underline{pk}_l)\textrm{~for }\mathcal{F}_{\mathsf{A2H}}, (sk_l, pk_l)\textrm{~for }\mathcal{F}_{\mathsf{H2A}}, \mathsf{KeySwitch}$ Lifting Key $\mathsf{LK}_{\underline{sk}_l \rightarrow \sk_l}\}$.
	\Ensure $\mathsf{Output}^\Pi_0$ for $P_0$ and $\mathsf{Output}^\Pi_1$ for $P_1$
	\State $P_l$ discretizes its raw data $\mathbf{X}_l$ to $\mathbf{M}_l \in \{0, 1\}^{m \cdot B \times n}$, which is aggregated to $\widetilde{\mathbf{M}}_l$.
	\label{alg:final_sys:L1}
	\State {\color{blue}For instance $i = 0$:
	$\langle \mathbf{b}^{(1, 0)} \rangle^\mathsf{B}_l, \langle \mathbf{y}^0 \rangle^\mathsf{A}_l \gets \mathcal{F}^0_{\mathsf{CPSI}}(\{\mathbf{ID}_0\, \mathbf{y}\};
	\{\mathbf{ID}_1\})$, with $P_0$'s $\mathbf{T}^0_{\mathsf{ch}}$ and $P_1$'s $\mathbf{T}^0_{\mathsf{sh}}$.}
	\label{alg:final_sys:L2}
	\State {\color{blue}For instance $i = 1$:
	$\langle \mathbf{b}^{(1, 1)} \rangle^\mathsf{B}_l, \langle \mathbf{y}^1 \rangle^\mathsf{A}_l \gets \mathcal{F}^1_{\mathsf{CPSI}}(\{\mathbf{ID}_1\};
	\{\mathbf{ID}_0, \mathbf{y}\})$, with $P_1$'s $\mathbf{T}^1_{\mathsf{ch}}$ and $P_0$'s $\mathbf{T}^1_{\mathsf{sh}}$.}
	\label{alg:final_sys:L3}
	\For{tree $t \in \mathcal{T}_t$ } \label{alg:final_sys:L4}
	\State {\color{blue}For $i \in \{0, 1\}$}:
	$P_l$ derives gradients $\langle \mathbf{g}^{(1, i)} \rangle^\mathsf{A}_l, \langle \mathbf{h}^{(1, i)}\rangle^\mathsf{A}_l$ at root using $\mathcal{F}_{\mathsf{sigmoid}} (\langle\tilde{\mathbf{y}}^i\rangle^\mathsf{A}_l)$ and $\mathcal{F}_{\mathsf{mul}}$;
	see Eq.~\eqref{eq:gradients_plaintext}.
	\label{alg:final_sys:L5}
	\State {\color{blue}For $i \in \{0, 1\}$}:
	$P_l$ filters $\langle \mathbf{g}^{(1, i)}\rangle^\mathsf{A}_l,
	\langle \mathbf{h}^{(1, i)}\rangle^\mathsf{A}_l$
	at root using $\mathcal{F}_{\mathsf{mux}}$ given $\langle \mathbf{b}^{(1, i)} \rangle^\mathsf{B}_l$;
	see Eq.~\eqref{eq:filter_gradient}.
	\label{alg:final_sys:L6}
	\For{level $d \in \{1, 2, \ldots, D - 1\}$ of the tree} \label{alg:final_sys:L7}
		\State {\color{blue}Both parties maintain two public sets $\mathbf{k}_0 \gets \emptyset, \mathbf{k}_1 \gets \emptyset$.}\label{alg:final_sys:L8}
		\Comment{Record node $k$'s best split is at $P_0$ or $P_1$}
		\For{the internal nodes $k \in \{2^{d - 1}, \ldots, 2^d - 1\}$ at level $d$}
		\Comment{Batch updates on level $d$ for BOIS}
		\label{alg:final_sys:L9}
		\If{$k$ is $1$ or even} \label{alg:final_sys:L10}
		\State \parbox[t]{\dimexpr\linewidth - \leftmargin - \labelsep - \labelwidth}{$P_l$ aggregates gradients over $\mathbf{M}_0$ (as in Squirrel) {\color{blue} with our new subroutine $\mathsf{FastPackLWEs}$}: \\ 
			$\langle \widetilde{\mathbf{g}}^{(k, 0)} \rangle^\mathsf{A}_l || \langle \widetilde{\mathbf{h}}^{(k, 0)} \rangle^\mathsf{A}_l = \mathcal{F}_{\mathsf{BinMatVec}}(
			\{ \langle \mathbf{g}^{(k,0)} \rangle^\mathsf{A}_0 || \langle \mathbf{h}^{(k,0)} \rangle^\mathsf{A}_0, \mathbf{M}_0\},
			\{\langle \mathbf{g}^{(k,0)} \rangle^\mathsf{A}_1 || \langle \mathbf{h}^{(k,0)} \rangle^\mathsf{A}_1, \sk_1 \})$.\strut} \label{alg:final_sys:L11}
		\State {\color{blue}Similarly, $P_l$ runs $\mathcal{F}_{\mathsf{BinMatVec}}$ with $\mathsf{FastPackLWEs}$ over $\mathbf{M}_1$ to get $\langle \tilde{\mathbf{g}}^{(k, 1)} \rangle^\mathsf{A}_l$ and
		$\langle \tilde{\mathbf{h}}^{(k, 1)} \rangle^\mathsf{A}_l$.}\label{alg:final_sys:L12}
		\State $ P_l$ locally concatenates the shares as $\langle \widetilde{\mathbf{g}}^{(k)} \rangle^\mathsf{A}_l = \langle \widetilde{\mathbf{g}}^{(k, 0)} \rangle^\mathsf{A}_l \concat \langle \widetilde{\mathbf{g}}^{(k, 1)} \rangle^\mathsf{A}_l$, $\langle \widetilde{\mathbf{h}}^{(k)} \rangle^\mathsf{A}_l = \langle \widetilde{\mathbf{h}}^{(k, 0)} \rangle^\mathsf{A}_l\concat \langle \widetilde{\mathbf{h}}^{(k, 1)} \rangle^\mathsf{A}_l$.\label{alg:final_sys:L13}
		\Else~$P_l$ calculates $\langle \tilde{\mathbf{g}}^{(k)} \rangle^\mathsf{A}_l = \langle \tilde{\mathbf{g}}^{(k / 2)} \rangle^\mathsf{A}_l - \langle \tilde{\mathbf{g}}^{(k - 1)} \rangle^\mathsf{A}_l, \langle \tilde{\mathbf{h}}^{(k)} \rangle^\mathsf{A}_l = \langle \tilde{\mathbf{h}}^{(k / 2)} \rangle^\mathsf{A}_l - \langle \tilde{\mathbf{h}}^{(k - 1)} \rangle^\mathsf{A}_l$.
			\label{alg:final_sys:L14}
		\EndIf 
			\State $P_l$ computes the splitting scores $\langle \mathbf{L}_{\mathsf{sp}}^{(k)} \rangle^\mathsf{A}_l$ using $\mathcal{F}_{\mathsf{mul}}$ and $\mathcal{F}_{\mathsf{div}}$ for all $(z, u)$ based on Eq.~\eqref{eq:split_gain_plaintext}.\label{alg:final_sys:L15}
		\State $P_l$ gets $\langle z_*^{(k)}\rangle^\mathsf{A}_l, \langle u_*^{(k)}\rangle^\mathsf{A}_l \gets \mathcal{F}_{\mathsf{argmax}}(\langle \mathbf{L}_{\mathsf{sp}} \rangle^\mathsf{A}_l)$.
		\label{alg:final_sys:L16}
		\State Open $c \gets \mathcal{F}_{\mathsf{greater}}(\langle z_*^{(k)} \rangle, m - 1)$ and reveal $(z_*^{(k)}, u_*^{(k)})$ to $P_c$, who writes it to $\mathsf{Output}_c^{\Pi}$ {\color{blue}and $k$ to $\mathbf{k}_c$.}
		\label{alg:final_sys:L17}
		\EndFor
		\State {\color{blue}\parbox[t]{\dimexpr\linewidth - \leftmargin - \labelsep - \labelwidth}{For $k \in \mathbf{k}_0$:
		$P_l$ runs 
		$\mathcal{F}_{\mathsf{BOIS}}(\{
		\langle \{ \mathbf{b}^{(k, 0)} \rangle^\mathsf{B}_0,
		\langle \mathbf{b}^{(k, 1)} \rangle^\mathsf{B}_0,
		(z_*^{(k)},
		u_*^{(k)})\},
		\widetilde{\mathbf{M}}_0, \mathbf{T}_{\mathsf{sh}}^1\};
		\{ 
		\{\langle \mathbf{b}^{(k, 0)} \rangle^\mathsf{B}_1,
		\langle \mathbf{b}^{(k, 1)} \rangle^\mathsf{B}_1\},
		\mathbf{T}_{\mathsf{ch}}^1\})$
		\newline
		to get
		$\{\langle \mathbf{b}^{(2k, i)} \rangle^\mathsf{B}_l, 
		\langle \mathbf{b}^{(2k + 1, i)} \rangle^\mathsf{B}_l \}_{i \in \{0, 1\}}$.\label{alg:final_sys:L18}} 
		\strut}
		\State {\color{blue}For $k \in \mathbf{k}_1$:
		$P_l$ runs $\mathcal{F}_{\mathsf{BOIS}}$ (now flip the role with $\widetilde{\mathbf{M}}_1, \mathbf{T}_{\mathsf{sh}}^0, \mathbf{T}_{\mathsf{ch}}^0$) to get $\{\langle \mathbf{b}^{(2k, i)} \rangle^\mathsf{B}_l, 
		\langle \mathbf{b}^{(2k + 1, i)} \rangle^\mathsf{B}_l \}_{i \in \{0, 1\}}$}.
		\label{alg:final_sys:L19}
		\State {\color{blue}For $i \in \{0, 1\}$}:
		update left-child:
		$\langle \mathbf{g}^{(2k, i)} \rangle^\mathsf{A}_l (/ \langle \mathbf{h}^{(2k, i)} \rangle^\mathsf{A}_l) = \mathcal{F}_{\mathsf{mux}}( \langle \mathbf{b}^{(2k, i)} \rangle^\mathsf{B}_l, \langle \mathbf{g}^{(k, i)} \rangle^\mathsf{A}_l (/ \langle \mathbf{h}^{(k, i)} \rangle^\mathsf{A}_l))$.\label{alg:final_sys:L20}
		\State {\color{blue}For $i \in \{0, 1\}$}:
		update right-child:
		$\langle \mathbf{g}^{(2k + 1, i)} \rangle^\mathsf{A}_l = \langle \mathbf{g}^{(k, i)} \rangle^\mathsf{A}_l - \langle \mathbf{g}^{(2k, i)} \rangle^\mathsf{A}_l, 
	\langle \mathbf{h}^{(2k + 1, i)} \rangle^\mathsf{A}_l = \langle \mathbf{h}^{(k, i)} \rangle^\mathsf{A}_l - \langle \mathbf{h}^{(2k, i)} \rangle^\mathsf{A}_l$.\label{alg:final_sys:L21}
	\EndFor
	\State $P_l$ computes leaf weights $\langle \mathbf{w} \rangle^\mathsf{A}_l$ using $\mathcal{F}_{\mathsf{div}}$ based on Eq.~\eqref{eq:weight_plaintext} and writes them in $\mathsf{Output}_l^{\Pi}$.\Comment{Order-independent}\label{alg:final_sys:L22}
		\State {\color{blue}For $i \in \{0, 1\}$}:
		$P_l$ updates $\langle \tilde{\mathbf{y}}^i[j] \rangle^\mathsf{A}_l$ by accumulating $\mathcal{F}_{\mathsf{mux}}(\langle \mathbf{b}^{(k, i)}[j] \rangle^\mathsf{B}_l, \langle \mathbf{w}[k] \rangle^\mathsf{A}_l)$ ($\forall j$) over all leaves $k$.\label{alg:final_sys:L23}
	\EndFor
\end{algorithmic}
\end{tcolorbox}
\vspace{-5pt}
\caption{$\sysfinal$ (with key differences to our base solution highlighted in {\color{blue}blue})}
\label{alg:final_sys}
\vspace{-5pt}
\end{figure*}

\subsection{Bringing Everything Together}
Figure~\ref{alg:final_sys} presents $\sysfinal$.
Two circuit‑PSI instances run in parallel with swapped roles.
For instance $i \in \{0, 1\}$, inputs are $\mathbf{ID}_0$ and $\mathbf{ID}_1$, and the only payload is $\mathbf{y}$, producing the root indicator shares $\langle \mathbf{b}^{(1, i)} \rangle^\mathsf{B}_l$ and label shares $ \langle \mathbf{y}^i \rangle^\mathsf{A}_l$.
Unlike our base scheme, $\mathbf{M}_i, \widetilde{\mathbf{M}}_i$ stay local.

For each tree, both parties derive and filter $\langle \mathbf{g}^{(1, i)} \rangle^\mathsf{A}_l$ and $\langle \mathbf{h}^{(1, i)}
\rangle^\mathsf{A}_l$ per instance 
 (Lines~\ref{alg:final_sys:L6}--\ref{alg:final_sys:L7}).
Nodes are processed level-wise using BOIS (Figure~\ref{alg:bois}).
At depth $d$, two public sets~$\mathbf{k}_0$, $ \mathbf{k}_1$ record the owner of each node's best split.

\smallskip\noindent
\textbf{Histogram construction}:
For roots and left children, both parties aggregate gradients with $\mathcal{F}_{\mathsf{BinMatVec}}$ over $\mathbf{M}_0$ and $\mathbf{M}_1$ (accelerated by $\mathsf{FastPackLWEs}$) and then concatenate locally.
For right children, histograms are obtained by subtracting the parent and left-child values.
Aggregation is order-independent, so it runs only once (not per instance).

\smallskip
\noindent
\textbf{Split selection}:
Split gains for all candidates are evaluated with $\mathcal{F}_{\mathsf{mul}}$ and $\mathcal{F}_{\mathsf{div}}$ (Line~\ref{alg:final_sys:L16}.
$\mathcal{F}_{\mathsf{argmax}}$ reveals $(z^{(k)}_*, u^{(k)}_*)$ to its owner $P_c$, which adds $k$ to $\mathbf{k}_c$.
Given $\mathbf{k}_0, \mathbf{k}_1$, $\mathcal{F}_{\mathsf{BOIS}}$ updates level-wise indicators for the next level's filtering.

\smallskip
\noindent
\textbf{Leaf weights}:
For each leaf~$k$, the weight is derived from whichever gradient share $\langle\mathbf{g}^{(k, i)} \rangle_l^\mathsf{A}$ has fewer samples (for order independence), and the predictions are accumulated.

Confining heavy computation to local data removes the OT-intensive histogram phase of the base design.
The trade-off is maintaining one ``logical'' tree per instance, increasing total communication to at most $2(1 + \varepsilon){\times}$ that of {Squirrel}.

\subsubsection{Complexity}
Table~\ref{tab:complexity_comparison_base_final} reports the per-component complexity for $\sys$.
$\sysfinal$ achieves its performance gains primarily due to:
i) In base gradient histogram aggregation (GHA), the dominant cost is $\mathcal{F}_{\mathsf{mux}}$ with complexity $\mathcal{O}(Bmnl)$, which is eliminated in $\sysfinal$, and
ii) Unlike the base $\mathcal{F}_{\mathsf{OIS}}$, which is invoked per node and scales linearly with $Bm$, its batch extension $\mathcal{F}_{\mathsf{BOIS}}$ is invoked per level.

\begin{table}[t]
	\small
	\centering
	\caption{Communication Complexity and rounds of each component of $\sys$ (Base versus Final)}
\setlength{\tabcolsep}{4pt}
\renewcommand{\arraystretch}{1.12}
\begin{tabularx}{\columnwidth}{@{}l >{\centering\arraybackslash}X c@{}}
\toprule
Component & Complexity & Round \\
\midrule
Dual-circuit-PSI & $\mathcal{O}(n \ell)$ & $\mathcal{O}(\log \ell) + 1$ \\
GHA (Base) & \makecell{$\mathcal{O}(2Bmn \ell + n\log Q$ \\ $+ 2Bm\log Q)$} & $\mathcal{O}(1)$ \\
GHA (Final) & $\mathcal{O}(2n\log Q + 4Bm\log Q)$ & $2$ \\
$\mathcal{F}_{\mathsf{OIS}}$ (Base) & $\mathcal{O}(Bmn \ell)$ per node & $\mathcal{O}(1)$ \\
$\mathcal{F}_{\mathsf{BOIS}}$ (Final) & $\mathcal{O}(3n \ell)$ per level & $\mathcal{O}(1)$ \\
\bottomrule
\end{tabularx}
\label{tab:complexity_comparison_base_final}
	\vspace{-3pt}
\end{table}

\begin{theorem}
\label{theorem:advaned_sys}
$\mathcal{\varPi}^{\mathsf{OTSA}}_{\mathsf{GBDT}}$ (Figure~\ref{alg:final_sys})
is anonymous GBDT training under 
$\mathcal{F}_{\mathsf{argmax}}$, 
$\mathcal{F}_{\mathsf{BOIS}}$, 
$\mathcal{F}_{\mathsf{BinMatVec}}$ (with $\mathcal{F}_{\mathsf{A2H}}$, $\mathcal{F}_{\mathsf{H2A}}$),
$\mathcal{F}_{\mathsf{CPSI}}$, 
$\mathcal{F}_{\mathsf{div}}$, 
$\mathcal{F}_{\mathsf{greater}}$,
$\mathcal{F}_{\mathsf{mul}}$, 
$\mathcal{F}_{\mathsf{mux}}$, 
$\mathcal{F}_{\mathsf{sigmoid}}$-hybrid and semantically-secure RLWE encryption.
\end{theorem}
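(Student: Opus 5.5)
We prove Theorem~\ref{theorem:advaned_sys} in the hybrid model, appealing to Definition~\ref{def:private_function} and the Modular Composition theorem~\cite{joc/Canetti00}. In the hybrid world every subprotocol is replaced by its ideal functionality. A party's view therefore consists only of four things: its own input, the outputs it receives from each ideal call, the messages exchanged in the few ``glue'' steps of Figure~\ref{alg:final_sys}, and its final output. All ideal outputs other than the revealed best splits are XOR or additive shares.

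For each corrupted party $P_l$, $l\in\{0,1\}$, we build a simulator $\mathcal{S}_l$ that receives $(\mathbf{ID}_l,\mathbf{X}_l,[\mathbf{y}])$ and $\mathbf{Output}^{\varPi}_l=(\mathbf{C}_l,\{\langle\mathbf{w}[k]\rangle^\mathsf{A}_l\})$. It proceeds through the protocol line by line as follows.
\begin{itemize}
\item \textbf{Lines~\ref{alg:final_sys:L1}--\ref{alg:final_sys:L3}.} $\mathcal{S}_l$ discretizes locally. For the two $\mathcal{F}_{\mathsf{CPSI}}$ calls it outputs the party's own hash table ($\mathbf{T}^i_{\mathsf{ch}}$ or $\mathbf{T}^i_{\mathsf{sh}}$, which depends only on $\mathbf{ID}_l$ and public hash functions) together with uniformly random Boolean and arithmetic shares of $\mathbf{b}^{(1,i)}$ and $\mathbf{y}^i$. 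Because these shares are uniform regardless of $\mathbf{ID}_0\cap\mathbf{ID}_1$, nothing about the intersection, its size, or the alignment is fixed at this point. This is exactly the anonymity requirement.
\item \textbf{Ideal calls to} $\mathcal{F}_{\mathsf{sigmoid}},\mathcal{F}_{\mathsf{mul}},\mathcal{F}_{\mathsf{mux}},\mathcal{F}_{\mathsf{BinMatVec}},\mathcal{F}_{\mathsf{div}},\mathcal{F}_{\mathsf{BOIS}}$. Each returns a fresh random share, and $\mathcal{S}_l$ samples it uniformly.
\item \textbf{Local-only lines.} Lines~\ref{alg:final_sys:L13}, \ref{alg:final_sys:L14}, \ref{alg:final_sys:L21} involve no communication, so $\mathcal{S}_l$ recomputes them from its simulated shares.
\item \textbf{$\mathcal{F}_{\mathsf{argmax}}$ and $\mathcal{F}_{\mathsf{greater}}$.} The opened owner bit $c$ for node $k$ equals $\mathbf{1}\{\mathbf{C}_l[k]=\perp\}\oplus l\oplus 1$, and $(z^{(k)}_*,u^{(k)}_*)$ is revealed only when $\mathbf{C}_l[k]\neq\perp$. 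Both are read off the party's output. Consequently the public sets $\mathbf{k}_0,\mathbf{k}_1$ are also determined by the output.
\item \textbf{Final-leaf shares.} For the last $\mathcal{F}_{\mathsf{div}}$ on leaves, $\mathcal{S}_l$ must return exactly the given $\langle\mathbf{w}[k]\rangle^\mathsf{A}_l$ rather than a fresh sample.
\item \textbf{$\mathcal{F}_{\mathsf{BinMatVec}}$ internals.} Where the proof must open this up (it is listed with $\mathcal{F}_{\mathsf{A2H}},\mathcal{F}_{\mathsf{H2A}}$), the RLWE ciphertexts that $P_c$ receives under $P_{1-c}$'s key are replaced by encryptions of $0$ via semantic security. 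The masked ciphertext decrypted by $P_{1-c}$ in $\mathcal{F}_{\mathsf{H2A}}$ is uniform because of $\widehat r$. The extra randomness that replaces the trace step keeps the superfluous coefficients uniform, so this also covers our $\mathsf{FastPackLWEs}$ modification.
\end{itemize}

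Indistinguishability follows by a hybrid argument. $H_0$ is the real hybrid-world view. $H_1$ replaces the ciphertexts with encryptions of zero, which is computationally close by RLWE semantic security, with one hybrid step per ciphertext and a polynomial number of steps. $H_2$ replaces every share produced by an ideal functionality with a uniform value, which is identical in distribution. There is one caveat: the share values must remain jointly consistent with the final output shares. This holds because each functionality re-randomizes its output shares independently, so only the last leaf-weight shares are correlated with the output, and those the simulator takes directly from $\mathbf{Output}^{\varPi}_l$. Correctness, needed because Definition~\ref{def:private_function} compares with a deterministic functionality, is argued separately. The two instances each compute the same tree on $\mathbf{ID}_0\cap\mathbf{ID}_1$, since dummies carry $\mathbf{b}^{(1,i)}=0$ and $\mathcal{F}_{\mathsf{BOIS}}$ propagates identical assignments across both cuckoo orders. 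Hence the histograms, splits and leaf weights match those of $\mathcal{F}_{\mathsf{GBDT}}$, up to fixed-point approximation, which we treat as part of the functionality. Applying Modular Composition then transfers security to the real protocol.

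The main obstacle I expect is the leakage in the split-owner path and in the optimizations. \emph{Skipping $\mathcal{F}_{\mathsf{A2H}}$ for the split owner} and \emph{dummy-free aggregation} both make $P_c$ act locally on ciphertexts in its own order. I would check that every value $P_c$ handles there is either encrypted under $P_{1-c}$'s key or was already local, so that the message pattern depends only on $n$, $m$, $B$, $D$ and the public $\mathbf{k}_0,\mathbf{k}_1$. In particular it must not depend on $|\mathbf{ID}_0\cap\mathbf{ID}_1|$. I would also verify that the leaf-weight rule of using ``whichever instance has fewer samples'' is evaluated obliviously, via a secure comparison whose bit is not opened, because revealing which instance was chosen would leak count information.
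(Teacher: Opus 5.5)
Your proposal is correct and follows essentially the paper's route. You simulate each corrupted $P_l$ in the hybrid world by playing every ideal functionality with uniform output shares, and you program the opened owner bits, the revealed $(z_*^{(k)},u_*^{(k)})$ and the final leaf-weight shares from $\mathbf{Output}^{\varPi}_l$. You then chain hybrids and invoke modular composition. You are also more explicit than the paper's sketch in tying the last $\mathcal{F}_{\mathsf{div}}$ outputs to $\langle\mathbf{w}\rangle^\mathsf{A}_l$ and in noting that $\mathbf{k}_0,\mathbf{k}_1$ are determined by the output.

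Two small fixes are needed:
\begin{itemize}
\item The owner bit is $c = l\oplus\mathbf{1}\{\mathbf{C}_l[k]=\perp\}$; your extra $\oplus 1$ inverts it.
\item Your optional opening of $\mathcal{F}_{\mathsf{H2A}}$ is not sound as stated. The mask $\widehat r$ only makes the decrypted plaintext uniform. The key holder $P_{1-c}$ sees the whole ciphertext, and absent re-randomization its $\widehat a$-part and noise are determined by $\mathbf{M}_c$ and by randomness $P_{1-c}$ itself chose in $\mathcal{F}_{\mathsf{A2H}}$. That step would therefore need re-randomization with noise flooding (circuit privacy), not semantic security. The simplest fix is to keep $\mathcal{F}_{\mathsf{BinMatVec}}$ as an ideal hybrid, as the theorem and the paper do.
\end{itemize}
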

We defer the proof to Appendix~\ref{sec:appendix_security_proof} due to space limits.

\subsubsection{Anonymous Batch Inference}
In inference~\cite{ndss/MaTZC21},
each party $P_l$ holds a vertically partitioned dataset (no labels), (plaintext) best splits $\mathbf{C}_l$, and weight shares $\langle \mathbf{w} \rangle^\mathsf{A}_l$.
One party finally learns the prediction probabilities $\mathbf{p}$ on the joint data.
Prior protocols often assume PSI-aligned inputs.

Evaluation using a secure multiplexer can proceed as follows:
i) local traversal: each $P_l$ computes a binary vector $\mathbf{b}^{(k, l)}$ (AND-based share $\langle \mathbf{b}^{(k)} \rangle^\mathsf{B}_l$ in Squirrel) by comparing features with local split criteria;
$\mathbf{b}^{(k, l)}[i] = 1$ indicates that sample~$i$ \emph{may} reach node $k$;
ii) conjunction: the true indicator is $\mathbf{b}^{(k)}= \mathbf{b}^{(k, 0)} \wedge \mathbf{b}^{(k, 1)}$, and the tree output is
\begin{align*}
\sum\nolimits_{k} \mathcal{F}_{\mathsf{mux}}(\langle \mathbf{b}^{(k)} \rangle^\mathsf{B}_l, \langle \mathbf{w}[k] \rangle^\mathsf{A}_l).
\end{align*}

To keep inference-time identifiers hidden, we can run circuit-PSI preceding the above procedure, which induces misaligned local indicators.
An OPPRF-based synchronization step then aligns indicators across parties:
$P_1$ transfers a share of $\mathbf{b}^{(k,1)}$ to
$P_{0}$, yielding aligned shares 
$\langle \mathbf{b}^{(k,1)} \rangle^\mathsf{B}_l$ and
$\langle \mathbf{b}^{(k,0)} \rangle^\mathsf{B}_l$ on which
$\mathcal{F}_{\mathsf{and}}$ applies.
The full algorithm is in Appendix~\ref{apdx:sec_inference}.

\subsection{Generalizability}
Our dual-circuit-PSI framework generalizes to two-party \emph{vertical} FL (VFL) models that exchange intermediate computations over aligned samples.
These models require a data alignment step that, if using standard PSI, leaks sensitive intersection membership.
Our framework provides a blueprint for \emph{anonymous} VFL by replacing this leaky pre-alignment.

\smallskip
\noindent 
\textbf{1) Anonymous Handshake}:
It establishes a ``symmetric private'' alignment in which each party, and only that party, learns its own cuckoo-hash permutation of the intersecting samples, so subsequent computations are performed on the same set of entities without revealing their identities.


\smallskip
\noindent 
\textbf{2) Synchronization and computation}:
$P_l$, with its own secret permutation, re-aligns its local information and continues the computation with the secret shared payload (\eg, completing a forward pass).
For data flowing in the reverse direction (\eg, encrypted/noised gradients), our $\mathcal{F}_{\mathsf{BOIS}}$ can be adapted to securely transfer the aligned information.

This workflow can be instantiated in various VFL architectures, \eg, federated transformers~\cite{nips/WuHDH24} and split-NN~\cite{jnca/GuptaR18}.


\section{Evaluation}
\label{sect:exp}
\noindent
\textbf{Parameters}:
The security parameters are $\lambda = 128, \kappa = 40$.
Our designs are implemented in C++ based on Cheetah~\cite{githubcheetah}.
We use the stash-less circuit-PSI~\cite{ccs/RaghuramanR22} with 3 hash functions and~$\epsilon = 1.3$,
and $\ell = 64$-bit arithmetic sharing with precision $f = 20$.
For GBDT, $\alpha = 0.001$ and $\gamma = 0$.
RLWE is realized using Cheon--Kim--Kim--Song~\cite{asiacrypt/CheonKKS17} of SEAL~4.1~\cite{sealcrypto} with $p = 2^\ell, N = 8192, \underline{N} = 4096, Q \approx 2^{109}$~\cite{uss/LuHZWH23}.

\smallskip
\noindent
\textbf{Testbed}:
All experiments run on two identical hosts, each with an $8$-core $2.5$GHz Intel Xeon Platinum 8255C CPU and $64$GB RAM under:
i) \textbf{LAN}: $1$Gbps bandwidth with a latency (ping time) ${\sim}2$ms, and
ii) \textbf{WAN}: we emulate a $200$Mbps link with $20$ms latency using the traffic control utility ($\texttt{tc}$).

\smallskip
\noindent
\textbf{Metrics}:
We measure the end-to-end (offline and online) runtime and the total communication costs.

\smallskip
\noindent
\textbf{Microbenchmarks}:
Table~\ref{tab:microbenchmark} shows the performance breakdown.
Interactive primitives, requiring message exchange between parties, include
the protocols for
$\mathsf{argmax}$
(taking $10$ inputs), 
$\mathsf{greater}$, 
$\mathsf{mux}$, 
$\mathsf{BOIS}$, 
$\mathsf{BinMatVec}$ 
(taking two $10^5 \time 128$ matrices), and our optimized 
$\mathsf{sigmoid}$ (Section~\ref{sect:our_sigmoid}).

For local primitives, we compare our $\mathsf{FastPackLWEs}$ with Squirrel's subroutine.
Combined with the optimizations (Section~\ref{sect:adv_optimizations}), 
we achieve a ${\sim}5 \times$ runtime speedup.

\begin{table}[!t]
	\small
	\centering
	\caption{Microbenchmarks of $\sys$, single-thread}

	\begin{tabular}{l|r|r|r|r}
		\hline 
		Protocol 
		& 
		\!\!instance\!\! &
		\begin{tabular}{@{}c@{}}Time (s)\\
		WAN
		\end{tabular} 
		& \begin{tabular}{@{}c@{}}Time (s)
		\\
		LAN \end{tabular}
		& \begin{tabular}{@{}c@{}}Comm. 
		\\(MB) \end{tabular}\\
		\hline 
		\hline 
		$\mathcal{F}_{\mathsf{argmax}}$
		\tablefootnote{Each run with $10$ inputs.} 
		& $10^5$ 
		& $4.9$ & $3.0$ & $13.4$\\
		$\mathcal{F}_{\mathsf{greater}}$
		& $10^6$
		& $22.2$ & $21.2$ & $81.9$\\
		$\mathcal{F}_{\mathsf{mux}}$
		& $10^6$
		& $12.4$ & $1.7$ & $17.5$\\
		$\mathcal{F}_{\mathsf{BOIS}}$
		& $10^6$
		& $12.3$ & $8.4$ & $68.6$\\
		$\mathcal{F}_{\mathsf{BinMatVec}}$
        \tablefootnote{Input vector size of $128$ vector size and matrix size of $128\times10^5$.}
		& $1$ 
		& $3.7$ & $1.8$ & $29.6$\\
		\hline
		$\mathcal{F}_{\mathsf{sigmoid}}$ (ours)
		& $10^5$\!\!
		& $\mathbf{13.4}$ & $\mathbf{4.8}$ & $\mathbf{38.4}$\\
		$\mathcal{F}_{\mathsf{sigmoid}}$~\cite{uss/LuHZWH23}
		& $10^6$
		& $31.5$ & $15.5$ & $235.8$\\
		\hline
	\end{tabular}

	\vspace{3pt}

	\begin{tabular}{l|c|c}
		\hline
		Local &
		\# of inputs & Time (s)\\
		\hline
		$\mathsf{FastPackLWEs}$ (ours) & 
		$128$ & $\mathbf{0.22}$\\
		$\mathsf{FastPackLWEs}$ (ours) & 
		$512$ & $\mathbf{0.87}$\\
		$\mathsf{LWEDimLift} + \mathsf{PackLWEs}$~\cite{uss/LuHZWH23} &
	$128$ & $1.02$\\
		$\mathsf{LWEDimLift} + \mathsf{PackLWEs}$~\cite{uss/LuHZWH23} &
		$512$ & $3.80$\\
		\hline
	\end{tabular}
	\label{tab:microbenchmark}
	\vspace{-3pt}
\end{table}

\subsection{Overhead of \texorpdfstring{$\sysfinal$}{AnonGBDT-OTSA}}
We ran benchmarks on synthetic datasets with varying $n$, $m$, and $D$, using a single thread.
Table~\ref{tab:eva_comp_base_final_performance} reports efficiency results.
For runtime, $\sysfinal$ is consistently faster by ${\sim}3 \times$ to $20\times$, and the gap widens as $n$, $m$, or $D$ grows.
For communication, our base \sys scales linearly in $n$ and $m$, eventually out-of-memory (OOM), due to the OT‑heavy $\mathcal{F}_{\mathsf{mux}}$-based gradient aggregation non‑batched $\mathcal{F}_{\mathsf{OIS}}$.
$\sysfinal$ removes these bottlenecks, achieving up to
$44\times$ lower traffic and avoiding OOM even at the largest dataset sizes tested.

\sys is also evaluated under unbalanced settings ($n_0\neq n_1$), as shown in Table~\ref{tab:eva_comp_base_final_unbalanced_performance}.
The base design benefits more than $\sysfinal$, with LAN speedups of $1.5{-}5 \times$ as the imbalance ratio ($\frac{n_0}{n_1}$) decreases.
This is because $\sysfinal$ must run $\mathcal{F}_{\mathsf{BOIS}}$, $\mathcal{F}_{\mathsf{sigmoid}}$ twice, processing to align with each party's inputs separately.


\begin{table}[t]
	\small
	\centering
	\caption{Efficiency of $\sys$
	($n = n_0 = n_1 = 10^5$, $m = 50$, $B = 16$, $D = 5$, single thread)}
	\begin{tabular}{l|rr|rr|rr}
		\hline 
		Parameters\!\!\! & \multicolumn{2}{c|}{WAN Time (s)} & \multicolumn{2}{c|}{LAN Time (s)} & \multicolumn{2}{c}{Comm. (GB)\!} \\
				& Base & Final & Base & Final & Base & \!\!\!Final\! \\
		\hline 
		$n = 100$K\!\!\! & \!\!$3873.5$ & $195.2$ & $567.0$ & $112.5$ & $26.1$ & \!\!\!$0.6$\! \\
		$n = 250$K\!\!\! & OOM & $346.5$ & OOM & $227.0$ & OOM & \!\!\!$1.1$\! \\
		$n = 500$K\!\!\! & OOM & $602.6$ & OOM & $422.2$ & OOM & \!\!\!$2.1$\! \\
		$n = 1$M	& OOM & \!\!\!\!$1120.9$ & OOM & $817.1$ & OOM & \!\!\!$4.0$\! \\
		\hline 
		$D = 3$	& \!\!$1077.0$ & $72.3$ & $142.8$ & $37.3$ & $6.6$ & \!\!\!$\!0.2$\! \\
		$D = 4$	& \!\!$1999.2$ & $117.1$ & $271.2$ & $64.5$ & $13.1$ & \!\!\!$0.4$\! \\
		$D = 6$	& \!\!$7462.7$ & $357.6$ & \!\!\!\!$1024.6$ & $203.6$ & $51.8$ & \!\!\!$0.9$\! \\
		\hline 
		$m = 10$	& $969.2$ & $91.2$ & $144.0$ & $36.9$ & $5.5$ & \!\!\!$0.4$\! \\
		$m = 25$	& \!\!$2061.5$ & $131.1$ & $303.1$ & $67.1$ & $13.2$ & \!\!\!$0.5$\! \\
		$m = 100$	& OOM & $334.0$ & OOM & $214.1$ & OOM & \!\!\!$0.7$\! \\
		\hline
	\end{tabular}
	\label{tab:eva_comp_base_final_performance}
    \vspace{-5pt}
\end{table}

\begin{table}[t]
	\small
	\centering
	\caption{Efficiency of $\sys$ for $n_0 \neq n_1$
	($n_1 = 10^5$, $m = 50$, $B = 16$, $D = 5$, single thread)}
	\begin{tabular}{l|rr|rr|rr}
		\hline 
		Size $n_0$ & \multicolumn{2}{c|}{WAN Time (s)} & \multicolumn{2}{c|}{LAN Time (s)} & \multicolumn{2}{c}{Comm. (GB)} \\
				& Base & Final & Base & Final & Base & Final \\
		\hline 
		$0.10n_1$ & $439.3$ & $98.5$ & $72.4$ & $49.1$ & $3.4$ & $0.34$ \\
		$0.20n_1$ & $780.4$ & $105.6$ & $121.3$ & $54.5$  & $5.6$ & $0.37$ \\
		$0.50n_1$ & $2156.3$ & $138.8$ & $267.4$ & $69.7$  & $15.2$ & $0.45$ \\
		$0.75n_1$	& $3047.2$ & $167.4$ & $413.8$ & $87.3$  & $21.3$ & $0.54$  \\
		$1.00n_1$	& $3873.5$ & $195.2$ & $567.0$ & $112.5$ & $26.1$ & $0.6$ \\
		\hline
	\end{tabular}
	\label{tab:eva_comp_base_final_unbalanced_performance}
    \vspace{-5pt}
\end{table}
\begin{table}[t]
	\small
	\centering
	\caption{Running time of $\sysfinal$ 
    and prior private (but ``leaky'') solutions ($T = 10$)}
	\begin{tabular}{c|c|c|r}
		\hline 
		Approach & Parameters & Settings & Time (s)\\ 
		\hline 
		\hline 
		Ours & \multirow{3}{*}{
		\begin{tabular}{@{}c@{}}
		$n = 5 \times 10^4$\\
		$D = 4$, $B = 8$\\
		$m_0 = 8$, $m_1 = 7$ \end{tabular}} & \multirow{3}{*}{
		\begin{tabular}{@{}c@{}} 
		LAN\\
		$6$ threads
		\end{tabular}} & $\mathbf{39.0}$\\
		\cite{uss/LuHZWH23} & & & $60.0$\\
		\cite{pvldb/WuCXCO20} & & & $1680.0$\\
		\hline
		Ours & \multirow{3}{*}{
		\begin{tabular}{@{}c@{}}
		$n = 2 \times 10^5$\\
		$D = 4$, $B = 8$\\
		$m_0 = 8$, $m_1 = 7$ \end{tabular}} & \multirow{3}{*}{
		\begin{tabular}{@{}c@{}}
		LAN\\ $6$ threads
		\end{tabular}} & $\mathbf{96.5}$\\
		\cite{uss/LuHZWH23} & & & $111.0$\\
		\cite{pvldb/WuCXCO20} & & & $448.0$\\
		\hline
		\hline 
		Ours & \multirow{3}{*}{
		\begin{tabular}{@{}c@{}}
		$n = 1.4 \times 10^5$\\
		$D = 5, B = 10$\\
		$m_0 = 7, m_1 = 16$ \end{tabular}} & \multirow{3}{*}{
		\begin{tabular}{@{}c@{}}
		LAN\\
		$32$ threads%
		\tablefootnote{%
		\label{footnote:tab:eva_comp_advance_exisiting_works}
		Pivot~\cite{pvldb/WuCXCO20} uses $32$ cores.
		Numbers of threads follow prior works~\cite{uss/LuHZWH23}.
		}
		\end{tabular}} & $139.1$\\
		\cite{uss/LuHZWH23} & & & $\mathbf{114.0}$\\
		\cite{cikm/FangZT0YWWZZ21} & & & $476.0$\\
		\hline
		Ours & \multirow{3}{*}{
		\begin{tabular}{@{}c@{}}
		$n = 1.4 \times 10^5$\\
		$D = 5, B = 10$\\
		$m_0 = 7, m_1 = 16$ \end{tabular}} & \multirow{3}{*}{\begin{tabular}{@{}c@{}}
		$100$Mbps\\
		$32$ threads%
        \textsuperscript{\ref{footnote:tab:eva_comp_advance_exisiting_works}}
		\end{tabular}} & $668.7$\\
		\cite{uss/LuHZWH23} & & & $\mathbf{400.0}$\\
		\cite{cikm/FangZT0YWWZZ21} & & & $1510.0$\\
		\hline
	\end{tabular}
	\label{tab:eva_comp_advance_exisiting_works}
    \vspace{-3pt}
\end{table}

\begin{figure*}[!t]
	\begin{minipage}[t]{.32\linewidth}
		\centering
		\includegraphics[width = \linewidth]{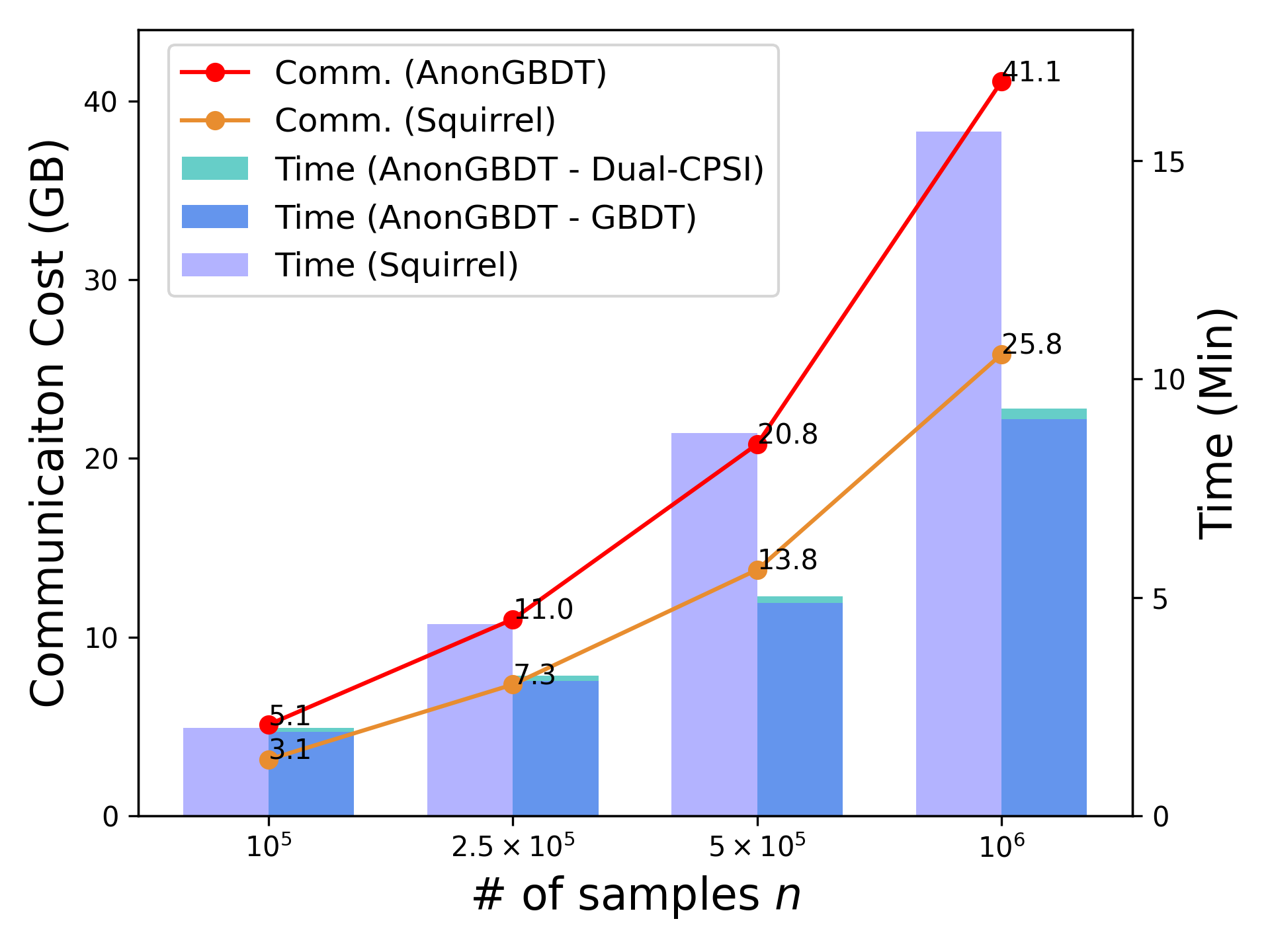}
		\subfloat{\footnotesize Time (LAN) vs. Sample size $n$}
		\label{fig:eva_scalability_sample_size_lan}
	\end{minipage}
	\hfill
	\begin{minipage}[t]{.32\linewidth}
		\centering
		\includegraphics[width = \linewidth]{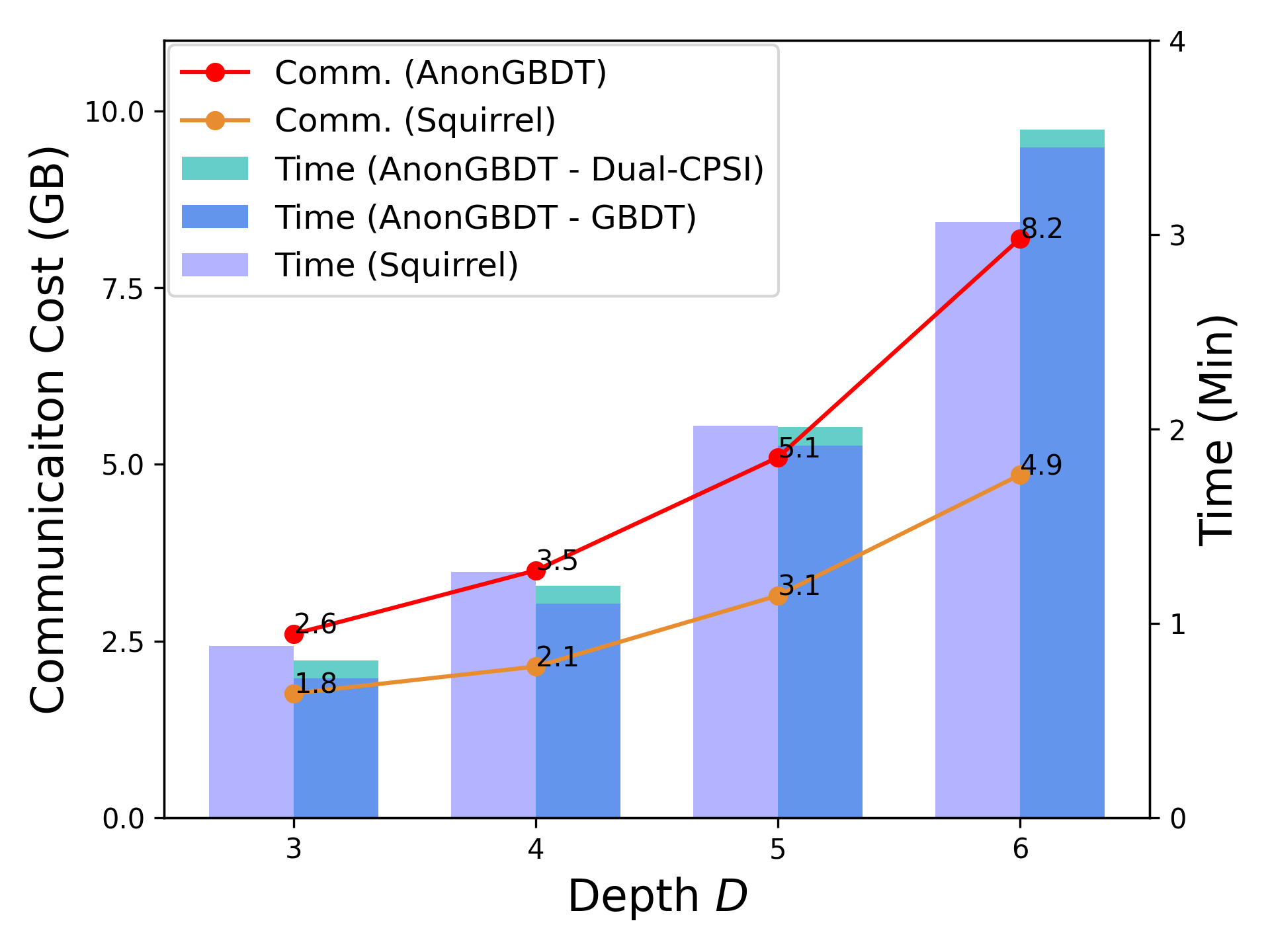}
		\subfloat{\footnotesize Time (LAN) vs. Depth $D$}
		\label{fig:eva_scalability_depth_lan}
	\end{minipage}
	\hfill
	\begin{minipage}[t]{.32\linewidth}
		\centering
		\includegraphics[width = \linewidth]{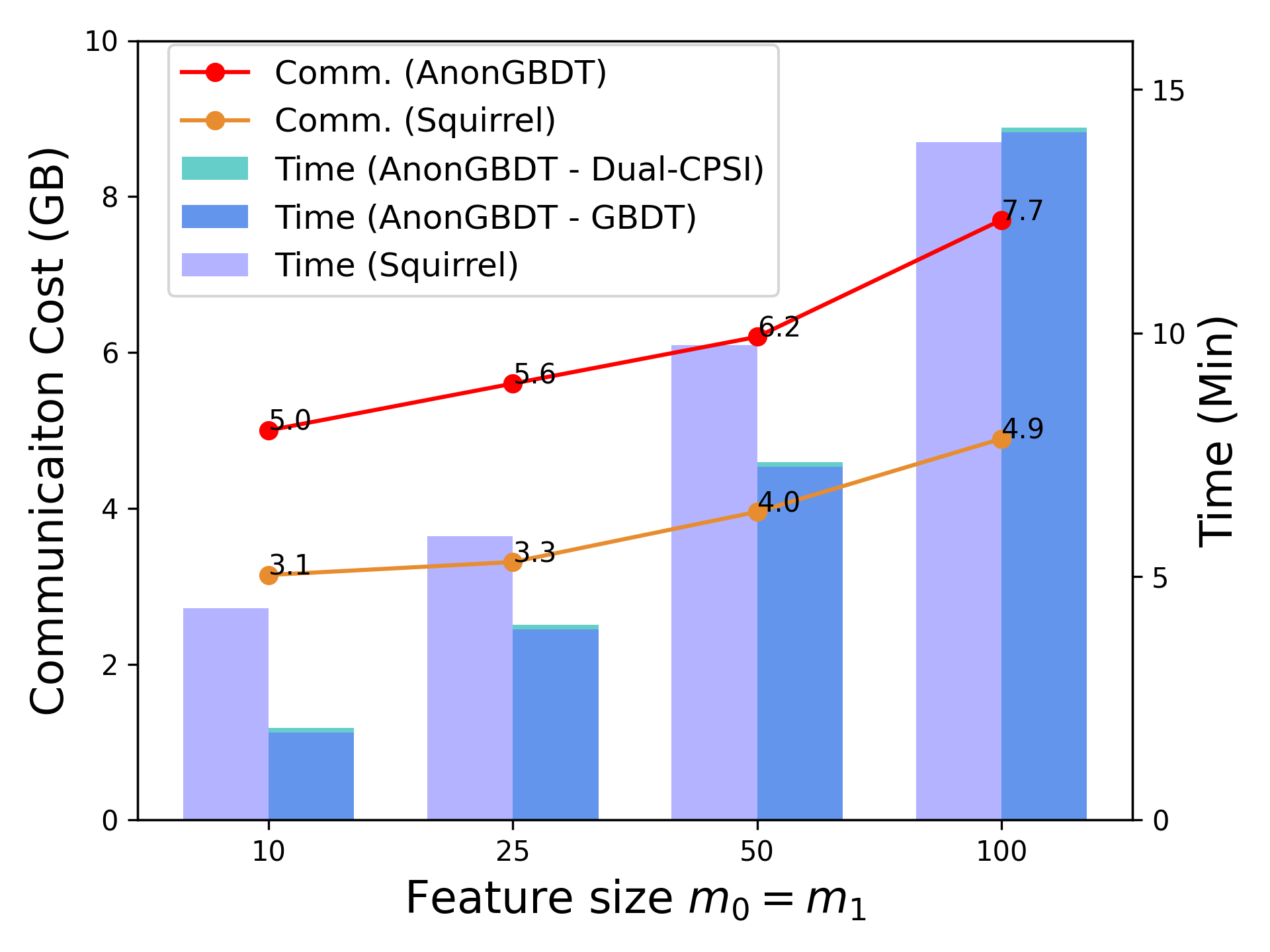}
		\subfloat{\footnotesize Time (LAN) vs. Feature size $m_0 = m_1$}
		\label{fig:eva_scalability_feature_size_lan}
	\end{minipage}
	\begin{minipage}[t]{.32\linewidth}
		\centering
		\includegraphics[width = \linewidth]{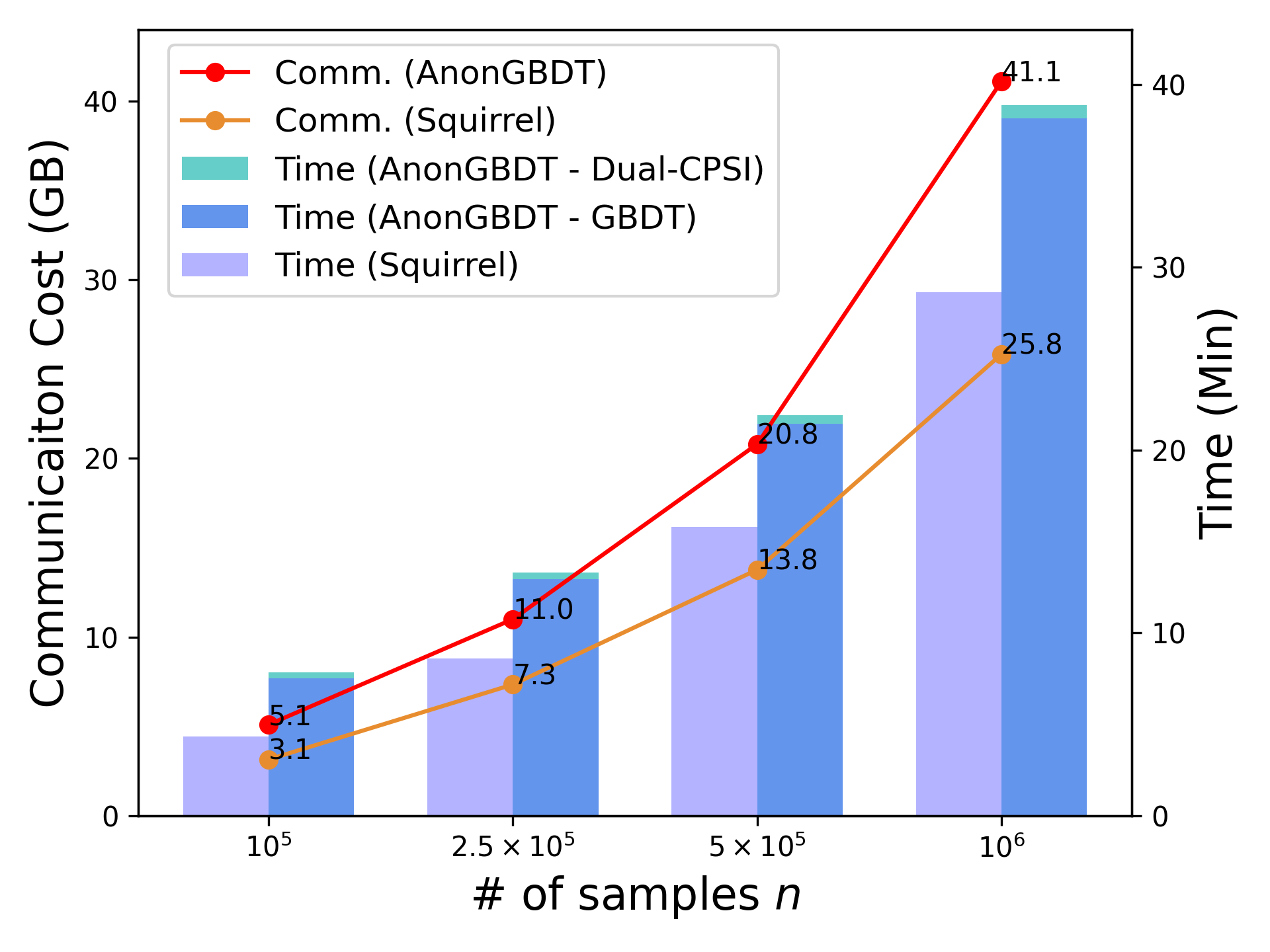}
		\subfloat{\footnotesize Time (WAN) vs. Sample size $n$}
\label{fig:eva_scalability_sample_size_wan}
	\end{minipage}
	\hfill
	\begin{minipage}[t]{.32\linewidth}
		\centering
		\includegraphics[width = \linewidth]{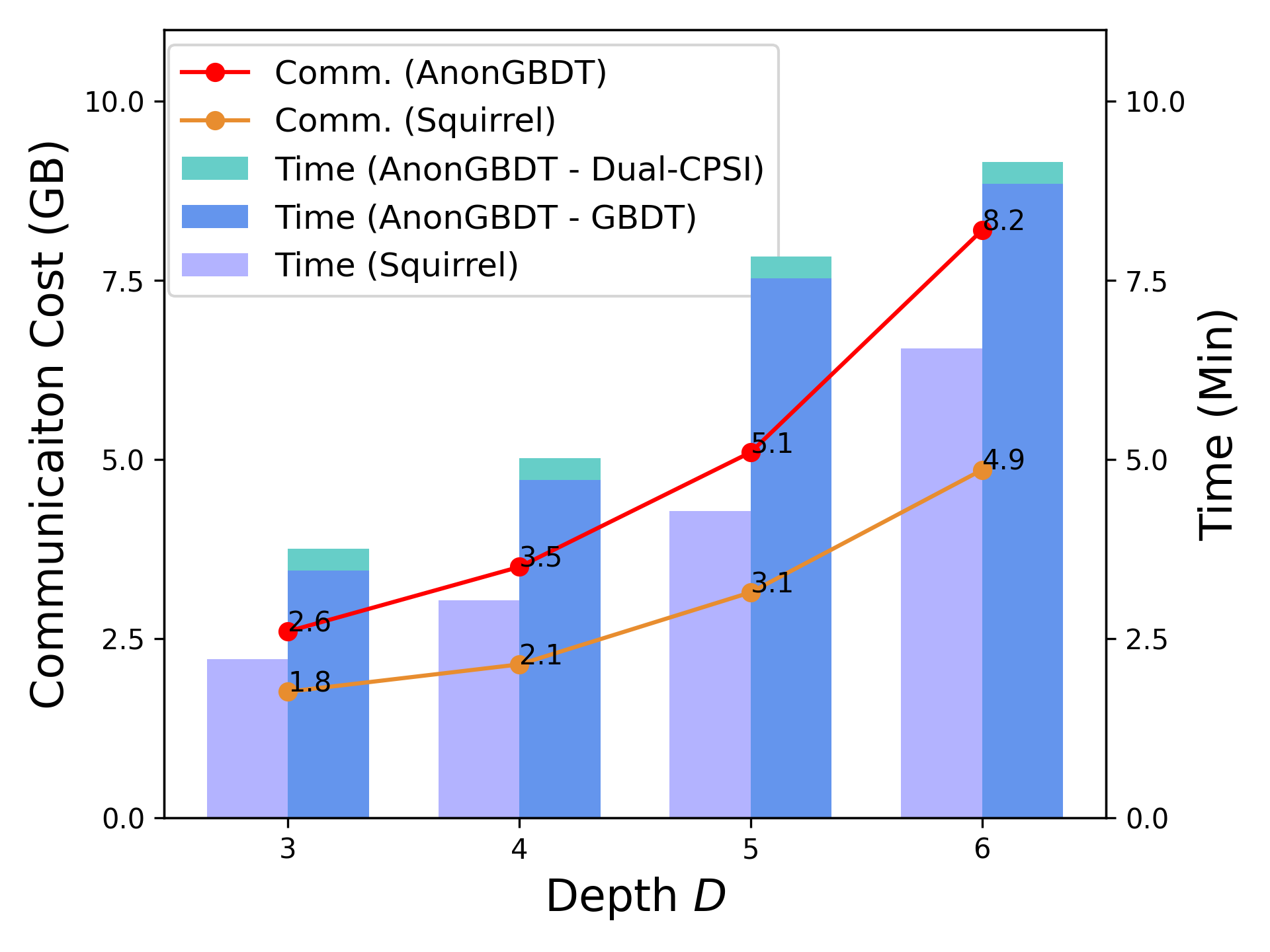}
		\subfloat{\footnotesize Time (WAN) vs. Depth $D$}
		\label{fig:eva_scalability_depth_wan}
	\end{minipage}
	\hfill
	\begin{minipage}[t]{.32\linewidth}
		\centering
		\includegraphics[width = \linewidth]{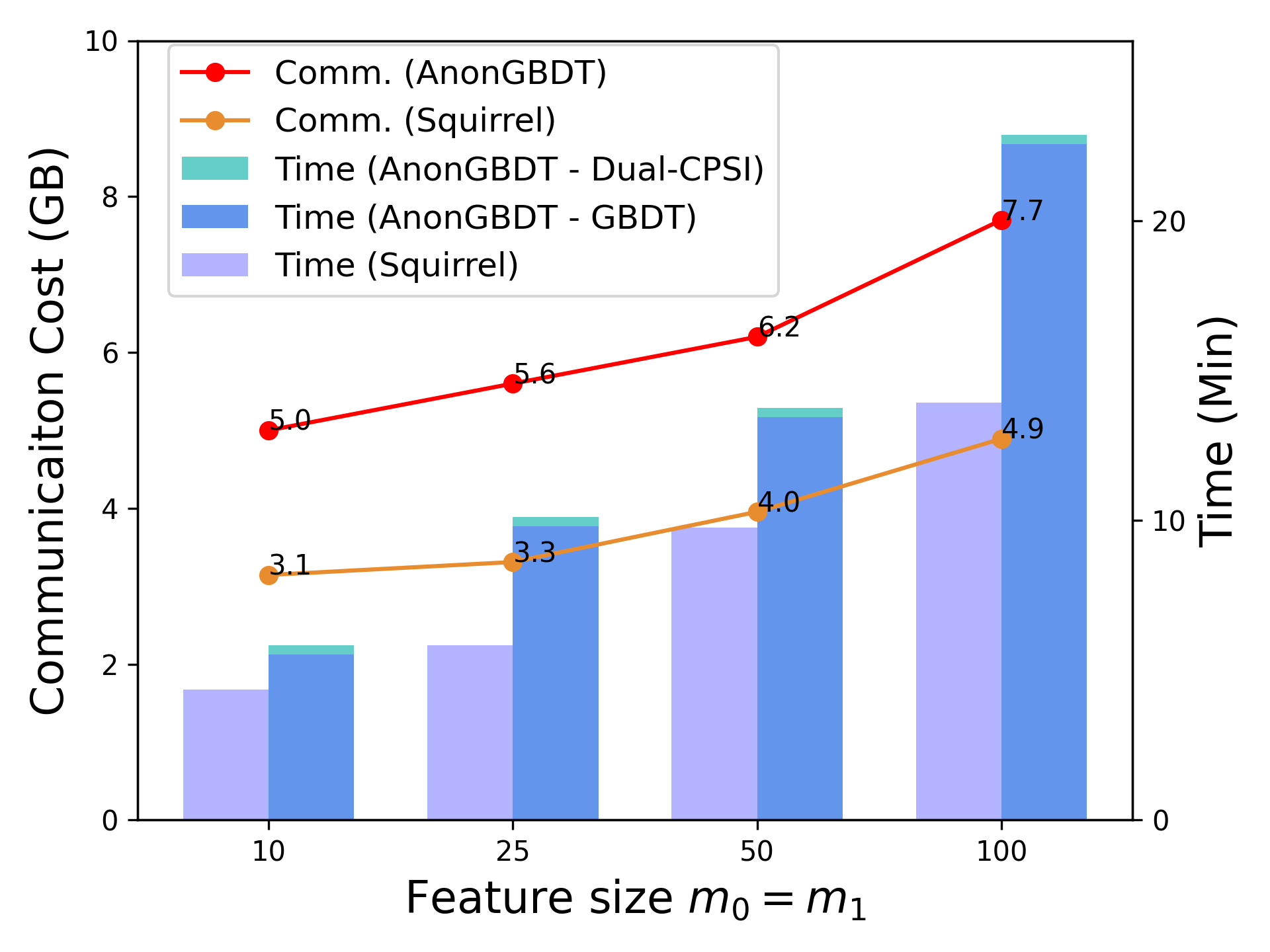}
		\subfloat{\footnotesize Time (WAN) vs. Feature size $m_0 = m_1$}
		\label{fig:eva_scalability_feature_size_wan}
	\end{minipage}
	\caption{Scalability of $\sysfinal$ vs. Squirrel ($T = 10$, $n = 10^5$, $m= 10$, $B = 16$, $D = 5$, and $8$ threads)}
	\vspace{-5pt}
	\label{fig:eva_scalability}
\end{figure*}

\subsection{Comparisons with Prior Arts}
\label{sect:exp:comparison}
\noindent
\textbf{Efficiency}:
$\sysfinal$ is evaluated on the same hardware as Squirrel~\cite{uss/LuHZWH23}, across two hosts with $8$ CPU cores each.
Pivot~\cite{pvldb/WuCXCO20} and HEP-XGB~\cite{cikm/FangZT0YWWZZ21} were evaluated under more favorable conditions (\eg, $32$ cores for HEP-XGB) and heterogeneous network settings;
we quoted their reported results in Table~\ref{tab:eva_comp_advance_exisiting_works}.
Despite the stronger privacy guarantees (no identifier leakage), our runtime is only ${\sim}0.7\times$--$1.7\times$ that of Squirrel and remains~competitive with the other less private schemes (which would require prohibitively costly generic tools for similar privacy).

\smallskip
\noindent
\textbf{Effectiveness}:
We train models via $\sysfinal$ on six public datasets~\cite{libsvm} and evaluate with our inference protocol $\varPi_{\mathsf{Infer}}$ in Figure~\ref{alg:anon_infer}.
XGBoost~\cite{xgblib} serves as a plaintext baseline.
Using $5$-fold cross-validation, $T = 10$ trees, and depth $D = 5$, we attain the highest F1 score across all tasks (Table~\ref{tab:eva_comp_literatures_effectiveness}), primarily due to our improved sigmoid approximation (Section~\ref{sect:our_sigmoid}).




\begin{table}[t]
	\small
	\centering
	\caption{F1 score comparison}
	\begin{tabular}{l|c|c|c|c||c}
		\hline 
		Dataset & Ours & Squirrel & Pivot & \begin{tabular}{@{}c@{}}HEP- \\ XGB \end{tabular} & \begin{tabular}{@{}c@{}}XG- \\ Boost \end{tabular} \\ 
		\hline 
		\hline 
		breast-cancer	& $\mathbf{0.920}$ & $0.917$ & $0.919$ & $0.889$ & $0.947$ \\
		phishing	& $\mathbf{0.957}$ & $0.957$ & $0.957$ & $0.951$ & $0.961$ \\
		a9a			& $\mathbf{0.654}$ & $0.651$ & $0.653$ & $0.643$ & $0.654$ \\
		cod-rna		& $\mathbf{0.882}$ & $0.402$ & $0.408$ & $0.403$ & $0.882$ \\
		skin\_nonskin & $\mathbf{0.988}$ & $0.742$ & $0.743$ & $0.741$ & $0.989$ \\
		covtype.binary\!\!		& $\mathbf{0.782}$ & $0.556$ & $0.572$ & $0.552$ & $0.796$ \\
		\hline
	\end{tabular}
	\label{tab:eva_comp_literatures_effectiveness}
	\vspace{-3pt}
\end{table}

\subsection{Scalability}
We assess scalability by running $\sysfinal$ with eight threads (unlike the settings for comparison in Section~\ref{sect:exp:comparison})
on synthetic datasets that vary in sample count~$n$, feature count $m$, and tree depth $D$.
Figure~\ref{fig:eva_scalability} reports runtime and traffic;
Squirrel's numbers are directly quoted.

Thanks to $\mathsf{FastPackLWEs}$ and the other engineering optimizations (Section~\ref{sect:adv_optimizations}), our LAN runtime is on par with Squirrel.
Because the improved sigmoid scales sublinearly with $n$, our scheme grows more gracefully as the sample size increases.
Communication rises linearly with~$n$, but its growth with $m$ and $D$ is markedly flatter:
i) dual-circuit-PSI allows us to use the communication-friendly $\mathcal{F}_{\mathsf{BinMatVec}}$, keeping the cost in $m$ low, and 
ii) our $\mathcal{F}_{\mathsf{BOIS}}$ reduces the indicator-sync complexity from $\mathcal{O}(2^D)$ to $\mathcal{O}(D)$.
Under WAN conditions, runtime is $2$--$4\times$ higher than in LAN, consistent with the added latency and reduced bandwidth.


\section{Discussion}
\subsection{Gradient-based One-Side Sampling (GOSS)}
LightGBM~\cite{nips/KeMFWCMYL17} accelerates histogram construction by prioritizing ``informative'' samples with large (first-order) gradient magnitudes.
Given the per-sample gradients $\{g_{i}\}_{i = 1}^{n}$ at a node, GOSS first sorts the samples by $|g_{i}|$ in descending order.
It keeps the top fraction $a$ (high-gradient set $\mathbb{H}$) \emph{intact}.
From the remaining set $\mathbb{L}$, it randomly samples a fraction $b$.
To correct the sampling bias, it multiplies the gradients and Hessians of the sampled low-gradient points by $(1 - a) / (1 - b)$ before building the histogram.
This reduces the working-set size from the sample size $n$ to $|\mathbb{H}| + b|\mathbb{L}|$, speeding up training in the plaintext setting.

\smallskip
\noindent
\textbf{Limitations in Secure Two-party Computation}:
Identifying $H$ requires an \emph{oblivious sort} of the secret-shared gradients, which costs $O(n\log n)$ time and substantial bandwidth (\cf, ${\sim}4$GB for one million records with $\ell = 32$~\cite{pkc/PecenyRRS25}).
The subsequent random sampling must also be oblivious, adding further overhead.
Consequently, the gains of GOSS in the clear are outweighed by the expense of these secure primitives, making it unsuitable for our secure GBDT pipeline.


\subsection{Private-ID}
Private-ID protocols~\cite{iacr/PrasadAPYSEV20,pkc/GarimellaMRSS21} return \emph{pseudo-random} (or secret-shared) identifiers for the intersection, rather than a Boolean mask over \emph{all} records (even those not in the intersection) as in circuit-PSI.
They have two limitations.

\smallskip
\noindent
\textbf{Cardinality leakage}:
They explicitly reveal $|\mathbf{ID}_{0} \cap \mathbf{ID}_{1}|$, whereas circuit-PSI does not.
Hiding $|\mathbf{ID}_{0} \cap \mathbf{ID}_{1}|$ motivates threshold PSI designs~\cite{wpes/ZhaoC18} that return intersection only if its size exceeds a threshold, whereas prior threshold PSI protocols may leak this size.

\smallskip
\noindent
\textbf{Loss of ordering}:
Neither party learns how the output IDs align with its local feature matrix~$\mathbf{M}$;
otherwise, the true IDs could be inferred from the feature order.
As a result, any subsequent operations must operate on \emph{secret-shared} versions $\langle \mathbf{M} \rangle_l^{\mathsf{B}}$.
For secure GBDT, this means gradient histograms must be built with the OT-heavy multiplexer, replicating the inefficiency of our base design.

In contrast, circuit-PSI reveals the ID-feature correspondence to the \emph{receiver}.
With our elaborated dual-circuit-PSI design, both parties possess a plaintext copy of their own~$\mathbf{M}$ aligned to the intersection, enabling both sides to use the more communication-efficient $\mathcal{F}_{\mathsf{BinMatVec}}$.


\subsection{Hiding the Splits}
All known secure two-party decision-tree protocols, including ours (with data alignment hidden), reveal the best splits at internal nodes~\cite{uss/LuHZWH23,crypto/LindellP00,pvldb/WuCXCO20,cikm/FangZT0YWWZZ21}, a leakage previously noted~\cite{dbsec/ZhuD10}.
A recent work~\cite{sigmod/HanCZFHS25} attempts to mitigate this by obliviously permuting the raw matrix $\mathbf{M}$ to $\langle\mathbf{M}'\rangle^{\mathsf{A}}$, secret-sharing a one-hot vector $\langle \mathbf{s} \rangle^{\mathsf{A}}$ for the chosen bin, and revealing, for each sample~$i$, the indicator $\mathbf{I}_{s}[i] = \mathbf{M}'[i] \cdot \mathbf{s}$ that determines the branch.
Since the permutation is hidden, the split index is assumed private.
Yet, the protocol leaks $|\mathbf{ID}_{0} \cap \mathbf{ID}_{1}|$ and enables each party to correlate revealed left/right counts with its own prefix matrix~$\widetilde{\mathbf{M}}$, allowing recovery of the feature/bin and contradicting privacy claims.

A direct remedy is to keep split indices secret-shared.
Three-party designs~\cite{popets/HamadaIKC23,ccs/BhardwajSCG24,ccs/LinHRZSL24} realize this via group-based data structures and per-layer oblivious sorting, but such sorting is prohibitively expensive with two parties.
Devising an efficient $2$-party variant remains an open problem.

\section{Related Work}
Some federated‑learning (FL) frameworks target private GBDT training.
SecureBoost(+)~\cite{expert/ChengFJLCPY21} employs Paillier homomorphic encryption, and VF$^{2}$Boost~\cite{sigmod/FuSYJXT021} refines the pipeline for large datasets.
They only protect the \emph{gradient aggregation},
leaving internal artifacts, including split points, leaf updates, \etc, in plaintext.
Such leakage enables inference attacks~\cite{uss/FuZJCWGZLW22,corr/TakahashiLL23},
for instance, if an income feature determines a split, a party lacking that feature can deduce a sample's income range from its branch assignment.
Takahashi~\etal~\cite{corr/TakahashiLL23} use local differential privacy to curb label leakage.
Although computationally light, differential privacy adds noise that trades accuracy for privacy.

Secure multiparty computation has been widely adopted.
Lindell and Pinkas~\cite{crypto/LindellP00} propose secure multiparty computation for decision-tree training using garbled circuits and OT.
Hoogh~\etal~\cite{fc/HooghSCA14} use secret sharing for multiparty GBDT training.
Abspoel~\etal~\cite{popets/AbspoelEV21} propose an oblivious sorting network for GBDT training with continuous feature values.
Due to high communication costs, these designs are inefficient for large datasets.
Solutions using function secret sharing~\cite{tifs/ChenLWHXZ23,asiaccs/JiangMDL24} achieve efficient online decision tree training, but they require offline 
correlated random number generation to support secure multiplication.

The most relevant work~\cite{pvldb/WuCXCO20, cikm/FangZT0YWWZZ21, uss/LuHZWH23} to ours is the hybrid approaches.
Pivot~\cite{pvldb/WuCXCO20} incorporates threshold Paillier HE and secret sharing for tree training.
Fang~\etal~\cite{cikm/FangZT0YWWZZ21} propose private large-scale XGBoost training.
However, their scheme heavily relies on a semi-honest third party, \eg, trusted execution environments;
otherwise, the performance will degrade significantly.
Squirrel~\cite{uss/LuHZWH23} improves the histogram-calculation efficiency via RLWE-based (instead of Paillier) HE and proposes a new method to approximate the sigmoid function using a Fourier series.
Instead, its follow-up~\cite{dlsp/DaiJLM24} realizes pick-then-sum with pure secret sharing, shifting most costs to an offline phase.
See the systematization of Chatel~\etal~\cite{popets/ChatelPTH21} for related results, including earlier work.

Some selected schemes develop anonymous vertical FL.
Sun~\etal~\cite{corr/SunYYZGXW21} suggest using private set union instead of PSI to obscure the common IDs and generating synthetic features for samples outside the intersection, which can degrade model accuracy.
OpenVFL~\cite{tifs/YangCPSCDLSYD24} proposes a labeled PSI that returns only homomorphically encrypted features within the intersection.
This conceals the common elements and confines the subsequent training to the encrypted domain under homomorphic encryption, which is costly when GBDT pays heavily for secure comparisons.
OpenVFL~\cite{tifs/YangCPSCDLSYD24} implemented logistic regression only
and is not open source.

 
\section{Conclusion}
Existing private GBDT protocols run standard PSI once to align the two datasets on their intersection, then train on the aligned records, which inevitably discloses the intersection.
Circuit-PSI can avoid this leakage in principle, but its asymmetric interface clashes with order-dependent training flows, and embedding the training loop as a generic circuit makes it impractically slow.

We introduce \sys, the first anonymous GBDT framework.
The base variant adds two new primitives: $\mathcal{F}_{\mathsf{OIS}}$ for indicator synchronization, and $\mathcal{F}_{\mathsf{mux}}$-based gradient aggregation.
Although more private, it incurs considerable communication and memory overheads.

Our final scheme $\sysfinal$ resolves these bottlenecks through a dual-circuit-PSI architecture that symmetrizes ownership of the ID-feature mapping.
It enables efficient~$\mathcal{F}_{\mathsf{BinMatVec}}$ aggregation, further accelerated by our $\mathsf{FastPackLWEs}$ subroutine.
Moreover, we propose a batched OPPRF-based OIS protocol.
Empirically, $\sysfinal$ is up to an order of magnitude faster than the base design.

Beyond GBDT, our dual-circuit-PSI can integrate with other vertical-learning frameworks, \eg, Split-NN~\cite{jnca/GuptaR18}, to add anonymity at modest overheads.

\bibliographystyle{IEEEtran}
\bibliography{reference}

\appendices


\section{Optimizations in Sigmoid}
\subsection{Secure Decimal Multiplication for 
Trigonometric Functions}
\label{apdx:sigmoid_mul}
$\mathcal{F}_{\mathsf{mul}}$ relies on Beaver triples~\cite{crypto/Beaver91a} for $a \cdot b = c$ in secret shares, which can be efficiently implemented in RLWE-based HE~\cite{cans/RatheeSS19}, \eg, Brakerski--Gentry--Vaikuntanathan~\cite{toct/BrakerskiGV14}.
Specifically, $P_0$ generates random $a \sample \mathbb{Z}_{2^{64}}$ and sends ciphertext $\llbracket a \rrbracket$ to $P_1$.
$P_1$ generates another random $b \sample \mathbb{Z}_{2^{64}}$ and gets $\llbracket c \rrbracket = \llbracket a \rrbracket \cdot b$.
However, the ciphertext length $N$ is usually $8192$, and the plaintext modulus is no more than a $32$-bit prime number (for efficient homomorphic operations), which cannot encode $c$ with a bit size of $128$ to fit $\mathbb{Z}_{2^{64}}$.
The literature~\cite{cans/RatheeSS19} solves this by splitting the large plaintext modulus $p$ into four smaller $p_i$ where $p = \prod{p_i}$ via the Chinese remainder theorem (CRT).
Four instances (with modulus $p_i$ respectively) of the above protocol are executed to get outputs $c_0, \ldots, c_3$,
which can be combined into the output $c$ (with modulus $p$) via CRT and then truncated into $\mathbb{Z}_{2^{64}}$.
Note that each $p_i$ corresponds to $Q$ of bit size.

In our $\mathcal{F}_{\mathsf{sigmoid}}$, the range of floating numbers $\sin(x)$, $\cos(x)$, and $\sin(x) \cdot \cos(x)$ are all within $[-1, 1]$ (encoded as $[-2^f, 2^f]$).
Recalling the average error of our sigmoid approximation is $0.002$, we can set $f = 15$ in multiplication.
Therefore, the plaintext modulus when encrypting $\sin(x)$ and $\cos(x)$ can be less than $2^{32}$, using only one $p$ instead of four $p_i$.
The communication/computation cost of one $\mathcal{F}_{\mathsf{mul}}$ in $\mathcal{F}_{\mathsf{sigmoid}}$ is now only $1 / 4$ of the original $\mathcal{F}_{\mathsf{mul}}$.

\subsection{Configuration of Fourier Series}
\label{apdx:fourier}
\begin{figure}[htb]
	\centering
	\includegraphics[width = 0.8 \linewidth]{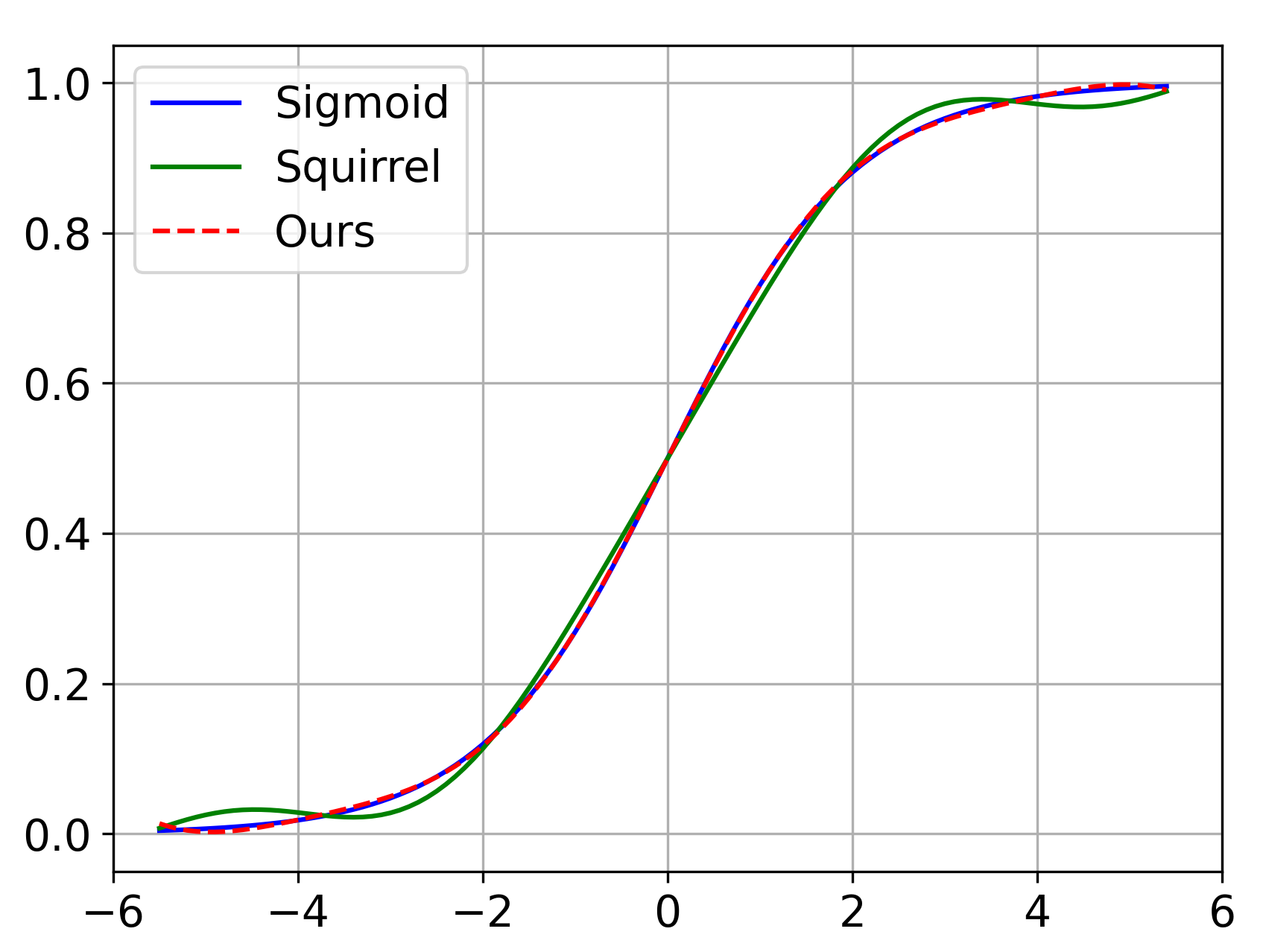}
	\vspace{-10pt}
	\caption{Comparison of sigmoid approximations}
\label{fig:sigmoid_approximation}
	\vspace{-3pt}
\end{figure}

We use the following coefficients and parameters
\begin{align*}
	a_0 & = 0.5, & a_1 & = 1.642327,
	\\
	a_2 & = -1.070336, & a_3 & = 0.5510985,
\end{align*}
and $L = 3$ such that the Fourier series output is bounded by $1$ within $[-5.6, 5.6]$.
Figure~\ref{fig:sigmoid_approximation} shows the comparison between our sigmoid with Squirrel and the ground truth.

\subsection{Complexity Comparison}
Table~\ref{tab:complexity_comparison_sigmoid} compares our $\mathcal{F}_{\mathsf{sigmoid}}$ with the corresponding protocol of Squirrel.
With $\lambda = 128$, $f = 20$, and $\delta = 4$, our total communication is $2,976$ bits (vs.\ $19,816$ for Squirrel).

\begin{table}[htb]
	\small
	\centering
	\caption{Amortized bit communication (comm.) complexity comparison of $\mathcal{F}_{\mathsf{sigmoid}}$ between ours and Squirrel's}
	\vspace{-5pt}
	\begin{tabular}{|c|c|c|}
		\hline
		$\mathcal{F}_{\mathsf{sigmoid}}$ & Ours & Squirrel \\
		\hline
		\begin{tabular}{@{}c@{}} Comm. complexity \\ of $\mathcal{F}_{\mathsf{mul}}$ \end{tabular} 
		& $288$ & $288 \cdot 4$ \\
		\hline
		\begin{tabular}{@{}c@{}} Comm. complexity \\ of two $\mathcal{F}_{\mathsf{greater}}$\end{tabular}
		& 
		\begin{tabular}{@{}c@{}}${\approx}1384 - 4 \log_2 \lfloor\frac{f - \delta}{4}\rfloor$
		\\
		$- 32\lfloor\frac{f - \delta}{4}\rfloor$
		\end{tabular}
		& ${\approx}1384$ \\
		\hline
		\begin{tabular}{@{}c@{}}\# of terms in\\ Fourier series \end{tabular} & $4$ & $9$ \\
		\hline
		\begin{tabular}{@{}c@{}} Comm. complexity \\ of $\mathcal{F}_{\mathsf{sigmoid}}$\end{tabular} & 
		\begin{tabular}{@{}c@{}}${\approx}1384 - 4 \log_2 \lfloor\frac{f - \delta}{4}\rfloor$
		\\
		$ - 32\lfloor\frac{f - \delta}{4}\rfloor+ 288 \cdot 6$
		\end{tabular}&
		\begin{tabular}{@{}c@{}} ${\approx}1384~+$ \\
		~~$288 \cdot 4 \cdot 16$\end{tabular} \\
		\hline
	\end{tabular}
	\label{tab:complexity_comparison_sigmoid}
	\vspace{-5pt}
\end{table}

\section{Batched OPPRF-based OIS (BOIS)}
\label{apdx:bois}

The functionality and protocol of batched OPPRF-based BOIS are shown in Figure~\ref{alg:bois}.

For completeness, we note an alternative realization of $\mathcal{F}_{\mathsf{BOIS}}$ via oblivious transfer for a sparse array (OTSA)~\cite{popets/ZhaoC17},
which provides an alternative view of the OPPRF step as a sparse-array transfer.
Namely, it stores the programmed shares at the same bucket-index domain and uses fixed-length fillers sampled from the output range.

For comparison of array-transfer abstractions, Zhang~\etal~\cite{uss/ZhangCLZL23} compare OTSA with oblivious key-value stores~\cite{ccs/RaghuramanR22} and observe that OTSA additionally hides the query-to-output correspondence.

\begin{figure}[t!]
\begin{tcolorbox}[colback = white!5!white, left = 0mm, right = 0mm, top = 0mm, bottom = 0mm, title = Functionality $\mathcal{F}_{\mathsf{BOIS}}$]
\begin{algorithmic}[1]
	\PcInput
	$\{(z_*^{(k)}, u_*^{(k)})\}_{k \in \mathbf{k}}$,
	$\{\langle \mathbf{b}^{(k, 0)} \rangle^\mathsf{B}_c, \langle \mathbf{b}^{(k, 1)} \rangle^\mathsf{B}_c\}_{k \in \mathbf{k}}$,
	$\widetilde{\mathbf{M}}_c$, 
	and the simple hash table $\mathbf{T}_{\mathsf{sh}}^{1 - c}$.
	\PdInput
	$\{\langle \mathbf{b}^{(k, 0)} \rangle^\mathsf{B}_{1 - c}, \langle \mathbf{b}^{(k, 1)} \rangle^\mathsf{B}_{1 - c}\}_{k \in \mathbf{k}}$, 
	and
	\newline
	the cuckoo hash table $\mathbf{T}_{\mathsf{ch}}^{1 - c}$.
	\PP A set of nodes $\mathbf{k}$ to be synchronized
	\Ensure
	$\{\langle \mathbf{b}^{(2k, i)} \rangle^\mathsf{B}$,
	$\langle \mathbf{b}^{(2k + 1, i)} \rangle^\mathsf{B}\}_{i \in \{0, 1\}, k \in \mathbf{k}}$.
\end{algorithmic}
\end{tcolorbox}

\begin{tcolorbox}[colback = white!5!white, left = 0mm, right = 0mm, top = 0mm, bottom = 0mm, title = Protocol $\varPi_{\mathsf{BOIS}}$]
\begin{algorithmic}[1]
	\For{each node $k \in \mathbf{k}$} \label{alg:bois:L1}
	\State $P_c$ samples $\langle \mathbf{b}_*^{(k, c)} \rangle^\mathsf{B}_c = \mathbf{r}^c \sample \{0, 1\}^{(1 + \epsilon)n}$.
	\label{alg:bois:L2}
	\State $P_c$ sets $\mathbf{b}_*^{(k, c)} = \widetilde{\mathbf{M}}_c[z_*^{(k)}B + u^{(k)}_*]$.
	\label{alg:bois:L3}
	\State $P_c$ shares $\langle \mathbf{b}_*^{(k, c)} \rangle^\mathsf{B}_{1 - c} \gets \mathbf{b}_*^{(k, c)} \oplus \mathbf{r}^c$ with $P_{1 - c}$.
	\label{alg:bois:L4}
	\State \parbox[t]{\dimexpr\linewidth - \leftmargin - \labelsep - \labelwidth}{$P_l$: $\langle \mathbf{b}^{(2k, c)} \rangle^\mathsf{B}_l = \mathcal{F}_{\mathsf{and}}(\langle \mathbf{b}^{(k, c)}_* \rangle^\mathsf{B}_l, \langle \mathbf{b}^{(k, c)} \rangle^\mathsf{B}_l)$ 
	\newline and
	$\langle \mathbf{b}^{(2k + 1, c)} \rangle^\mathsf{B}_l = \langle \mathbf{b}^{(2k, c)} \rangle^\mathsf{B}_l \oplus \langle \mathbf{b}^{(k, c)} \rangle^\mathsf{B}_l$.\strut}
	\label{alg:bois:L5}
	\EndFor
	\State $P_c$ samples $\mathbf{r}^{1 - c} \sample \{\mathbb{Z}_{2^{|\mathbf{k}|}}\}^{(1 + \epsilon)n}$, and initializes packed assignment $\widetilde{\mathbf{b}}'_* = \mathbf{0}^{(1 + \epsilon)n}$.
	\label{alg:bois:L6}
	\For{each node $k \in \mathbf{k}$}
	\label{alg:bois:L7}
	\State $P_c$ ``reorders'' $\mathbf{b}_*^{(k, c)}$ to $\mathbf{b}'_*$ to align with $\mathbf{ID}_c$.
	\label{alg:bois:L8}
	\State $P_c$ packs $\widetilde{\mathbf{b}}'_* = \widetilde{\mathbf{b}}'_* \oplus (\mathbf{b}'_* \ll k)$.
	\label{alg:bois:L9}
	\State $P_c$ sets $\langle \mathbf{b}^{(k, 1 - c)}_* \rangle^\mathsf{B}_{c} = (\mathbf{r}^{1 - c}\gg k)\& \mathbf{1}^{(1 + \epsilon)n}$.
	\EndFor \label{alg:bois:L10}
	\State $\forall j \in [0, e), \forall i \in [0, (1 + \epsilon)n)$:
	\newline
	$P_c$ sets a table $\mathbf{T} = \{(\mathbf{T}_{\mathsf{sh}}^{1 - c}[i, j], \widetilde{\mathbf{b}}'_*[i'] \oplus \mathbf{r}^{1 - c}[i] \}$, and
	$\mathbf{ID}_c[i']$ is placed in $\mathbf{T}_{\mathsf{sh}}^{1 - c}[i, j]$ via simple hashing.\label{alg:bois:L11} 
	\State $P_l$ runs $\mathcal{F}_{\mathsf{OPPRF}}(\{\mathbf{T}\}, \{\mathbf{T}_{\mathsf{ch}}^{1 - c}\})$;
	$P_{1 - c}$ gets $\widetilde{\mathbf{b}}^{(1 - c)}_*$ \label{alg:bois:L12}
	\For{each node $k \in \mathbf{k}$}
	\label{alg:bois:L13}
	\State $P_{1 - c}$: $\langle \mathbf{b}^{(k, 1 - c)}_* \rangle^\mathsf{B}_{1 - c} = (\widetilde{\mathbf{b}}^{(1 - c)}_* \gg k) \& \mathbf{1}^{(1 + \epsilon)n}$.
	\label{alg:bois:L14}
	\State $P_l$:$\langle \mathbf{b}^{(2k, 1 - c)} \rangle^\mathsf{B}_l\!=\!\mathcal{F}_{\mathsf{and}}(\langle \mathbf{b}^{(k, 1 - c)}_* \rangle^\mathsf{B}_l, \langle \mathbf{b}^{(k, 1 - c)} \rangle^\mathsf{B}_l)$.\!\label{alg:bois:L15}
	\State $P_l$:$\langle \mathbf{b}^{(2k + 1, 1 - c)} \rangle^\mathsf{B}_l = \langle \mathbf{b}^{(2k, 1 - c)} \rangle^\mathsf{B}_l \oplus \langle \mathbf{b}^{(k, 1 - c)} \rangle^\mathsf{B}_l$.
	\label{alg:bois:L16}
	\EndFor
\end{algorithmic}
\end{tcolorbox}
	\vspace{-5pt}
\caption{Ideal functionality and protocol of $\mathsf{BOIS}$}
	\vspace{-3pt}
\label{alg:bois}
\end{figure}

\section{Anonymous Batch Inference}
\label{apdx:sec_inference}
The functionality and the protocol of our anonymous batch inference are shown in Figure~\ref{alg:anon_infer}.

\begin{figure}
\begin{tcolorbox}[colback = white!5!white, left = 0mm, right = 0mm, top = 0mm, bottom = 0mm, title = Functionality $\mathcal{F}_{\mathsf{Infer}}$]
\begin{algorithmic}[1]
	\PaInput $\mathbf{ID}_0$, $\mathbf{X}_0$, best split $\mathbf{C}_0$, weight share $\langle \mathbf{w} \rangle^\mathsf{A}_0$.
	\PbInput $\mathbf{ID}_1$, $\mathbf{X}_1$, best split $\mathbf{C}_1$, weight share $\langle \mathbf{w} \rangle^\mathsf{A}_1$.
	\Ensure Prediction probability $\mathbf{p}$ to $P_0$.
\end{algorithmic}
\end{tcolorbox}

\begin{tcolorbox}[colback = white!5!white, left = -1mm, right = -1mm, top = 0mm, bottom = 0mm, title = Protocol $\varPi_{\mathsf{Infer}}$]
	\begin{algorithmic}[1]
		\State Jointly invoke
		$\langle \mathbf{q} \rangle^\mathsf{B}_l \ \gets
		\mathcal{F}_{\mathsf{CPSI}}(\{\mathbf{ID}_0\}, \{\mathbf{ID}_1\})$,
		$P_0$ runs as receiver with cuckoo hash table $\mathbf{T}_{\mathsf{ch}}$,
		$P_1$ runs as sender with simple hash table $\mathbf{T}_{\mathsf{sh}}$.
	\label{alg:anon_infer:L1}
			\For {tree $t \in \mathcal{T}_t$}
			\State $P_l$ sets root node's indicator $\mathbf{b}^{(0,~\!l)} = \mathbf{1}^{(1 + \epsilon)n}$
		\label{alg:anon_infer:L3}
			\For{internal node $k \in [1, 2^{D - 1})$, each sample $i$}
			\If{$\mathbf{C}_l[k]$ is not $\perp$}
			\State $P_l$ gets $(z^{(k)}_*, u^{(k)}_*)\gets\mathbf{C}_l[k]$
			\If{$\mathbf{X}_l[i, z^{(k)}_*] \leq$ split value of ${z_*^{(k)}, u_*^{(k)}}$}
			\State $P_l$:
			$\mathbf{b}^{(2k, l)}[i] = \mathbf{b}^{(k, l)}[i]$.
			\State $P_l$:
			$\mathbf{b}^{(2k + 1, l)}[i] = 0$.
			\Else
			\State $P_l$:
			$\mathbf{b}^{(2k, l)}[i] = 0$.
			\State $P_l$:
			$\mathbf{b}^{(2k + 1, l)}[i] = \mathbf{b}^{(k, l)}_l[i]$.
			\EndIf
			\Else~
			$P_l$: $\mathbf{b}^{(2k, l)}[i] = \mathbf{b}^{(2k + 1, l)}[i] = \mathbf{b}^{(k, l)}[i]$ 
			\EndIf
			\EndFor 
			\State $P_0$ shares $\mathbf{b}^{(0)}$, $P_l$ gets $\langle \mathbf{b}^{(0)} \rangle^\mathsf{B}_l$.
		\label{alg:anon_infer:L14}
			\State \parbox[t]{\dimexpr\linewidth - \leftmargin - \labelsep - \labelwidth+2mm}{$P_1$: sets $\mathbf{r} \sample \{\mathbb{Z}_{2^{\text{\#leaves}}}\}^{(1 + \epsilon)n}$, and extracts $\langle \mathbf{b}^{(k, 1)} \rangle^\mathsf{B}_1$ which is the $k - 2^{D - 1}$-th bit of $\mathbf{r}$.\strut}
			\State $P_1$ packs all the leaves' indicators $\mathbf{b}^{(k, 1)}$ as $\mathbf{\widetilde{b}}'$
			\State \parbox[t]{\dimexpr\linewidth - \leftmargin - \labelsep - \labelwidth+2mm}{$\forall i \in [0, (1 + \epsilon)n)$, $\forall j \in [0, e)$,
					$P_1$ creates a table
					$\mathbf{T} = \{(\mathbf{T}_{\mathsf{sh}}[i, j], \mathbf{\widetilde{b}}'[i'] \oplus \mathbf{r}[i])\}$
					where $\mathbf{ID}[i']$ is placed in $\mathbf{T}_{\mathsf{sh}}$ via simple hashing.
			\strut}
			\State $P_l$ runs $\mathcal{F}_{\mathsf{OPPRF}}(\{\mathbf{T}\}, \{\mathbf{T}_{\mathsf{ch}}\})$;
			$P_0$ gets $\widetilde{\mathbf{b}}''$.
			\State $P_l$ initializes $\langle \mathbf{S} \rangle^\mathsf{B}_l = \mathbf{0}^{(1 + \epsilon)n}$
			\For{node $k \in \{2^{D - 1}, \ldots, 2^D - 1\}$}
			\State \parbox[t]{\dimexpr\linewidth - \leftmargin - \labelsep - \labelwidth}{$P_l$ extract indicator $\langle \mathbf{b}^{(k, 1)} \rangle^\mathsf{B}_l$ which is the $(k - 2^{D - 1})$-th bit of $\widetilde{\mathbf{b}}''$.
			\strut}
			\State $P_l$:
			$\langle \mathbf{b}^{(k)} \rangle^\mathsf{B}_l = 
			\mathcal{F}_{\mathsf{and}}(\langle \mathbf{b}^{(k, 0)} \rangle^\mathsf{B}_l, \langle \mathbf{b}^{(k, 1)} \rangle^\mathsf{B}_l)$
			\State \parbox[t]{\dimexpr\linewidth - \leftmargin - \labelsep - \labelwidth}{$P_l$:
			$\forall i \in [0, (1 + \epsilon)n), \langle \mathbf{S}[i] \rangle^\mathsf{A}_l = \langle \mathbf{S}[i] \rangle^\mathsf{A}_l + \mathcal{F}_{\mathsf{mux}}(\langle \mathbf{b}^{(k)}[i] \rangle^\mathsf{B}_l, \langle \mathbf{w}[k] \rangle^\mathsf{A}_l)$ \strut} \label{alg:anon_infer:L23}
			\EndFor 
			\EndFor 
			\State $P_l$:
			$\langle \tilde{\mathbf{y}} \rangle^\mathsf{A}_l = \mathcal{F}_{\mathsf{mux}}(\langle \mathbf{q} \rangle^\mathsf{B}_l, \langle \mathbf{S} \rangle^\mathsf{A}_l)$ \label{alg:anon_infer:L24}
			\State $P_0$ recovers $\tilde{\mathbf{y}}$ and gets $\mathbf{p} = \mathsf{sigmoid}(\tilde{\mathbf{y}})$.
		\label{alg:anon_infer:L25}
	\end{algorithmic}
\end{tcolorbox}
\vspace{-5pt}
\caption{Functionality and protocol of $\mathsf{Inference}$}
\label{alg:anon_infer}
\vspace{-3pt}
\end{figure}

\section{Security Proof}
\label{sec:appendix_security_proof}

\begin{theorem}
\label{theorem:ois}
$\varPi_{\mathsf{OIS}}$ (Figure~\ref{alg:ois_naive_cpsi}) is a secure protocol, which follows Definition~\ref{def:private_function} under ($\mathcal{F}_{\mathsf{OT}}$, $\mathcal{F}_{\mathsf{and}}$)-hybrid.
\end{theorem}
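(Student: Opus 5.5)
We prove the statement by building, for each corrupted party, a simulator in the $(\mathcal{F}_{\mathsf{OT}},\mathcal{F}_{\mathsf{and}})$-hybrid model, following Definition~\ref{def:private_function}. The simulator receives the party's input and its output shares $\langle \mathbf{b}^L\rangle^\mathsf{B}_l,\langle \mathbf{b}^R\rangle^\mathsf{B}_l$. The argument splits into two cases according to who owns $(z_*,u_*)$, since this fixes which branch of $\varPi_{\mathsf{OIS}}$ runs. Both parties know the branch: the owner $c$ is opened in the enclosing protocol, so the branch is part of each party's input (knowing $(z_*,u_*)$ or $\perp$). In both cases the outputs of $\mathcal{F}_{\mathsf{and}}$ in Line~\ref{alg:ois_naive_cpsi:L10} are fresh XOR shares. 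The simulator therefore sets the party's $\mathcal{F}_{\mathsf{and}}$ output to its given $\langle \mathbf{b}^L\rangle^\mathsf{B}_l$. The right-child share is then determined by local XOR with $\langle \mathbf{b}\rangle^\mathsf{B}_l$, which is consistent with $\langle \mathbf{b}^R\rangle^\mathsf{B}_l$ by correctness.

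\emph{Corrupted $P_0$.} The view of $P_0$ consists of its own random mask $\mathbf{r}$ and its hybrid calls. If $P_0$ owns the split, it only sends $\mathbf{b}_*\oplus\mathbf{r}$ and receives nothing. If $P_1$ owns it, $P_0$ acts as OT sender and receives nothing from $\mathcal{F}_{\mathsf{OT}}$. In both subcases $\mathcal{S}_0$ samples $\mathbf{r}$ uniformly, replays $P_0$'s deterministic messages, and feeds the $\mathcal{F}_{\mathsf{and}}$ output as above. This gives a view that is identically distributed to the real one, jointly with the outputs. Jointness holds because in the real execution $\mathcal{F}_{\mathsf{and}}$ outputs are uniform subject to reconstructing $\mathbf{b}^L$, and $\mathbf{r}$ is independent of them.

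\emph{Corrupted $P_1$.} If $P_0$ owns the split, $P_1$ receives $\langle \mathbf{b}_*\rangle^\mathsf{B}_1=\mathbf{b}_*\oplus\mathbf{r}$, which is uniform since $\mathbf{r}$ is a fresh one-time pad unknown to $P_1$. $\mathcal{S}_1$ simulates it with a uniform string. If $P_1$ owns the split, for each row $i$ it receives from $\mathcal{F}_{\mathsf{OT}}$ the value $\mathbf{r}'[i]=\mathbf{r}[i]\oplus\langle\widetilde{\mathbf{M}}_1\rangle^\mathsf{B}_0[z_*B+u_*,i]$. This is uniform and independent across $i$, again because of $\mathbf{r}[i]$. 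Note that only the chosen entry is revealed, by the ideal OT, so the other masked entries $\widetilde{\mathbf{r}}[j]$, which all share the same $\mathbf{r}[i]$, remain hidden. $\mathcal{S}_1$ samples each $\mathbf{r}'[i]$ uniformly.

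The main point to check, and the only real obstacle, is the joint distribution of these simulated messages with the $\mathcal{F}_{\mathsf{and}}$ output. In the real execution, $\langle\mathbf{b}_*\rangle^\mathsf{B}_1$ (a function of $\mathbf{r}'$ and $P_1$'s input) correlates with $\mathbf{b}^L$ only through $\mathcal{F}_{\mathsf{and}}$. The ideal $\mathcal{F}_{\mathsf{and}}$ re-randomizes its output, so $P_1$'s output share is uniform and independent of $\mathbf{r}'$ given the true output. We also rely on the reused mask $\mathbf{r}[i]$ across the $mB$ OT messages being harmless, precisely because $P_1$ sees only one message per row. Finally, we invoke modular composition (the Modular Composition theorem) to replace the hybrids by real OT (Ferret) and triple-based AND protocols. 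This yields computational indistinguishability.
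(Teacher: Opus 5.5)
Your proposal is correct and follows essentially the same route as the paper's sketch: a simulator for each corrupted party in the $(\mathcal{F}_{\mathsf{OT}},\mathcal{F}_{\mathsf{and}})$-hybrid model, split by who owns the split, that simulates the masked message $\mathbf{b}_*\oplus\mathbf{r}$ or the OT outputs $\mathbf{r}'[i]$ as uniform strings and programs the $\mathcal{F}_{\mathsf{and}}$ output with the party's left-child share, obtaining the right-child share by XOR. You handle two points more carefully than the paper's hybrid sketch, which leaves them implicit: the joint distribution with the randomized output shares, and the fact that the per-row mask $\mathbf{r}[i]$ is reused across all $mB$ OT messages.
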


\begin{proof}[Proof for Theorem~\ref{theorem:ois} (Sketch)]
\label{proof:ois}

\noindent\textbf{Case 1: $P_0$ is corrupted}:
The simulator $\mathcal{S}$ interacts with $P_0$ as follows.
\begin{enumerate}[leftmargin = *, wide=0pt]

\item If $(z_*, u_*)$ is owned by $P_0$, $\mathcal{S}$ receives $\langle \mathbf{b}_* \rangle_1^\mathsf{B}$ from $P_0$ (Line~\ref{alg:ois_naive_cpsi:L3}).
Then $\mathcal{S}$ computes $\langle\mathbf{b}_*\rangle_0^\mathsf{B} = \langle\mathbf{b}_*\rangle_1^\mathsf{B}\oplus \widetilde{\mathbf{M}}_0[z_*B + u_*]$;
else $\mathcal{S}$ plays the sender of $\mathcal{F}_{\mathsf{OT}}$ with input of the XOR of randomly chosen $\mathbf{r}$($=\langle\mathbf{b}_*\rangle_0^\mathsf{B}$) and $\langle\widetilde{\mathbf{M}}_1\rangle^\mathsf{B}_0$ as in Line~\ref{alg:ois_naive_cpsi:L7}.
\item $\mathcal{S}$ plays the role of $\mathcal{F}_{\mathsf{and}}$ with inputs of $\langle \mathbf{b}_* \rangle_0^\mathsf{B}$, 
$\langle \mathbf{b} \rangle_0^\mathsf{B}$ to obtain 
$\langle \mathbf{b}^L \rangle_0^\mathsf{B}$, then sets $\langle \mathbf{b}^R \rangle_0^\mathsf{B} = \langle \mathbf{b}^L \rangle_0^\mathsf{B} \oplus \langle \mathbf{b} \rangle_0^\mathsf{B}$ as Line~\ref{alg:ois_naive_cpsi:L10}.
\end{enumerate}
To prove the simulation is indistinguishable from the real protocol, we consider the following hybrids.
\\
$\mathbf{Hybrid}_0$.
The same as the real protocol.
\\
$\mathbf{Hybrid}_1$.
Lines~\ref{alg:ois_naive_cpsi:L7}--\ref{alg:ois_naive_cpsi:L8} in Protocol $\Pi_{\mathsf{OIS}}$ use replaced $\mathcal{F}_{\mathsf{OT}}$.
\\
$\mathbf{Hybrid}_2$.
Line~\ref{alg:ois_naive_cpsi:L10} in Protocol $\Pi_{\mathsf{OIS}}$ uses replaced $\mathcal{F}_{\mathsf{and}}$.
\\
$\mathbf{Hybrid}_3$.
The same as the execution of $\mathcal{S}$ above.
\\
Note that $\mathbf{Hybrid}_1 \approx_c \mathbf{Hybrid}_0$ as $\mathcal{F}_{\mathsf{OT}}$ is secure, and $\mathbf{Hybrid}_2 \approx_c \mathbf{Hybrid}_1$ as $\mathcal{F}_{\mathsf{and}}$ is secure.
$\mathbf{Hybrid}_3$ is identical to $\mathbf{Hybrid}_2$.
Therefore, $\mathcal{S} \approx_c \mathcal{V}^{\mathbf{Hybrid}_0}_0$.

\noindent \textbf{Case 2: $P_1$ is corrupted}:
The simulator $\mathcal{S}$ plays the role $P_0$, and the interaction with $P_1$ is similar to the above.

In short, in both cases, the $\mathcal{F}_{\mathsf{OT}}$, $\mathcal{F}_{\mathsf{and}}$-hybrid $\Pi_{\mathsf{OIS}}$ protocol is secure in the semi-honest model.
\end{proof}

\begin{proof}[Proof for Theorem~\ref{theorem:bois} (Sketch)]


\textbf{Case 1: The sender $P_c$ is corrupted}:
The simulator $\mathcal{S}$ interacts with $P_c$ as follows.

\begin{enumerate}[leftmargin = *, wide=0pt]
	\item $\forall k \in \mathbf{k}$, $\mathcal{S}$ receives $\langle \mathbf{b}_*^{(k, c)} \rangle_{1 - c}^\mathsf{B}$ from $P_c$ and computes $\mathbf{r}^c$ with $(z_*^{(k)}, u_*^{(k)})$ and $\widetilde{\mathbf{M}}_c$.
	Then $\mathcal{S}$ sets $\langle \mathbf{b}_*^{(k, c)} \rangle_c^\mathsf{B} = \mathbf{r}^c$ and 
	plays the role of $\mathcal{F}_{\mathsf{and}}$ with $\langle \mathbf{b}_*^{(k, c)} \rangle_{c}^\mathsf{B}$ and $\langle \mathbf{b}^{(k, c)} \rangle_{c}^\mathsf{B}$.
	After obtaining $\langle \mathbf{b}^{(2k, c)} \rangle_{c}^\mathsf{B}$, $\mathcal{S}$ computes $\langle \mathbf{b}^{(2k + 1, c)} \rangle_{c}^\mathsf{B}$ as Lines~\ref{alg:bois:L2}--\ref{alg:bois:L5}.
	\item $\mathcal{S}$ constructs $\{\mathbf{T}\}$ and $\langle \mathbf{b}^{k, 1 - c}_* \rangle_{c}^\mathsf{B}$ ($\forall k\in\mathbf{k}$) as Lines~\ref{alg:bois:L6}--\ref{alg:bois:L11}.
	Then $\mathcal{S}$ plays the role of $\mathcal{F}_{\mathsf{OPPRF}}$ with $\{\mathbf{T}\}$.
	\item $\forall k \in \mathbf{k}$, $\mathcal{S}$ plays the role of $\mathcal{F}_{\mathsf{and}}$ with inputs $\langle \mathbf{b}^{(k, 1 - c)}_* \rangle^\mathsf{B}_l$ and $\langle \mathbf{b}^{(k, 1 - c)} \rangle^\mathsf{B}_l$, and obtain $\langle \mathbf{b}^{2k + 1, 1 - c} \rangle_{c}^\mathsf{B}$ as Line~\ref{alg:bois:L15}.
	Finally, $\mathcal{S}$ computes $\langle \mathbf{b}^{2k + 1, 1 - c} \rangle_{c}^\mathsf{B}$ as Line~\ref{alg:bois:L16}.
\end{enumerate}

For indistinguishable simulation, we consider:
\\
$\mathbf{Hybrid}_0$.
The same as the real protocol.
\\
$\mathbf{Hybrid}_1$.
Line~\ref{alg:bois:L5} in $\varPi_{\mathsf{BOIS}}$ uses replaced $\mathcal{F}_{\mathsf{and}}$.
\\
$\mathbf{Hybrid}_2$.
Line~\ref{alg:bois:L12} uses replaced $\mathcal{F}_{\mathsf{OPPRF}}$.
\\
$\mathbf{Hybrid}_3$.
Lines~\ref{alg:bois:L15}--\ref{alg:bois:L16} use replaced $\mathcal{F}_{\mathsf{and}}$.
\\
$\mathbf{Hybrid}_4$.
The same as the execution of $\mathcal{S}$ above.
\\
Note that $\mathbf{Hybrid}_1 \approx_c \mathbf{Hybrid}_0$ as $\mathcal{F}_{\mathsf{and}}$ is secure;
$\mathbf{Hybrid}_2 \approx_c \mathbf{Hybrid}_1$ as $\mathcal{F}_{\mathsf{OPPRF}}$ is secure; $\mathbf{Hybrid}_3 \approx_c \mathbf{Hybrid}_2$ as $\mathcal{F}_{\mathsf{and}}$ is secure; 
$\mathbf{Hybrid}_4$ is identical to $\mathbf{Hybrid}_3$.
Therefore, $\mathcal{S} \approx_c \mathcal{V}_c^{\mathbf{Hybrid}_0}$.

\textbf{Case 2: The receiver $P_{1 - c}$ is corrupted}:
$\mathcal{S}$ plays the role of $P_c$ and interacts with $P_{1 - c}$ like Case~1 except:
\begin{enumerate}[leftmargin = *, wide=0pt]
	\item $\mathcal{S}$ ignores the process of Lines~\ref{alg:bois:L2}--\ref{alg:bois:L4}, and chooses randomly $\langle \mathbf{b}_*^{(k, c)} \rangle_{1 - c}^\mathsf{B}$ ($\forall k\in\mathbf{k}$) and sends to $P_{1 - c}$.
	\item $\mathcal{S}$ ignores the process of Lines~\ref{alg:bois:L6}--\ref{alg:bois:L11}, and plays the role as $\mathcal{F}_{\mathsf{OPPRF}}$ with input $\mathbf{T}_{\mathsf{ch}}^{1 - c}$ to obtain $\widetilde{\mathbf{b}}_*^{1 - c}$.
	Then $\mathcal{S}$ extracts $\langle \mathbf{b}_*^{(k, 1 - c)} \rangle_{1 - c}^\mathsf{B}$ $\forall k\in\mathbf{k}$ as Line~\ref{alg:bois:L14}.
\end{enumerate}

In short, in both cases, the $\mathcal{F}_{\mathsf{and}}$, $\mathcal{F}_{\mathsf{OPPRF}}$-hybrid $\varPi_{\mathsf{BOIS}}$ protocol is secure in the semi-honest model.
\end{proof}

\begin{proof}[Proof for Theorem~\ref{theorem:advaned_sys} (Sketch)]

\textbf{Case 1: $P_0$ is corrupted}:
The simulator $\mathcal{S}$ interacts with $P_0$ as follows.
\begin{enumerate}[leftmargin = *, wide=0pt]
	\item $\mathcal{S}$ plays the sender and receiver of two $\mathcal{F}_{\mathsf{CPSI}}$ with $\{\mathbf{ID}_0,\mathbf{y}\}$, respectively, as Lines~\ref{alg:final_sys:L2}--\ref{alg:final_sys:L3}.
	Then $\mathcal{S}$ obtains $\langle \mathbf{b}^{(1, 0)} \rangle_0^\mathsf{B}$, $\langle \mathbf{y}^0 \rangle_0^\mathsf{A}$, $\langle \mathbf{b}^{(1, 1)} \rangle_0^\mathsf{B}$ and $\langle \mathbf{y}^1 \rangle_0^\mathsf{A}$.
	\item For tree $t \in \mathcal{T}_t$, $\mathcal{S}$ plays the role of $\mathcal{F}_{\mathsf{sigmoid}}$, $\mathcal{F}_{\mathsf{mul}}$, and $\mathcal{F}_{\mathsf{mux}}$ with input $\langle \mathbf{b}^{(1,i)} \rangle_0^\mathsf{B}$ and $\langle \mathbf{y}^i \rangle_0^\mathsf{A}$ to derive the gradients $\langle \mathbf{g}^{(1,i)} \rangle_0^\mathsf{A}$, $\langle \mathbf{h}^{(1,i)} \rangle_0^\mathsf{A}$, for $i \in \{0, 1\}$ as Lines~\ref{alg:final_sys:L5}--\ref{alg:final_sys:L6}.
	\item $\forall$ internal node $k$ of tree $t$, if $k$ is a root/left node, $\mathcal{S}$ plays the role of $\mathcal{F}_{\mathsf{BinMatVec}}$ alternately to obtain $\langle \tilde{\mathbf{g}}^{(k,i)} \rangle_0^\mathsf{A} || \langle \tilde{\mathbf{h}}^{(k,i)} \rangle_0^\mathsf{A}$ for $i \in \{0, 1\}$, which are concatenated as Line~\ref{alg:final_sys:L13};
	else $\mathcal{S}$ calculates $\langle \widetilde{\mathbf{g}}^{(k)} \rangle_0^\mathsf{A}$ and $\langle \widetilde{\mathbf{h}}^{(k)} \rangle_0^\mathsf{A}$ as Line~\ref{alg:final_sys:L14}.
	\item $\forall$ internal node $k$ of tree $t$, $\mathcal{S}$ plays the role of $\mathcal{F}_{\mathsf{mul}}$ and $\mathcal{F}_{\mathsf{div}}$ with $\langle \tilde{\mathbf{g}}^{(k)} \rangle_0^\mathsf{A}$, $\langle \tilde{\mathbf{h}}^{(k)} \rangle_0^\mathsf{A}$ to obtain $\langle \mathbf{L}_{\mathsf{sp}} \rangle_0^\mathsf{A}$ as Line~\ref{alg:final_sys:L15}.
	\item $\forall$ internal node $k$ of tree $t$, $\mathcal{S}$ plays the role of $\mathcal{F}_{\mathsf{argmax}}$ with input $\langle \mathbf{L}_{\mathsf{sp}} \rangle_0^\mathsf{A}$ to obtain $\langle z_*^{(k)} \rangle^\mathsf{A}_0$ and $\langle u_*^{(k)} \rangle^\mathsf{A}_0$ as Line~\ref{alg:final_sys:L16}.
	\item $\forall$ internal node $k$ of tree $t$, $\mathcal{S}$ plays the role of $\mathcal{F}_{\mathsf{greater}}$ with inputs $\langle z_*^{(k)} \rangle_0^\mathsf{A}$ to obtain $\langle c \rangle_0^\mathsf{B}$, and then open $c$ to recovery $z_*^{(k)}$ and $u_*^{(k)}$ to $P_c$, which is exactly equal to the programmed outputs in real execution when $c = 0$.
	Finally, $\mathcal{S}$ appends $k$ into $\mathbf{k}_c$ as Line~\ref{alg:final_sys:L17}.
	\item $\forall$ level $d$ in tree $t$, $\mathcal{S}$ plays the role of $\mathcal{F}_{\mathsf{BOIS}}$ to obtain $\{\langle \mathbf{b}^{(2k, i)} \rangle_0^\mathsf{B}, \langle \mathbf{b}^{(2k + 1,i)} \rangle_0^\mathsf{B}\}_{k \in \mathbf{k}_i, i \in \{0, 1\}}$ as Lines~\ref{alg:final_sys:L18}--\ref{alg:final_sys:L19}.
	\item $\mathcal{S}$ plays the role of $\mathcal{F}_{\mathsf{mux}}$ with the inputs $\langle \mathbf{b}^{(2k, i)} \rangle_0^\mathsf{B}$, $\langle \mathbf{g}^{(k,i)} \rangle_0^\mathsf{A}$ and $\langle \mathbf{h}^{(k,i)} \rangle_0^\mathsf{A}$ to update the gradients of left child nodes $\langle \mathbf{g}^{(2k, i)} \rangle_0^\mathsf{A}$, $\langle \mathbf{h}^{(2k, i)} \rangle_0^\mathsf{A}$.
	Then update the gradients of right child nodes as Line~\ref{alg:final_sys:L21}.
	\item For tree $t$, $\mathcal{S}$ plays the role of $\mathcal{F}_{\mathsf{div}}$ with inputs $\langle \mathbf{g}^{(k, 0)} \rangle_0^\mathsf{A}, \langle \mathbf{h}^{(k, 0)} \rangle_0^\mathsf{A}$ to obtain the weight $\langle \mathbf{w} \rangle_0^\mathsf{A}$ as Line~\ref{alg:final_sys:L22}.
	\item For tree $t$, $\mathcal{S}$ plays the role of $\mathcal{F}_{\mathsf{mux}}$ with inputs $\langle \mathbf{b}^{(k, i)} \rangle_0^\mathsf{B}$, $\langle \mathbf{w}[k] \rangle_0^\mathsf{A}$ to update $\langle \tilde{\mathbf{y}}^i \rangle_0^\mathsf{A}$ ($i \in \{0, 1\}$) as Line~\ref{alg:final_sys:L23}.
\end{enumerate}

To prove the simulation is indistinguishable from the real protocol,
we consider the following hybrids.\\
$\mathbf{Hybrid}_0$.
The same as the real protocol.
\\
$\mathbf{Hybrid}_1$.
Lines~\ref{alg:final_sys:L2}--\ref{alg:final_sys:L3} ($\varPi_{\mathsf{GBDT}}^{\mathsf{OTSA}}$) use replaced $\mathcal{F}_{\mathsf{CPSI}}$.
\\
$\mathbf{Hybrid}_2$.
Lines~\ref{alg:final_sys:L5}--\ref{alg:final_sys:L6} use replaced $\mathcal{F}_{\mathsf{sigmoid}}$, $\mathcal{F}_{\mathsf{mul}}$, $\mathcal{F}_{\mathsf{mux}}$.
\\
$\mathbf{Hybrid}_3$.
Lines~\ref{alg:final_sys:L11}--\ref{alg:final_sys:L12} use replaced $\mathcal{F}_{\mathsf{BinMatVec}}$.
\\
$\mathbf{Hybrid}_4$.
Line~\ref{alg:final_sys:L15} use replaced $\mathcal{F}_{\mathsf{mul}}$ and $\mathcal{F}_{\mathsf{div}}$.
\\
$\mathbf{Hybrid}_5$.
Line~\ref{alg:final_sys:L16} uses replaced $\mathcal{F}_{\mathsf{argmax}}$.\\
$\mathbf{Hybrid}_6$.
Line~\ref{alg:final_sys:L17} uses replaced $\mathcal{F}_{\mathsf{greater}}$.
\\
$\mathbf{Hybrid}_7$.
Lines~\ref{alg:final_sys:L18}--\ref{alg:final_sys:L19} use replaced $\mathcal{F}_{\mathsf{BOIS}}$.
\\
$\mathbf{Hybrid}_8$.
Line~\ref{alg:final_sys:L20} uses replaced $\mathcal{F}_{\mathsf{mux}}$.
\\
$\mathbf{Hybrid}_9$.
Line~\ref{alg:final_sys:L22} uses replaced $\mathcal{F}_{\mathsf{div}}$.
\\
$\mathbf{Hybrid}_{10}$.
Line~\ref{alg:final_sys:L23} uses replaced $\mathcal{F}_{\mathsf{mux}}$.
\\
$\mathbf{Hybrid}_{11}$.
The same as the execution of $\mathcal{S}$ above.

Note that $\mathbf{Hybrid}_1 \approx_c \mathbf{Hybrid}_0$ as $\mathcal{F}_{\mathsf{CPSI}}$ is secure; $\mathbf{Hybrid}_2 \approx_c \mathbf{Hybrid}_1$ as $\mathcal{F}_{\mathsf{sigmoid}}$ and $\mathcal{F}_{\mathsf{mul}}$ are secure; $\mathbf{Hybrid}_3 \approx_c \mathbf{Hybrid}_2$ as $\mathcal{F}_{\mathsf{BinMatVec}}$ is secure based on the semantic security of RLWE ciphertexts;
$\mathbf{Hybrid}_4 \approx_c \mathbf{Hybrid}_3$ as $\mathcal{F}_{\mathsf{mul}}$ and $\mathcal{F}_{\mathsf{div}}$ is secure; $\mathbf{Hybrid}_5 \approx_c \mathbf{Hybrid}_4$ as $\mathcal{F}_{\mathsf{argmax}}$ is secure; $\mathbf{Hybrid}_6 \approx_c \mathbf{Hybrid}_5$ as $\mathcal{F}_{\mathsf{greater}}$ is secure, and the recovered $z_*^{(k)}$ and $u_*^{(k)}$ are programmed into ideal functionality as the identical output between real execution and simulation; $\mathbf{Hybrid}_7 \approx_c \mathbf{Hybrid}_6$ as $\mathcal{F}_{\mathsf{BOIS}}$ is secure; $\mathbf{Hybrid}_8 \approx_c \mathbf{Hybrid}_7$ as $\mathcal{F}_{\mathsf{mux}}$ is secure; $\mathbf{Hybrid}_9 \approx_c \mathbf{Hybrid}_8$ as $\mathcal{F}_{\mathsf{div}}$ is secure; $\mathbf{Hybrid}_{10} \approx_c \mathbf{Hybrid}_9$ as $\mathcal{F}_{\mathsf{mux}}$ is secure.
$\mathbf{Hybrid}_{11}$ is identical to $\mathbf{Hybrid}_{10}$.
Thus, $\mathcal{S}_0 \approx_c \mathcal{V}_0^{\mathbf{Hybrid}_0}$.

\textbf{Case 2: $P_1$ is corrupted}:
The simulator plays the role of $P_0$ and the interaction with $P_1$ is similar for the symmetric tree building of $\Pi_{\mathsf{GBDT}}^{\mathsf{OTSA}}$, except for the simulation of Lines~\ref{alg:final_sys:L2}--\ref{alg:final_sys:L3} in $\Pi_{\mathsf{GBDT}}^{\mathsf{OTSA}}$ as input without labels, which can be simulated as $\mathcal{F}_{\mathsf{CPSI}}$ is secure.

In both cases, under the 
$(\mathcal{F}_{\mathsf{argmax}}$-, 
$\mathcal{F}_{\mathsf{BOIS}}$-, 
\mbox{$\mathcal{F}_{\mathsf{BinMatVec}}$-,}
$\mathcal{F}_{\mathsf{CPSI}}$-, 
$\mathcal{F}_{\mathsf{div}}$-, 
$\mathcal{F}_{\mathsf{greater}}$-,
$\mathcal{F}_{\mathsf{mul}}$-, 
$\mathcal{F}_{\mathsf{mux}}$-,
$\mathcal{F}_{\mathsf{sigmoid}})$-hybrid, and with semantically-secure RLWE-based encryption,
$\Pi_{\mathsf{GBDT}}^{\mathsf{OTSA}}$ is secure in the semi-honest model.
\end{proof}

\newpage 


\section{Meta-Review}

The following meta-review was prepared by the program committee for the 2026
IEEE Symposium on Security and Privacy (S\&P) as part of the review process as
detailed in the call for papers.

\subsection{Summary}
This paper studies efficient two-party computation for the task of gradient boosting decision tree training and inference. The paper assumes that the data is vertically distributed and thus alignment is required. The most interesting optimization is how the paper optimizes circuit-PSI using homomorphic encryption to reduce the cost of alignment when the payload can be large.

\subsection{Scientific Contributions}
\begin{itemize}
\item Provides a Valuable Step Forward in an Established Field
\end{itemize}

\subsection{Reasons for Acceptance}
The paper makes a valuable step towards practical two-party GBDT by using homomorphic encryption to reduce the communication complexity in PSI. This leads to significant savings in their experience.

\end{document}